\documentclass[aos]{imsart}
\RequirePackage{amsthm,amsfonts,amssymb}
\RequirePackage[numbers,sort]{natbib}
\RequirePackage[colorlinks,citecolor=blue,urlcolor=blue]{hyperref}
\RequirePackage{graphicx}

\usepackage[reqno]{amsmath}
\usepackage{enumerate}
\usepackage{dsfont}
\usepackage[usenames,dvipsnames]{color}
\usepackage{mathrsfs}
\usepackage{placeins}
\usepackage{graphicx}
\usepackage{tikz}
\usetikzlibrary{arrows}
\usepackage{url}
\usepackage{comment}
\usepackage{multicol}
\usepackage{cleveref}
\usepackage{stmaryrd}
\usepackage{soul}
\usepackage{natbib}
\usepackage{algorithm}
\usepackage{algorithmic}

\renewcommand{\thetable}{\arabic{section}.\arabic{table}}
\numberwithin{equation}{section}
\theoremstyle{plain} 
\newtheorem{theorem}{Theorem}[section]
\newtheorem{lemma}[theorem]{Lemma}
\newtheorem{proposition}[theorem]{Proposition}
\newtheorem{corollary}[theorem]{Corollary}

\theoremstyle{definition}
\newtheorem{definition}[theorem]{Definition}

\newtheorem{example}[theorem]{Example}

\newtheorem{remark}{Remark}

\newcommand{\E}{\mathbb{E}}

\renewcommand\P{\mathbb{P}}
\newcommand{\R}{\mathbb{R}}
\newcommand{\bG}{\mathbb{G}}

\def\bI{\mathbb I}

\def\bx{\boldsymbol x}

\def\Var{\text{Var}}

\newcommand{\stp}{\stackrel{\mathbb{P}}{\rightarrow}}

\newcommand{\ov}{\overline}

\newcommand{\vague}{\stackrel{\lower0.2ex\hbox{$\scriptscriptstyle
                    \it{v} $}}{\rightarrow}}
\newcommand{\weak}{\stackrel{\lower0.2ex\hbox{$\scriptscriptstyle
                    \it{w} $}}{\rightarrow}}
\newcommand{\what}{\stackrel{\lower0.2ex\hbox{$\scriptscriptstyle
                    \it{\hat{w}} $}}{\rightarrow}}
\newcommand{\eqdis}{\stackrel{\lower0.2ex\hbox{$\scriptscriptstyle
                    \mathrm{d}$}}{=}}
\newcommand{\distr}{\stackrel{\lower0.2ex\hbox{$\scriptscriptstyle
                    \it{d} $}}{\rightarrow}}

\usepackage{mathtools}
\DeclarePairedDelimiter\ceil{\lceil}{\rceil}

\allowdisplaybreaks 
\definecolor{darkgreen}{RGB}{0,139,0}

\arxiv{math.PR/0000000}

\begin{document}

\begin{frontmatter}
\title{Measuring Extreme Tail Association}
\runtitle{Extreme tail association}

\begin{aug}
  \author{\fnms{Bikramjit} \snm{Das}\ead[label=e1]{bikram@sutd.edu.sg}\orcid{0000-0002-6172-8228}}
    \and
   \author{\fnms{Xiangyu} \snm{Liu}\ead[label=e2]%
   {xliu1932@usc.edu}}
 \address{Engineering Systems and Design, Singapore University of Technology and Design \printead[presep={,\ }]{e1}}
\address{Viterbi School of Engineering, University of Southern California \printead[presep={,\ }]{e2}}


  \runauthor{B. Das and X. Liu}
\end{aug}

\begin{abstract}

Simultaneous occurrences of extreme events need not imply symmetric or reciprocal tail dependence. However, most existing measures of extremal dependence are inherently symmetric and hence often fail to capture directional influence in tail association. We introduce a rank‑based measure of Extreme Tail Association (ETA) for bivariate data quantifying such directional influence of one variable on another in extreme tail regions. The proposed estimator is easily computable, consistent with its population counterpart, and asymptotically normal under mild conditions, allowing for statistical inference. We further develop a formal test for asymmetry in tail association based on a multiplier bootstrap procedure. The practical relevance of the methodology is illustrated using data on extreme price movements in major cryptocurrencies. Beyond providing a flexible tool for extremal association, the proposed framework offers a substantive argument for investigating causal relationships in extreme scenarios.

\end{abstract}

\begin{keyword}[class=AMS]
\kwd[Primary ]{62G32} 
\kwd[; secondary ]{62G10} 
\kwd{60F05} 
\kwd{62G30} 
\end{keyword}

\begin{keyword}
\kwd{Asymmetric distributions}
\kwd{Extremal dependence}
\kwd{Tail association}
\kwd{Testing tail asymmetry}
\end{keyword}

\end{frontmatter}

\section{Introduction}\label{sec:intro}

In extreme value analysis, one key concern has been the proper attribution of  directional association between variables that exhibit extreme values.  For example, extreme weather events such as heat waves, floods, and droughts occur at the same time across different geographic regions; extreme fluctuations in the values of multiple cryptocurrencies have been observed to occur together. Despite occurring simultaneously, there is no reason to believe that the extremal influence between these events is reciprocal in nature; in other words, it is expected that one might have a higher influence on the behavior of the other than the reverse.

An asymmetry in the influence between risk factors for their extreme values has been empirically observed in various domains; for example, \citet{aloui_etal:2011,jondeau:2016} observes an asymmetry in the behavior of extreme movement of US equity portfolios, \citet{kim_etal:2021,gong:huser:2022} have investigated this for high fluctuations in cryptocurrencies,  \citet{deidda_etal:2023,gnecco_etal:2021} have reported on it in a hydrological context for upstream and downstream river flow data. Naturally, certain measures of dependence have also been proposed, see \cite{deidda_etal:2023,mhalla_etal:2020,gnecco_etal:2021, bodik_etal:2023}, mostly in the context of various applications ranging from finance to river networks; yet, few address the asymptotic behavior of such estimates leading to statistical inference on directional tail dependence. 

Traditionally, various measures have also been proposed to capture extremal tail dependence, for example, the  tail dependence coefficient  \cite{sibuya:1960}, or the extremogram \cite{davis:mikosch:2009} or the extremal dependence measure \cite{resnick:2004} have been effective in both quantifying and detecting dependence in the tails; see \cite{resnick:2007, mcneil:frey:embrechts:2005} for further details. Nevertheless, these measures may require further structural assumptions on the data in one or more of the following aspects: (i) they may require the marginal variables to have certain finite moments, (ii) they often depend only on a limit association across a prime diagonal (elaborated later), and  (iii) they are more likely to be symmetric, i.e., exchanging the variables does not change the value of the measure. In this paper, we propose a rank-based measure of tail association, which we call the \emph{Extreme Tail Association} (ETA) coefficient,  which not only overcomes these issues with the traditional measures mentioned above, but is easily computable in sample data and  can be interpreted as an asymmetric complement to correlation measures for measuring tail dependence between variables.  

\subsection{Extreme Tail Association Coefficient (sample)}\label{subsec:defETA} 
Let $(X_{1},Y_{1}), \ldots, (X_{n}, Y_{n})$ be $\R^2$-valued i.i.d.  random vectors with the same distribution as $(X,Y)\sim F$ where the marginal variables $X\sim F_X, Y\sim F_Y$.  For convenience we assume no ties between the $X_i$'s as well as between the $Y_i$'s;  this can be guaranteed almost surely by assuming that $F_X$ and $F_Y$ are continuous, which we do for this paper.

 Denote by $Y_{(1:n)} \ge \ldots \ge Y_{(n:n)}$ the decreasing order statistics of $Y_1,\ldots, Y_n$, and let $X_{[1:n]}, \ldots, X_{[n:n]}$ respectively be their corresponding \emph{concomitant} $X$-variables, meaning that if $Y_{(i:n)}=Y_j$, then $X_{[i:n]}= X_j$. We often omit ``$:n$'' in the subscript for brevity.   Define  the (reverse) rank of the concomitant $X_{[i]}$ among $X_{[1]}, \ldots, X_{[n]}$ to be
 \begin{align}\label{def:revrank}
    \rho_i :=\sum_{j=1}^{n} \bI\left(X_{[j]} \ge X_{[i]}\right).
 \end{align}
where $\bI(A)=1$ if $A$ occurs and $\bI(A)=0$ otherwise, for any event $A$.  Now, for example, $\rho_{\ell}=1$ implies that $X_{[\ell]}$ is the maximum among all $X_i$ values (assuming no ties). 
 \begin{definition}
      Given the above notation, for a fixed integer $k$ with $1 \le k \le n$, define the (sample) upper {\it{Extreme Tail Association (ETA)}} coefficient  of $X$ on $Y$   as
\begin{align}\label{eqdef:eta_data}
\eta_{k,n} (X|Y): =  \frac{3}{k^{3}} \sum_{i=1}^{k-1} \sum_{j=1}^{k-1} (k+1-\max(\rho_{i},\rho_{j}))_{+}.
\end{align}
We often call this the ETA coefficient, or the $\eta$ (coefficient) of $X$ on $Y$, where it is understood that we are concerned with the \emph{upper} tails of both variables in a \emph{sample} context.
 \end{definition}

\subsubsection{Properties of the sample ETA Coefficient} \label{rem:ETAdef}
First, we note some features  of the (sample upper) ETA coefficient.
  \begin{enumerate}[(1)]
\item The ETA coefficient only concerns the upper or right tail of both marginal variables, hence similar quantities can be defined for any combination of lower and upper tail of each marginal distribution; the results for these other combinations of tails are very similar and are not addressed directly here for the sake of brevity.
  \item The measure $\eta_{k,n}$ depends not only on the sample size $n$, but also on the choice of the upper order statistics $k$. This is similar to many statistics used to estimate parameters related to tails or extremes of distributions, e.g., the Hill estimator \cite{hill:1975}, or the Pickands estimator \cite{pickands-75} used to estimate the tail index of a heavy-tailed distribution, or various estimators of the tail dependence coefficient \cite{huang:1992}. Thus, while analyzing such estimators for a given sample size $n$, we seek stability of the estimator in an appropriate range of values for the choice of $k$. One may also explore other choices of $k$; see, for example, \cite{clauset:shalizi:newman:2009,danielsson:dehaan:peng:devries:2001,drees:kauffman:1998} for different proposals in the context of some extreme-value estimators.

\item The value of $\eta_{k,n}(X|Y)$ lies within the range $[0,\left(1-\frac1k\right)\left(1+\frac{5}{2k}-\frac3{k^2}\right)]$. The minimum value, zero, is attained if  $\rho_{i}\ge k+1$ for all $1\le  i < k$, which implies that none of the high values of $Y$ corresponds to a high value in their concomitant $X$'s, naturally signifying that none of the extreme values of $Y$ are associated with the extreme values of $X$. Similarly, the maximum value of $\eta_{k,n} (X|Y)$ is obtained when $\rho_{i}=i$ for all $1\le i < k$, i.e., when the ranks of $X_i$ and $Y_{i}$ match up and  $X_{[i]}$, which is the concomitant  of the $i$-th largest $Y$-value, also happens to be the $i$-the largest $X$-value, and we obtain
\begin{align}\label{eq:ETARHS}
\eta_{k,n} (X|Y) 
&=   \frac{3}{k^{3}}\sum_{i=1}^{k-1} \sum_{j=1}^{k-1} (k+1-\max(i,j))_{+} = \left(1-\frac1k\right)\left(1+\frac{5}{2k} - \frac{3}{k^2}\right),
\end{align}
where the right hand side  tends to 1 as $k\to \infty$. Note that  normalization of $\eta_{k,n}$ by this quantity makes it one, but since we intend to characterize asymptotic tail behavior, we have refrained from doing this. Clearly, a value of $\eta_{k,n}(X|Y)$ close to zero signifies that the extreme values of $X$ do not appear when we observe extreme values of $Y$, i.e., a limited influence of extremes values of $X$ on extremes values of $Y$. On the other hand, a value of $\eta_{k,n}(X|Y)$ close to 1 signifies a higher degree of influence of the extreme values of $X$ on the extreme values of $Y$.

\item Since we have assumed that $X$ and $Y$ are both continuous random variables, we obtain unique orderings of the variables and hence the concomitant ranks as well, but this can be relaxed with minor adjustments to the measure $\eta_{k,n}(X|Y)$. Unfortunately, in such a case the proofs of the associated results become more involved, especially to prove asymptotic normality of $\eta_{k,n}$; we have refrained from doing this at present. 
  \item By construction, $\eta_{k,n}(X|Y)$ is not symmetric in $X$ and $Y$, i.e., $\eta_{k,n}(X|Y)$ and $\eta_{k,n}(Y|X)$  need not be equal. This means that extreme values of one variable may have a higher influence on the extreme values of the other variable even though they might appear to be occurring simultaneously; this also leads us to devising a hypothesis test to assess the asymmetry in tail dependence among variables. We surmise that this would help to attribute causal behavior in extreme tails among these variables. 
 \item Since the sample ETA coefficient is rank-based, it remains unchanged under strictly increasing transformations of the marginal variables $X$ and $Y$. We  observe that this turns out to be true for the limit population ETA measure as well.
\end{enumerate}   

The purpose of this paper is to understand this ETA coefficient $\eta_{k,n}(X|Y)$ and how it helps us assess the asymmetry in tail dependence between the two variables $X$ and $Y$. There exist a  few measures which have attempted to do this, notably \citet{deidda_etal:2023} compute an asymmetric version of Kendall's tau for assessing such an asymmetric dependence in the tails, and \citet{gnecco_etal:2021} address this by defining a causal tail coefficient particularly in the context of heavy-tailed data. There has been work on detecting asymmetry between the lower and upper tails in a financial context; see \cite{bormann_etal:2020, gong:huser:2022} for certain measures and tests proposed; our focus differs in that we restrict attention to the upper tail and aim to characterize directional extremal influence between variables. The ETA coefficient defined in this paper has the advantage of being universally applicable with limited assumptions on the marginal distribution, and in the rest of the paper we show that $\eta_{k,n}$ admits nice asymptotic behavior under mild conditions, making it amenable to estimation, confidence interval computation, and hypotheses testing.

The remainder of the paper is organized as follows. In Section \ref{sec:etatailcop}, the main results are given, including the identification of the population ETA measure and  the proof of the consistency of the coefficient $\eta_{k,n}(X|Y)$. We also propose a tail asymmetry measure between two variables $X,Y$ given by $\Delta_{k,n}(X,Y)= \eta_{k,n}(X|Y) - \eta_{k,n}(Y|X)$ and discuss its properties. The population measure $\eta(X|Y)$ and the tail asymmetry measure $\Delta(X,Y)$ (the population counterpart of $\Delta_{k,n}(X,Y)$) are also shown to be  functionals of the so-called {tail copula} (see \eqref{eq:tailcopup}) which has been quite useful in understanding dependence in the tails. In Section \ref{sec:asybeh}, we  prove the asymptotic normality of the measures $\eta_{k,n}(X|Y)$  and $\Delta_{k,n}(X,Y)$. This helps us to find confidence bounds for $\eta_{k,n}(X|Y)$  and, moreover, provides us with tools for testing tail symmetry in bivariate data. In Section \ref{sec:testasy}, we devise a hypothesis test to assess whether bivariate data exhibit an asymmetric tail dependence behavior with an obvious choice of test statistic given by $\Delta_{k,n}(X,Y)$. Although theoretically justified, finding confidence intervals in real data for such statistics is challenging since the population tail dependence behavior remains unknown (both for null and alternative hypotheses), hence we resort to a method of multiplier bootstrap to obtain appropriate confidence bounds. The method and limit results to do so for both the ETA coefficient  $\eta_{k,n}$  and the test statistic $\Delta_{k,n}$,  are discussed in Section \ref{sec:testasy}. We exhibit our results using data from cryptocurrency movements during the period 2021-2025 in Section \ref{sec:data}; a clear observation here is that the extreme behavior of  positive returns for Bitcoin (BTC) often influences the extreme positive returns of a few other highly traded cryptocurrencies more than the other way around. Section \ref{sec:conclusion} contains some concluding remarks with indications of future directions of work. Proofs of results and additional plots from data analysis are provided in the Appendix.

\section{Measuring association and asymmetry for extreme values}\label{sec:etatailcop}
We intend to understand and assess the measure $\eta_{k,n}$ that has been proposed as a measure of tail association. In this section, we  identify the population measure $\eta(X|Y)$ that the sample measure $\eta_{k,n}(X|Y)$ estimates. We observe that the measure $\eta(X|Y)$ is quite different from $\chi(X,Y)$, the popular measure of tail dependence (cf. \eqref{def:taildep}) and we discuss the relationship between the two. In the process, we also discuss a measure of tail asymmetry $\Delta(X,Y)$ with its sample counterpart $\Delta_{k,n} (X,Y)$.


\subsection{Extreme Tail Association coefficient}\label{subsec:popETA}
The population ETA coefficient defined here, unlike the sample version, is not subject to the assumption of continuity of the marginal variables.
\begin{definition}\label{def:eta_pop}
    Let $(X,Y)\in \R^2$ be a random vector with $X\sim F_X$ and $Y\sim F_Y$.  The (population) \emph{Extreme Tail Association (ETA or $\eta$)} measure of $X$ on $Y$ is given  by
\begin{align}\label{eqdef:eta_pop}
    \eta(X|Y) = \lim_{t\uparrow 1} \eta_t(X|Y) :=  \lim_{t\uparrow 1} \frac{\int_{t}^1\Var(\E(\bI\left(F_{X}(X)>u\right)|\bI\left(F_{Y}(Y)>t\right))) \, \mathrm du}{\int_{t}^1\Var(\E(\bI\left(F_{Y}(Y)>u\right)|\bI\left(F_{Y}(Y)>t\right))) \, \mathrm du}
\end{align}
if it exists.  
\end{definition}
   The limit measure $\eta(X|Y)$ quantifies the co-variability between $X$ and $Y$ in the upper  tail region, specifically when $Y$ is above a specific threshold which eventually increases to its right end point. 
The measure reflects the co-variability in the occurrence of extreme $X$-values when $Y$ exceeds this threshold, thus capturing the influence of increasingly extreme values of $X$ on extreme values of $Y$. 
   In this measure, we evaluate conditional probabilities of the values being above the threshold rather than the conditional expected values; hence this also ensures that the variances within the integrand of both numerator and denominator are finite and, in fact, less than 1, and thus the integrals are eventually finite as well. Thus, we also do not particularly require the existence of a second moment for any of the variables for $\eta(X|Y)$ to exist, which makes it practically useful, especially if the marginal tails are heavy.
\subsubsection{Properties of the population ETA coefficient}
 Proposing a new coefficient to measure the dependence and association for extremes and tails requires some justification. We start by discussing some properties and characteristics of this coefficient.
    \begin{enumerate}[(1)]
        \item First note that $\eta(X|Y)$ exists if the denominator of $\eta_t(X|Y)$ as defined in \eqref{eqdef:eta_pop} is eventually positive (even though it tends to 0). This happens, for example, if $Y$ is continuous near its right tail; in such a case the denominator can be evaluated exactly for $t$ close to 1:
        \begin{align}\label{eq:DenDeltat}
         \int_t^1 \Var(\E(\bI(F_{Y}(Y)>u)|\bI(F_{Y}(Y)>t))) \, \mathrm du = \frac{t(1-t)^2}{3}.
         \end{align}
        We can verify that $0\le \eta(X|Y)\le 1$ when $\eta(X|Y)$ exists, cf. \Cref{prop:eta_is_intutc}.
          \item Clearly $\eta(X|Y)$ is not symmetric by definition. If both $X$ and $Y$ are jointly absolutely continuous random variables, the denominators in the definition of $\eta_t(X|Y)$ and $\eta_t(Y|X)$ are the same, but the numerators may be different. In Section \ref{subsec:tailcopula}, we illustrate this with a few examples.
          \item The measure $\eta(X|Y)$ can be considered as a tail counterpart of the asymmetric correlation measure defined by \citet{chatterjee:2021} which is given by
          \[\xi(Y,X) :=   \frac{\int_{0}^1\Var({\E\left(\bI\left(X>s\right)|Y\right)}) \, \mathrm ds}{\int_{0}^1\Var\left(\bI\left(Y>s\right)\right) \, \mathrm ds}. \]
         The measure $\xi(Y,X)$ is also related to the regression dependence coefficient introduced in \citep{dette_etal:2013}.  Assuming that $X\sim F_X$ and $Y\sim F_Y$ are marginally continuous, we can write 
           \[\xi(Y,X) :=   \frac{\int_{0}^1\Var({\E\left(\bI\left(F_{X}(X)>u\right)|F_Y(Y)\right)}) \, \mathrm du}{\int_{0}^1\Var({\E\left(\bI\left(F_{Y}(Y)>u\right)|F_Y(Y)\right)}) \, \mathrm du}, \]
           which highlights its similarity to $\eta(X|Y)$ defined in \eqref{eqdef:eta_pop}.
        Although similar in flavor, the $\eta(X|Y)$ measure specifically involves the joint tail behavior of the variables and is in fact a limit quantity; thus methods of estimation and analysis are quite different.
       
        \item The measure $\eta(X|Y)$ can also be computed in terms of the stable tail dependence function \cite{beirlant:goegebeur:teugels:segers:2004}, or, upper tail copula \cite{schmidt:stadtmuller:2006}, see \Cref{subsec:tailcopula}. Thus there is a natural relationship between $\eta(X|Y)$ or $\eta(Y|X)$ and the upper tail dependence coefficient used in extreme value theory to measure dependence in the tails. We also provide brief comparisons with other measures of tail dependence in \Cref{subsec:tailcopula}.
        \item It is easy to check that if $g,h:\R\to\R$ are strictly increasing functions, then 
          $$\eta(g(X)|h(Y)) = \eta(X|Y).$$
          Thus, $\eta(X|Y)$ is invariant under monotone increasing transformations of any of the marginal variables.
    \end{enumerate}

Since we have defined ETA as a measure for asymmetric tail dependence, it is natural to define  a measure of asymmetry in the tail derived from the ETA measure.
\begin{definition}\label{def:delta}
    Let $(X,Y)$ be an $\R^2$-valued random vector with $(X,Y)\sim F$.  We define a \emph{measure of tail asymmetry}  given by
\begin{align}\label{eqdef:delta}
    \Delta(X,Y) :=\eta (X|Y) - \eta (Y|X),
\end{align}
where $\eta$ is as defined in \eqref{eqdef:eta_pop}. Clearly, if $(X_{1},Y_{1}), \ldots, (X_{n}, Y_{n})$ are $\R^2$-valued i.i.d.  random vectors with the same distribution as $(X,Y)$ then the sample measure of tail asymmetry is given by
\begin{align}\label{def:sampledelta}
    \Delta_{k,n}(X,Y) :=\eta_{k,n} (X|Y) - \eta_{k,n} (Y|X).
\end{align}
\end{definition}
The quantity $\Delta(X,Y)$ makes sense as a measure of tail asymmetry only when the variables $X$ and $Y$ exhibit some form of tail dependence, that is, at least one of $\eta(X|Y)$ and $\eta(Y|X)$ is non-zero; note here that this differs from the classical notion of tail dependence discussed further in \Cref{rem:taildep}. 
Hence, we may consider three scenarios for the presence of a tail association in the population setting.
\begin{enumerate}
    \item $\eta(X|Y)=\eta(Y|X)=0$ (and hence $\Delta(X,Y)=0$ as well). We may conclude this to be the absence of any dependence in the tails (asymptotically).
    \item $\eta(X|Y)=\eta(Y|X)>0$ and thus $\Delta(X,Y)=0$. Here, the population distribution has tail dependence, but this is symmetric in nature, i.e. the influence of the variables on each other is comparable.
    \item $\Delta(X,Y)=\eta (X|Y) - \eta (Y|X)\neq 0$. Here, the population distribution exhibits tail dependence (asymptotically), but is asymmetric in nature. 
    In particular, here this means that extreme values of $X$ have a relatively greater (or, respectively, smaller) influence on extreme values of $Y$ than vise versa if $\Delta(X,Y) > 0$ (or, respectively, $\Delta(X,Y) < 0$).
\end{enumerate}
The first two scenarios may occur, for example, if $X,Y$ are exchangeable random variables, but in reality, this may be more of an exception than a  norm. We use the empirical measure $\Delta_{k,n}(X,Y)$ as a statistic to test the asymmetry in tail-association in Section \ref{sec:testasy}.

\subsection{Tail Copula and the ETA measure}\label{subsec:tailcopula}

Since $\eta(X|Y)$ and hence $\Delta(X,Y)$ measure tail dependence via integrals of variance of tail probabilities, there seems to be an inherent connection to copulas  \citep{joe:1997} and tail copulas \citep{schmidt:stadtmuller:2006}. We make this clear  and observe that this also allows us to compute  $\eta(X|Y)$ in many examples. Eventually, the proofs of many of our results are facilitated using this representation of $\eta(X|Y)$ through tail copulas. Moreover, it becomes vastly useful both for estimation of confidence intervals and hypothesis testing.

We begin by discussing some relevant concepts and notation related to copulas, tail copulas, and their empirical estimation, see   \cite{joe:1997,schmidt:stadtmuller:2006} for details. For a bivariate  random vector $(X,Y) \sim F$ with  marginal distributions $F_X, F_Y$, the copula $C:[0,1]^2 \to [0,1]$ and the survival copula $\overline{C}:[0,1]^2 \to [0,1]$ are {functions} such that for any $(x,y)\in \R^2$, we have
\begin{align*}
F(x,y) & =\P(X \le x, Y \le y) =C(F_X(x),  F_Y(y)),\\
\overline{F}(x,y) & =\P(X >x, Y>y) =\overline{C}(\overline{F}_X(x), \overline{F}_Y(y)),
\end{align*}
 where for any univariate distribution function $F_X$ we define $\overline{F}_X(x)=1-F(x), x\in \R$. Clearly, the copula $C$ and survival copula $\overline{C}$ for the same distribution $F$ are related and for $(u,v)\in[0,1]^2$ we have the following.
\begin{align}\label{eq:cop:survcop}
\overline{C}(u,v)=u+v-1+C(1-u,1-v).
\end{align}

For measuring dependence in the upper tails, which is of interest in this paper, a quantity often used is the (upper) tail copula. Recall that the (left-continuous) generalized inverse of a distribution function $H$ and its tail/survival function $\overline{H}=1-H$ are defined by
\begin{align*}
    {H}^{\leftarrow}(u) =\begin{cases} \inf\{x\in \R: H(x)\ge u\}, & \text{ if } 0<u\le 1, \\  \sup\{x\in \R: H(x)=0 \}, & \text{ if } u=0.\end{cases}\\
    \overline{H}^{\leftarrow}(u) =\begin{cases} \sup\{x\in \R: \overline{H}(x)\ge u\}, & \text{ if } 0<u\le 1, \\  \inf\{x\in \R: \overline{H}(x)=0 \}, & \text{ if } u=0.\end{cases}
\end{align*}
Therefore, the \emph{(upper) tail copula} $\Lambda: \overline{\R}_+^2\to\R$  is given by  
\begin{align}
  \Lambda(x,y) &: = \lim_{t\to\infty} t\overline{C}(x/t, y/t) = \lim_{t\to\infty} t\overline{F}(\ov{F}_X^{\leftarrow}(x/t), \ov{F}_Y^{\leftarrow}(y/t)), \label{eq:tailcopup}
\end{align}  
provided that the limit exists where $\overline{\R}_+^2=[0,\infty]^2\setminus\{(\infty,\infty)\}$; see \citet{schmidt:stadtmuller:2006} for further details on \emph{tail copulas}  and certain properties that we will refer to throughout the paper. In a similar manner as \eqref{eq:tailcopup}, tail dependence in the left/lower tails or a combination of tails may be investigated using  similarly defined measures, the results we obtain for the upper tails can be similarly derived with equivalent asymptotics and hence are not elaborated in this paper. 
 
Interestingly, the tail copula is in one-to-one correspondence with the so-called \emph{stable tail dependence function} $\ell(x,y):=x+y-\Lambda(x,y)$ for $(x,y)\in \R_+^2$, if either of them exists; see \cite{huang:1992,drees:huang:1998} for more details on this measure. It is well-known that if $F$ is in the maximum domain of attraction of an extreme-value distribution $G$, then this function $\ell$ (and hence $\Lambda$) completely determines $G$; see \cite{einmahl_etal:2012}. 
 Here we establish a relationship between $\eta(X|Y)$ and $\Lambda(\cdot,\cdot)$, which is helpful in proving \Cref{thm:consistency}.

\begin{proposition}\label{prop:eta_is_intutc}
Suppose $(X,Y)\sim F$ with continuous marginal distributions $F_X$ and $F_Y$, respectively. If $\Lambda$ denotes the tail copula as in \eqref{eq:tailcopup} and $\eta(X|Y)$ exists, then  we have the representation 
    \begin{align}\label{eqdef:etarep}
         \eta(X|Y) = 3\int_0^1 [\Lambda(u,1)]^2\, \mathrm du.
    \end{align} 
    Furthermore, $0\le \eta(X|Y) \le 1$.
\end{proposition}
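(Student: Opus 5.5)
Write $U=F_X(X)$, $V=F_Y(Y)$; by continuity of the marginals both are uniform on $[0,1]$. The plan is to compute the numerator of $\eta_t(X|Y)$ in \eqref{eqdef:eta_pop} explicitly for fixed $t$, divide by the denominator $t(1-t)^2/3$ from \eqref{eq:DenDeltat}, and let $t\uparrow 1$ after the substitution $u=1-s(1-t)$.

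For a binary conditioning variable $B=\bI(V>t)$ with $\P(B=1)=1-t$, and an event $A=\{U>u\}$, the conditional expectation $\E(\bI_A|B)$ takes the value $p_1=\P(A,V>t)/(1-t)$ on $\{B=1\}$ and $p_0=\P(A,V\le t)/t$ on $\{B=0\}$. Its variance is $t(1-t)(p_1-p_0)^2$. Using $\P(A)=1-u$, this gives $p_1-p_0=\bigl(\P(U>u,V>t)-(1-u)(1-t)\bigr)/\bigl(t(1-t)\bigr)$. Writing $\overline{C}(a,b)=\P(U>1-a,V>1-b)$, the numerator becomes
\[
\int_t^1 \frac{\bigl(\overline{C}(1-u,1-t)-(1-u)(1-t)\bigr)^2}{t(1-t)}\,\mathrm du .
\]
Substituting $u=1-s(1-t)$ with $s\in[0,1]$ and dividing by $t(1-t)^2/3$ yields
\[
\eta_t(X|Y)=\frac{3}{t^2}\int_0^1\Bigl(\frac{\overline{C}(s(1-t),1-t)}{1-t}-s(1-t)\Bigr)^2\mathrm ds .
\]
Setting $1-t=1/\tau$, the integrand is $\bigl(\tau\overline{C}(s/\tau,1/\tau)-s/\tau\bigr)^2$. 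By \eqref{eq:tailcopup}, this converges pointwise to $\Lambda(s,1)^2$ as $\tau\to\infty$.

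The main step is justifying the exchange of limit and integral, and identifying the limit with $\eta(X|Y)$. Domination is easy: $0\le \tau\overline{C}(s/\tau,1/\tau)\le \tau\min(s,1)/\tau\le 1$, so the integrand is bounded by $4$, and dominated convergence applies once $\Lambda(s,1)$ exists for all $s\in[0,1]$. This existence is the subtle point, since the statement only assumes that $\eta(X|Y)$ exists and tacitly that $\Lambda$ exists. I would take existence of $\Lambda$ as part of the hypothesis (it is implicit in writing $\Lambda$). Then the limit of $\eta_t$ equals $3\int_0^1\Lambda(u,1)^2\,\mathrm du$, which coincides with $\eta(X|Y)$ because the latter exists. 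For the case $t\to1$ along general sequences, note that the pointwise convergence in \eqref{eq:tailcopup} holds for real $\tau\to\infty$.

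The bounds then follow immediately. Nonnegativity is clear. Since $\overline{C}(a,b)\le\min(a,b)$, we have $0\le\Lambda(u,1)\le\min(u,1)=u$ on $[0,1]$, so $\eta(X|Y)\le 3\int_0^1u^2\,\mathrm du=1$.
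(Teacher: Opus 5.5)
Your proof is correct and follows essentially the same route as the paper. Both arguments express $\eta_t(X|Y)$ explicitly through the survival copula, rescale to the integrand $\bigl(\tau\overline{C}(s/\tau,1/\tau)-s/\tau\bigr)^2$ with the prefactor $3/t^2=3(1-1/\tau)^{-2}\to 1$ (keep this factor in your display), apply bounded convergence, and bound $\Lambda(u,1)\le u$; your explicit two-point variance computation $t(1-t)(p_1-p_0)^2$ fills in the step the paper compresses into its first identity, and, like the paper, you tacitly need $\Lambda(\cdot,1)$ to exist on $[0,1]$.
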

\begin{proof}
    The proof is given in \Cref{app:proofsec2}.
\end{proof}
\begin{remark}\label{rem:taildep}
Here, a comparison of $\eta(X|Y)$ with the classical notion of tail dependence is imperative. The tail dependence between $X$ and $Y$ is traditionally measured by the \emph{tail dependence coefficient} $\chi(X,Y)$ defined as
\begin{align}\label{def:taildep}
\chi(X,Y) = \lim_{u\to0} \P(X>\overline{F}_X^{\leftarrow}(u)|Y>\overline{F}_Y^{\leftarrow}(u))= \lim_{u\to0}\frac{\overline{C}(u,u)}{u} ,
\end{align}
if it exists; see \cite{sibuya:1960,mcneil:frey:embrechts:2005}.
The presence or absence of tail dependence is identified with $\chi(X,Y)>0$ or $\chi(X,Y)=0$, respectively. It is easy to check that, in fact, if it exists, then 
$\chi(X,Y)=\Lambda(1,1)$. On the other hand, $\eta(X|Y)$ requires integrating the function $\Lambda(u,1)^2$. Thus, $\eta(X|Y)=0$ would mean $\chi(X,Y)=\Lambda(1,1)=0$ (almost surely), but we are unable to infer much about $\eta(X|Y)$ or $\eta(Y|X)$ by just knowing the value of $\chi(X,Y)=\Lambda(1,1)$.
\end{remark}

The characterization in \Cref{prop:eta_is_intutc} also gives us an easy criterion to check whether $\Delta(X,Y)=0$, implying a symmetric tail association as given in the following corollary.
\begin{corollary}\label{cor:deltazero}
    If $\Lambda(u,1)=\Lambda(1,u)$ for all $u\in[0,1]$ then $\Delta(X,Y)=0$.
\end{corollary}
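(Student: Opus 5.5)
The plan is to reduce everything to the representation in \Cref{prop:eta_is_intutc}, applied once to the pair $(X,Y)$ and once to the swapped pair $(Y,X)$. If $\Lambda$ is the tail copula of $(X,Y)$, then the tail copula of $(Y,X)$ is the transposed function $\Lambda^{\top}(x,y):=\Lambda(y,x)$. This follows from \eqref{eq:tailcopup}: the survival copula of $(Y,X)$ is $\overline{C}^{\top}(u,v)=\overline{C}(v,u)$, because $\P(Y>\overline{F}_Y^{\leftarrow}(u),X>\overline{F}_X^{\leftarrow}(v))$ is exactly the defining probability for $\overline{C}(v,u)$. Hence $t\overline{C}^{\top}(x/t,y/t)=t\overline{C}(y/t,x/t)\to\Lambda(y,x)$.

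Applying \Cref{prop:eta_is_intutc} to each ordered pair, with continuity of $F_X,F_Y$ assumed as in that proposition, gives
\begin{align*}
\eta(X|Y)=3\int_0^1[\Lambda(u,1)]^2\,\mathrm du,\qquad \eta(Y|X)=3\int_0^1[\Lambda^{\top}(u,1)]^2\,\mathrm du=3\int_0^1[\Lambda(1,u)]^2\,\mathrm du.
\end{align*}
Under the hypothesis $\Lambda(u,1)=\Lambda(1,u)$ for all $u\in[0,1]$, the two integrands agree pointwise on $[0,1]$. The integrals are therefore equal, and $\Delta(X,Y)=\eta(X|Y)-\eta(Y|X)=0$ by \eqref{eqdef:delta}.

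The main point needing care is existence. \Cref{prop:eta_is_intutc} presupposes that both $\eta(X|Y)$ and $\eta(Y|X)$ exist, and the corollary implicitly assumes this too, since otherwise $\Delta$ is undefined. I would state this explicitly, together with the existence of $\Lambda$ on $[0,1]^2$, as standing assumptions inherited from the proposition.

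If one wanted to avoid assuming existence separately, one could use \eqref{eq:DenDeltat}: both denominators equal $t(1-t)^2/3$ when the margins are continuous. The numerators are then integrals of $\overline{C}(1-u,1-t)^2$-type expressions, and the same transposition symmetry would match them in the limit. I expect this to be unnecessary given the proposition.
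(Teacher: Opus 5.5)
Your proposal is correct and follows the paper's intended route. The paper gives no separate proof and treats the corollary as an immediate consequence of \Cref{prop:eta_is_intutc}, applied to $(X,Y)$ and to the transposed pair $(Y,X)$; this gives $\eta(Y|X)=3\int_0^1[\Lambda(1,u)]^2\,\mathrm du$, exactly as you derive and as used in the examples, after which the integrands coincide under the hypothesis. Your existence caveat is harmless but unnecessary: the bounded-convergence argument in the proof of that proposition already shows that $\eta_t(X|Y)$ converges as soon as $\Lambda(v,1)$ exists for each $v\in[0,1]$, and likewise for $\eta(Y|X)$, which the hypothesis implicitly assumes.
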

Clearly, if the tail copula is symmetric (or, more broadly, the copula is symmetric), then the conditions of \Cref{cor:deltazero} are satisfied. Nevertheless, neither the condition in Corollary \ref{cor:deltazero} nor copula symmetry is necessary for $\Delta(X,Y)=0$ to hold.

\subsection{Examples} \label{subsec:examples} 
We compute the values of $\eta(X|Y), \eta(Y|X)$ and $\Delta(X,Y)$ in a few parametric examples. Traditionally many  applications  use symmetric copulas; and hence examples of parametric asymmetric copula models are few in number; we note a couple of them in this section along with providing some recipes for constructing such asymmetric copulas. We primarily use the representation of  \Cref{prop:eta_is_intutc} to compute the necessary values. 

\begin{example}\label{ex:nelsen} This example is adapted from a popular asymmetric copula (see \cite[Example 2.2]{nelsen:2007}). Suppose $(X,Y)\sim F$ with continuous marginals $F_X, F_Y$ respectively and copula $C_{\theta}$ with $\theta\in [0,1]$ given by
  \[  C_{\theta}(u,v) = \min(u, \theta v + (1-\theta) (u+v-1)_+), \quad \quad 0\le u,v \le 1.\]

We compute both $\eta(X|Y)$ and $\eta(Y|X)$ using  \Cref{prop:eta_is_intutc}; we first compute the tail copula $\Lambda$ for this copula:
\begin{align*}
    \Lambda_{\theta}(u, v) &= \lim_{t \to \infty} t\overline{C}(\frac{u}{t}, \frac{v}{t}) = u + v + \lim_{t \to \infty} t C\Bigl(1 - \frac{u}{t}, 1-\frac{v}{t}\Bigr) - t \\
    &= u + v + \lim_{t \to \infty} t\Bigl(\min\{ 1 - u/t, \theta(1-v/t)+(1-u/t-v/t) \} - 1\Bigr) \\
    &= \begin{cases}
        v, &\text{if } \theta u \geq v, \\
        \theta u, &\text{if } \theta u \leq v.
    \end{cases}
\end{align*}
Then we can compute 
\begin{align*}
    \eta(X|Y) &= 3\int_0^1 [\Lambda_{\theta}(u, 1)]^2 \mathrm du = 3\theta^2 \int_0^1  u^2 \mathrm du = \theta^2,
\intertext{and,}
    \eta(Y|X) &= 3\int_0^1 [\Lambda_{\theta}(1, v)]^2 \mathrm dv= 3 \left[  \int_0^{\theta} v^2 \mathrm dv + \theta^2 \int_{\theta}^1  \mathrm dv   \right] = 3\theta^2-2\theta^3.
\end{align*}  

Computing the tail asymmetry measure $\Delta(X,Y)$ we have
\begin{align}\label{eq:deltaEx21}
\Delta(X,Y) = \eta(X|Y) - \eta(Y|X) = -2\theta^2(1-\theta). 
\end{align}
Now, $\Delta(X,Y)$ attains its maximum value (in an absolute sense) when $\theta=2/3$ and its value is ${\Delta}(X,Y)=-\frac{8}{27}$. As a comparison to these measures, we compute the tail dependence coefficient, which is a symmetric measure given by
  \begin{align*}
   \chi(X,Y) & = \lim_{t\uparrow 1} \frac{\P(F_X(X)>t,F_Y(Y)>t)}{\P(F_X(X)>t)}\\
                  & = \frac{1-2t +C_{\theta}(t,t)}{1-t}  = \frac{1-2t +(t+(1-\theta)(t-1))}{1-t} =\theta.
   \end{align*}  
For example, let us fix $\theta=\frac{2}{3}$. Then the tail dependence coefficient satisfies
\[
\chi(X,Y)=\frac{2}{3}>0,
\]
which implies that $X$ and $Y$ are tail dependent. We also compute
\[
\eta(X\mid Y)=\frac{4}{9}\approx 0.44,
\qquad
\eta(Y\mid X)=\frac{20}{27}\approx 0.74.
\]
Hence, although both directions exhibit tail association, the extremal association measured by $\eta$ is stronger in the direction of $Y$ on $X$ than in the direction of $X$ on $Y$. In other words, extreme values of $X$ are more strongly associated with concurrent extreme values of $Y$ than the reverse. This suggests that, in the tails, $Y$ exerts a stronger extremal influence on $X$ than $X$ does on $Y$.

\end{example}

As mentioned above, examples of asymmetric copulas are not always readily available, yet a mechanism to create asymmetric copulas is to transform a symmetric copula \cite{joe:1997}. Suppose  $C^*$ is a (symmetric) copula. Then a (possibly) asymmetric copula is given by
\begin{align}\label{def:cop:asym}
    C_{\alpha,\beta}(u,v) = u^{1-\alpha}v^{1-\beta} C^*(u^{\alpha},v^{\beta}), \quad (u,v)\in [0,1]^2
\end{align}
where $0\le \alpha, \beta\le 1$ and $\alpha\neq \beta$. Now, this construction also allows for an easy computation of the tail copula in terms of the tail copula of the original symmetric copula as given by the next result.

\begin{proposition}\label{prop:uptc_symtoasym}
    Let $\Lambda_{C}$ be the upper tail copula for any copula $C$, whenever it exists. Then with a copula $C_{\alpha,\beta}$ as defined in \eqref{def:cop:asym} with a symmetric copula $C^*$ , we have
    \begin{align}\label{res:asy:uptailcop}
        \Lambda_{C_{\alpha,\beta}} (u,v)= \Lambda_{C^*} (\alpha u, \beta v), \quad (u,v)\in[0,1]^2.
    \end{align}
\end{proposition}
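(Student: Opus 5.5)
We work directly from the definition \eqref{eq:tailcopup}, together with the survival-copula identity \eqref{eq:cop:survcop}. For large $t$ this gives
\[
t\,\overline{C}_{\alpha,\beta}(u/t,v/t) = u+v - t\bigl[1 - C_{\alpha,\beta}(1-u/t,1-v/t)\bigr].
\]
So the whole task is an expansion of $1-C_{\alpha,\beta}(1-x,1-y)$ to first order as $x,y\downarrow 0$, with $x=u/t$ and $y=v/t$. The analogous relation for $C^*$ is
\[
1-C^*(1-x',1-y') = x'+y' - \overline{C}^*(x',y'),
\]
and $t\,\overline{C}^*(x'_t,y'_t)\to \Lambda_{C^*}(a,b)$ whenever $t x'_t\to a$ and $t y'_t\to b$.

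First we write $C_{\alpha,\beta}(1-x,1-y)=(1-x)^{1-\alpha}(1-y)^{1-\beta}\,C^*\bigl((1-x)^\alpha,(1-y)^\beta\bigr)$. We then set
\[
x' = 1-(1-x)^\alpha = \alpha x + O(x^2), \qquad y' = 1-(1-y)^\beta = \beta y + O(y^2).
\]
This gives
\[
C^*\bigl((1-x)^\alpha,(1-y)^\beta\bigr) = 1 - x' - y' + \overline{C}^*(x',y').
\]
The prefactor expands as $(1-x)^{1-\alpha}(1-y)^{1-\beta} = 1-(1-\alpha)x-(1-\beta)y+O(x^2+y^2)$. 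Multiplying the two expansions and keeping first-order terms, we obtain
\[
1-C_{\alpha,\beta}(1-x,1-y) = (1-\alpha)x+(1-\beta)y+\alpha x+\beta y - \overline{C}^*(x',y') + o(x+y) = x+y-\overline{C}^*(x',y')+o(x+y).
\]
The cross terms, such as $(1-\alpha)x\cdot\overline{C}^*(x',y')$, are $O((x+y)^2)$ because $\overline{C}^*(x',y')\le \min(x',y')$.

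Substituting $x=u/t$ and $y=v/t$ and multiplying by $t$ gives
\[
t\,\overline{C}_{\alpha,\beta}(u/t,v/t) = t\,\overline{C}^*(x'_t,y'_t) + o(1),
\]
where $t x'_t\to\alpha u$ and $t y'_t\to\beta v$.

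The main point needing care is replacing $t\,\overline{C}^*(x'_t,y'_t)$ by $t\,\overline{C}^*(\alpha u/t,\beta v/t)$, since the tail copula limit is only assumed along the exact scaling. For this we use the Lipschitz property of copulas: $|\overline{C}^*(a,b)-\overline{C}^*(a',b')|\le |a-a'|+|b-b'|$. The discrepancy is therefore at most $t\bigl(|x'_t-\alpha u/t|+|y'_t-\beta v/t|\bigr)=O(1/t)\to 0$. This yields $\Lambda_{C_{\alpha,\beta}}(u,v)=\Lambda_{C^*}(\alpha u,\beta v)$, and shows that the left-hand limit exists whenever the right-hand one does.

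The edge cases $\alpha=0$ or $\beta=0$ give $x'\equiv 0$ or $y'\equiv 0$, so both sides vanish. This is consistent, since $\Lambda(0,\cdot)=0$. Symmetry of $C^*$ is not actually used.
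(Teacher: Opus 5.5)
Your proof is correct and follows essentially the same route as the paper. Both arguments use the survival-copula identity and a first-order expansion of $(1-u/t)^{1-\alpha}(1-v/t)^{1-\beta}C^*\bigl((1-u/t)^\alpha,(1-v/t)^\beta\bigr)$, then the Lipschitz property of copulas (the paper's Lemma \ref{lem:cont_converg}) to absorb the $O(t^{-2})$ perturbations in the arguments; working with $\overline{C}^*$ rather than $C^*$ is only cosmetic, and you correctly conclude $\Lambda_{C^*}(\alpha u,\beta v)$ (the paper's final display has a typo there) and rightly note that symmetry of $C^*$ is never used.
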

\begin{proof}
    The proof is in \Cref{app:proofsec2}.
\end{proof}

\begin{example} \label{ex:gumbel}
Let $(X,Y)\sim F$ be jointly distributed with continuous marginals and copula $C_{\alpha,\beta; \delta}$, where  $\alpha,\beta \in [0,1]$ are fixed constants and we use  the construction \eqref{def:cop:asym} on the Gumbel copula $C_{\delta}(u,v)=\exp\{-[(-\log u)^{\delta} + (-\log v)^{\delta}]^{1/\delta}\}, 0\le u,v\le 1$ with fixed $\delta\in [1,\infty)$ to obtain
\begin{align*}
C_{\alpha,\beta} (u,v) &= u^{1-\alpha}v^{1-\beta} C_{\delta}(u^{\alpha}, v^{\beta}) \\
     & = u^{1-\alpha} v^{1-\beta} \exp\{-[(-\alpha \log u)^{\delta} + (-\beta\log v)^{\delta}]^{1/\delta}\} , \quad 0\le u,v\le 1.
\end{align*}
For the Gumbel copula, we can find the upper tail copula to be 
\[\Lambda_{\delta}(u,v) = u+v-(u^{\delta} + v^{\delta})^{1/\delta}\quad  0\le u,v\le 1.\]
Hence using \Cref{prop:uptc_symtoasym}, we have the upper tail copula of $C_{\alpha,\beta; \delta}$ to be
\[\Lambda_{\alpha,\beta; \delta}(u,v) = \alpha u+\beta v-((\alpha u)^{\delta} + (\beta v)^{\delta})^{1/\delta}\quad  0\le u,v\le 1.\]
Now, we may compute $\eta_{\alpha,\beta;\delta}(X|Y)$ and $\eta_{\alpha,\beta;\delta}(Y|X)$ using \eqref{prop:eta_is_intutc} for different choices of $\alpha, \beta, \delta$. Unfortunately, simple analytical expressions for $\eta_{\alpha,\beta;\delta}(X|Y)$ and $\eta_{\alpha,\beta;\delta}(Y|X)$ are unavailable for many of the choices of $\delta$.  However, by fixing $\delta=2$ we are able to compute  the values of $\eta_{\alpha,\beta;2}(X|Y), \eta(Y|X)_{\alpha,\beta;2}$ and $\Delta_{\alpha,\beta;2}(X,Y)=\eta_{\alpha,\beta;2}(X|Y)-\eta_{\alpha,\beta;2}(Y|X)$ numerically and as well provide an analytical expression for $\alpha,\beta\in [0,1]$. An analytical expression for $\Delta_{\alpha,\beta;2}$ is given below.
\begin{align*}
    \Delta_{\alpha,\beta;2}(X, Y) 
  & =  \frac{3 \alpha^{3} \operatorname{asinh}{\left(\frac{\beta}{\alpha} \right)}}{\beta} - \frac{2 \alpha^{3}}{\beta} - 4 \alpha^{2} + \frac{2 \alpha^{2} \sqrt{\alpha^{2} + \beta^{2}}}{\beta} + \alpha \sqrt{\alpha^{2} + \beta^{2}}  \\
    &\quad \; - \frac{3 \beta^{3} \operatorname{asinh}{\left(\frac{\alpha}{\beta} \right)}}{\alpha} + \frac{2 \beta^{3}}{\alpha} + 4 \beta^{2} - \frac{2 \beta^{2} \sqrt{\alpha^{2} + \beta^{2}}}{\alpha}- \beta \sqrt{\alpha^{2} + \beta^{2}}.
\end{align*}
In \Cref{fig:exgumbel}, a graph plotting of the coefficients $\eta_{\alpha,\beta;2}(X|Y), \eta(Y|X)_{\alpha,\beta;2}$ and $\Delta_{\alpha,\beta;2}(X,Y)$ for different values of $0\le \alpha, \beta\le 1$ is given.

\begin{figure}[h!]
    \centering
   \includegraphics[width=14.5cm]{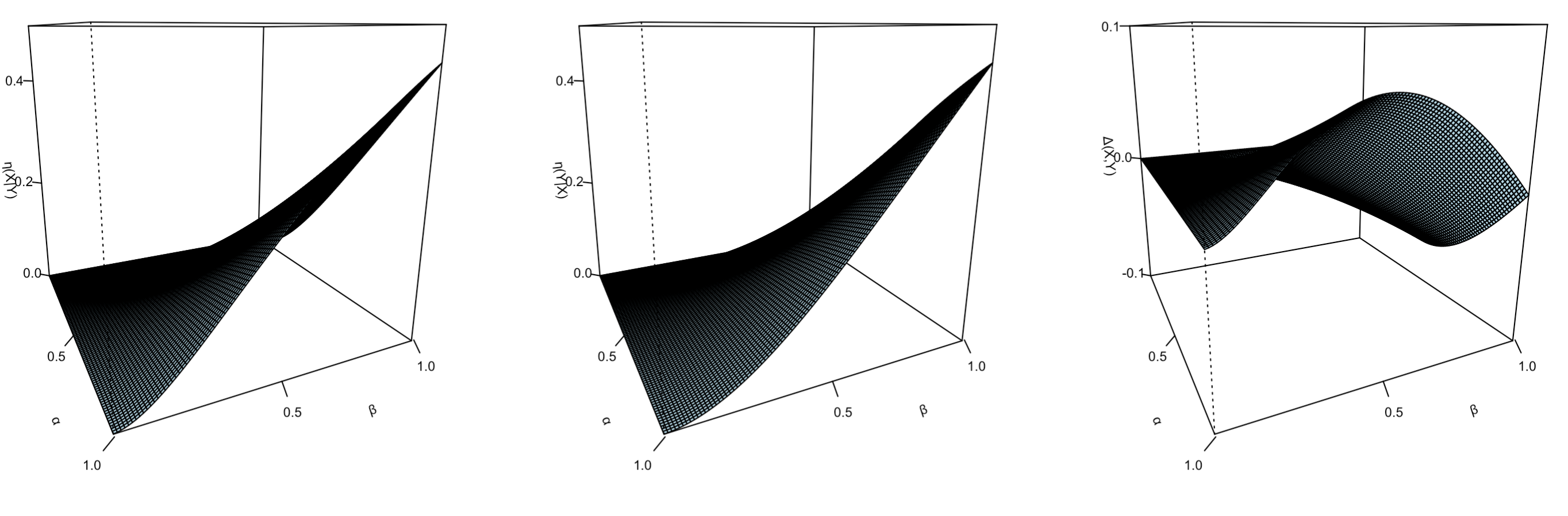}
    \caption{Plot of $\eta(X|Y), \eta(Y|X)$ and $\Delta(X,Y)$ for \Cref{ex:gumbel}.}
    \label{fig:exgumbel}
\end{figure}

The computations also tell us that $\Delta(X,Y)$ attains its  highest (absolute) value when $\alpha=1, \beta=0.5$ (or, $\alpha=0.5, \beta=1$) and we have $\eta_{1,0.5;2}(X|Y)=0.2365, \eta_{1,0.5,2}(Y|X)=0.1685$ and $\Delta_{1,0.5,2}(X,Y)=0.068$. 
\end{example}

\begin{example}\label{ex:maxmodel} 
We consider one more family of examples given by $(X,Y)\sim F$ with continuous marginals $F_X, F_Y$ respectively and copula $C_m$ with $m\in \mathbb{Z}, m \ge 2$,  given by
  \[  C_m(u,v) = \min(u, v^{1/m})\, v^{1-1/m}, \quad \quad 0\le u,v \le 1.\]
Such a distribution may be constructed by considering i.i.d. random variables $Z_1,\ldots,Z_m \sim F_Z$ and defining $X=Z_1$ and $Y=\max(Z_1,\ldots,Z_m)$ and then $(X,Y)\sim F$.
We compute both $\eta(X|Y)$ and $\eta(Y|X)$ using  \Cref{prop:eta_is_intutc}; we first compute the tail copula $\Lambda$ for this copula:
\begin{align*}
    \Lambda_m(u, v) &= \lim_{t \to \infty} t\overline{C}_m(\frac{u}{t}, \frac{v}{t})
     = \min(u,v/m).
\end{align*}
Then we compute 
\begin{align*}
    \eta(X|Y) &= 3\int_0^1 [\Lambda_m(u, 1)]^2 \mathrm du 
    = \frac{3}{m^2}-\frac{2}{m^3},\quad
    \eta(Y|X) = 3\int_0^1 [\Lambda_m(1, v)]^2 \mathrm dv 
    = \frac{1}{m^2},
    \intertext{and,}
    \Delta(X,Y) & = \eta(X|Y) - \eta(Y|X) = \frac2{m^2}\left(1-\frac1m\right).
\end{align*}
In this case, the tail copula takes a similar form as in \Cref{ex:nelsen} with $\theta=1/m$ and the roles of $\eta(X|Y)$ and $\eta(Y|X)$ reversed.
\end{example}
There are different mechanisms to model asymmetry in tails of distributions, but our goal remains to observe this in real data, which we  exhibit in \Cref{sec:data}.

\section{Asymptotic behavior} \label{sec:asybeh}

The empirical quantities for the extremal tail coefficient and the measure of tail asymmetry are easily computable in sample data. Here, we study their asymptotic properties. The characterization of the population ETA coefficient given in \Cref{prop:eta_is_intutc} is helpful in proving its consistency.

\subsection{Consistency of $\eta_{k,n}$ and $\Delta_{k,n}$}\label{subsec:consistency}

First, we demonstrate that  $\eta_{k,n}$ and $\Delta_{k,n}$ are indeed consistent estimators for their population counterparts. The consistency results require only an assumption on the rate of increase of the intermediate sequence $k$, and this is well-known in the study of extremes.

\begin{theorem}\label{thm:consistency}
   Let $(X_1,Y_1), \ldots, (X_n,Y_n)$ be i.i.d from a bivariate distribution $F$ with continuous marginals $F_X, F_Y$.
Then as $n\to \infty, k \to \infty$ and $n/k \to \infty$, 
\begin{align}\label{eq:etaknconvinp}
    {\eta}_{k,n}(X|Y) \to \eta(X|Y) \quad \text{in probability.}
\end{align}

Additionally, if $k/\log\log n \to \infty$ as $n\to \infty$, then
\begin{align}\label{eq:etaknconvas}
    {\eta}_{k,n}(X|Y) \to \eta(X|Y) \quad \text{almost surely.} 
\end{align}

\end{theorem}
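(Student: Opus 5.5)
The plan is to rewrite $\eta_{k,n}(X|Y)$ as a Riemann sum of the squared \emph{empirical tail copula}, and then use uniform consistency of that object together with \Cref{prop:eta_is_intutc}. I assume throughout that the tail copula $\Lambda$ of \eqref{eq:tailcopup} exists; this is implicit in \Cref{prop:eta_is_intutc}.

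\textbf{Step 1: combinatorial identity.} For integers $1\le m\le k+1$ we have $(k+1-m)_+=\sum_{l=1}^{k}\bI(m\le l)$. Taking $m=\max(\rho_i,\rho_j)$ and exchanging sums gives
\begin{align*}
\sum_{i,j=1}^{k-1}(k+1-\max(\rho_i,\rho_j))_+=\sum_{l=1}^{k}\Big(\sum_{i=1}^{k-1}\bI(\rho_i\le l)\Big)^2 .
\end{align*}
Let $R^X_m,R^Y_m$ denote the reverse ranks of $X_m$ and $Y_m$. Then $\sum_{i=1}^{k-1}\bI(\rho_i\le l)=\#\{m:R^Y_m\le k-1,\ R^X_m\le l\}=k\,\widehat\Lambda_n\big(\tfrac lk,\tfrac{k-1}{k}\big)$, where
\[
\widehat\Lambda_n(x,y)=\frac1k\sum_{m=1}^n\bI(R^X_m\le kx,\ R^Y_m\le ky)
\]
is the empirical tail copula. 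Hence, exactly,
\[
\eta_{k,n}(X|Y)=\frac3k\sum_{l=1}^{k}\Big[\widehat\Lambda_n\big(\tfrac lk,\tfrac{k-1}{k}\big)\Big]^2 .
\]

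\textbf{Step 2: reduction to uniform convergence.} Both $\widehat\Lambda_n$ and $\Lambda$ take values in $[0,1]$ on $[0,1]^2$. Therefore $|a^2-b^2|\le 2|a-b|$ for the relevant values, and
\[
\Big|\eta_{k,n}(X|Y)-3\int_0^1\Lambda(u,1)^2\,\mathrm du\Big|
\le 6\sup_{[0,1]^2}|\widehat\Lambda_n-\Lambda| + 6\sup_{u}\Big|\Lambda\big(u,\tfrac{k-1}{k}\big)-\Lambda(u,1)\Big| + \Big|\frac3k\sum_{l=1}^k\Lambda\big(\tfrac lk,1\big)^2-3\int_0^1\Lambda(u,1)^2\,\mathrm du\Big| .
\]
The tail copula is $1$-Lipschitz in each coordinate, so the last two terms are $O(1/k)$. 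By \Cref{prop:eta_is_intutc} the centring constant equals $\eta(X|Y)$. So it suffices to show $\sup_{[0,1]^2}|\widehat\Lambda_n-\Lambda|\to0$ in probability, or almost surely under the extra condition.

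\textbf{Step 3: uniform consistency (the main obstacle).} Set $U_m=F_X(X_m)$ and $V_m=F_Y(Y_m)$; these are uniform by continuity of the marginals. Introduce the non-rank version
\[
\Lambda_n(x,y)=\frac1k\sum_m\bI\big(1-U_m\le kx/n,\ 1-V_m\le ky/n\big).
\]

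\emph{Pointwise consistency of $\Lambda_n$.} Its mean is $\tfrac nk\overline C(kx/n,ky/n)\to\Lambda(x,y)$ because $n/k\to\infty$. Its variance is at most $1/k\to0$. Together these give pointwise convergence in probability.

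\emph{Uniformity.} $\Lambda_n$ is coordinatewise nondecreasing and $\Lambda$ is continuous. A P\'olya/grid argument on a finite mesh of $[0,1]^2$ therefore upgrades pointwise convergence to uniform convergence.

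\emph{Ranks.} We have $\widehat\Lambda_n(x,y)=\Lambda_n(x_n,y_n)$ with random arguments $x_n=\tfrac nk\,(1-U_{(\lceil kx\rceil)})$, where $U_{(j)}$ is the $j$-th largest of the $U_m$, and similarly for $y_n$. The univariate tail empirical process gives $\sup_{x\le1}|x_n-x|\to0$ in probability. Combining this with the Lipschitz property of $\Lambda$ and the uniform convergence of $\Lambda_n$ yields $\sup|\widehat\Lambda_n-\Lambda|\to0$ in probability, which proves \eqref{eq:etaknconvinp}.

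\textbf{Step 4: almost sure convergence.} Assume $k/\log\log n\to\infty$. Replace the variance bound by almost-sure results for tail empirical processes, namely the Deheuvels--Mason/Einmahl strong laws for intermediate order statistics and tail empirical distribution functions. These hold exactly under $k/\log\log n\to\infty$ and give
\[
\sup_{x\le1}|x_n-x|\to0\quad\text{and}\quad\sup_{[0,1]^2}|\Lambda_n-\mathbb E\Lambda_n|\to0\qquad\text{a.s.}
\]
(the bivariate case follows by applying the same exponential bounds on a grid plus monotonicity). The rest of the argument is as in Step 3 and gives \eqref{eq:etaknconvas}. I expect this a.s.\ uniform step for the bivariate rank-based process to be the most delicate part. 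My first attempt would use Bernstein's inequality on a polynomially fine grid together with a blocking argument along $n_j=\lfloor\theta^j\rfloor$, in the style of the law of the iterated logarithm.
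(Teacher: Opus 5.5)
Your proposal is correct for the in-probability statement and has the same skeleton as the paper's proof. You express $\eta_{k,n}(X|Y)$ through the squared empirical tail copula on the segment $\{(u,1):0\le u\le 1\}$, bound the error by a multiple of $\sup|\widehat\Lambda-\Lambda|$ plus $O(1/k)$, and identify the limit via \Cref{prop:eta_is_intutc}.

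The differences are in how the two ingredients are obtained.
\begin{itemize}
\item \emph{The representation.} The paper writes $\eta_{k,n}(X|Y)=3\int_0^1[\widehat\Lambda_{k,n}(u,1)]^2\,\mathrm du$. Your discrete identity is exact and avoids having to match rank conventions with the generalized-inverse definition of $\widehat\Lambda_{k,n}$.
\item \emph{Uniform consistency.} The paper imports it, both in probability and almost surely, from \cite[Theorem 1.2]{qi:1997}, via the stable tail dependence function and the relation $\widehat\Lambda_{k,n}(x,y)=\lceil kx\rceil/k+\lceil ky\rceil/k-\ell_n(x,y)$. Your Step~3 proves the in-probability version from first principles: mean and variance at grid points, a monotone sandwich, and a random time change through the tail quantile process. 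It also does not need the hypothesis $\Lambda\not\equiv 0$ that the paper's auxiliary lemma carries.
\end{itemize}

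The price is Step~4, which as written is a plan rather than a proof.
\begin{itemize}
\item Under only $k/\log\log n\to\infty$, the bounds $e^{-c\varepsilon^2 k}$ are not summable in $n$, so a grid combined with Borel--Cantelli does not suffice.
\item Blocking along $n_j=\lfloor\theta^j\rfloor$ requires controlling $\Lambda_n$ for $n_j\le n<n_{j+1}$ while the thresholds $k_n x/n$ move with $n$. This is the genuinely delicate step, and in the literature it is often handled under monotonicity assumptions on $k_n$ and $k_n/n$.
\item Rather than re-deriving this, you can close Step~4 as the paper does. Qi's strong law already gives $\sup_{[0,T]^2}|\widehat\Lambda_{k,n}-\Lambda|\to 0$ almost surely, and your Step~2 then applies verbatim.
\end{itemize}

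Two slips in Step~3 are immaterial. The index should be $\lfloor kx\rfloor$ rather than $\lceil kx\rceil$. The uniform convergence of $\Lambda_n$ should be taken on $[0,T]^2$ with $T>1$, since $x_n$ can exceed $1$.
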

\begin{proof}
    The proof is given in \Cref{app:proofsec2}.
\end{proof}

\begin{remark}
    It is not necessary that we assume that both $F_X$ and $F_Y$ are continuous everywhere; it suffices to assume that the marginal distributions are continuous near the right tail.
\end{remark}
The consistency of $\Delta_{k,n}(X,Y)$ is a direct consequence of \Cref{thm:consistency}; we state this result without proof next.
\begin{corollary}\label{cor:delta}
   Suppose $(X_1,Y_1), \ldots, (X_n,Y_n)$ are i.i.d from a bivariate distribution $F$ with continuous marginals $F_X, F_Y$. Then as $n\to \infty, k \to \infty$ and $n/k \to \infty$, 
\[{\Delta}_{k,n}(X,Y) \to \Delta(X,Y) \quad \text{in probability.} \]
Additionally, if $k/\log\log n \to \infty$, as $n\to \infty$, then
 \[{\Delta}_{k,n}(X,Y) \to \Delta(X,Y) \quad \text{almost surely.} \]  
\end{corollary}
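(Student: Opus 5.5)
This follows at once from \Cref{thm:consistency}, applied to both orderings of the pair. By \Cref{def:delta}, $\Delta_{k,n}(X,Y)=\eta_{k,n}(X|Y)-\eta_{k,n}(Y|X)$ and $\Delta(X,Y)=\eta(X|Y)-\eta(Y|X)$. So it is enough to prove that each of the two ETA coefficients is consistent, in probability and almost surely under the respective rate conditions.

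\textbf{First term.} \Cref{thm:consistency} applied to $(X_i,Y_i)$ gives $\eta_{k,n}(X|Y)\to\eta(X|Y)$ in probability when $k\to\infty$ and $n/k\to\infty$. If in addition $k/\log\log n\to\infty$, the convergence holds almost surely.

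\textbf{Second term.} Consider the swapped sample $(Y_1,X_1),\ldots,(Y_n,X_n)$. It is i.i.d.\ with the reflected distribution, and its marginals $F_Y,F_X$ are still continuous. So the hypotheses of \Cref{thm:consistency} hold for it, with the same $k$ and $n$. The sample coefficient defined in \eqref{eqdef:eta_data} for this swapped sample is exactly $\eta_{k,n}(Y|X)$, and the population counterpart is $\eta(Y|X)$. Hence $\eta_{k,n}(Y|X)\to\eta(Y|X)$ in the same modes.

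There is one point to check. \Cref{thm:consistency} implicitly assumes that the limit $\eta(X|Y)$ exists. By \Cref{prop:eta_is_intutc} this is the case whenever the tail copula $\Lambda$ exists, since then $\eta(X|Y)=3\int_0^1\Lambda(u,1)^2\,\mathrm du$. The reflected tail copula is $\Lambda^{*}(x,y)=\Lambda(y,x)$, so it exists as well, and $\eta(Y|X)=3\int_0^1\Lambda(1,u)^2\,\mathrm du$ is well defined. Whatever existence condition the theorem requires for $(X,Y)$ therefore also holds for $(Y,X)$.

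\textbf{Combining.} Convergence in probability is preserved under continuous maps of a vector whose components each converge in probability. Taking the difference map $(a,b)\mapsto a-b$ gives $\Delta_{k,n}(X,Y)\to\Delta(X,Y)$ in probability. For the almost sure statement, intersect the two probability-one events on which each coefficient converges. The intersection still has probability one, and on it the difference converges.

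The argument has no real obstacle. The only step needing care is the existence check for the swapped pair $(Y,X)$ described above.
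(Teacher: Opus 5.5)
Your proposal is correct and is the argument the paper intends: the paper states this corollary without proof as a direct consequence of \Cref{thm:consistency}, which amounts to applying that theorem to $(X,Y)$ and to the swapped pair $(Y,X)$ and then taking the difference. Your check that the swapped pair has tail copula $\Lambda(y,x)$ fills in a detail the paper leaves implicit: it means existence, and the nontriviality assumed in \Cref{lem:lambdanconvergence}, carry over to $(Y,X)$. One caveat: \Cref{prop:eta_is_intutc} as stated assumes $\eta$ exists, but its bounded-convergence proof does derive existence from that of $\Lambda$, so your inference holds.
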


\subsection{Asymptotic normality of $\eta_{k,n}$ and $\Delta_{k,n}$}

Now we prove that these estimators are in fact  asymptotically normal under proper scaling. Naturally, this requires further assumptions and some new notation as well.

The representation of $\eta(X|Y)$ in \Cref{prop:eta_is_intutc} clearly suggests a natural estimator using an empirical version of the tail copula $\Lambda$. Let $(X_{1},Y_{1}), \ldots,$  $(X_{n}, Y_{n})$ be i.i.d. pairs of random variables with joint distribution $F$ and marginals $F_X$ and $F_Y$. Following \cite{bucher:dette:2013}, the empirical  distribution and survival distribution function  of $F$ are respectively
\begin{align*}
& F_n(x,y) = \frac1n\sum_{i=1}^n \bI(X_i\le x, Y_i \le y),  \text{ and, } \,\, \ov{F}_n(x,y) = \frac1n\sum_{i=1}^n \bI(X_i> x, Y_i> y), \end{align*}
for $(x,y)\in \R^2$. Similarly, the empirical marginal distribution and survival distribution of $F_X, F_Y$ are
\begin{align*}
& F_{n,X}(x) = F_n(x,\infty), \;\;\; \ov F_{n,X}(x) = 1- F_{n,X}(x),\\
 & F_{n,Y}(y) = F_n(\infty,y), \;\;\; \ov F_{n,Y}(y) = 1- F_{n,Y}(y).
\end{align*}
Following \cite{schmidt:stadtmuller:2006, bucher:dette:2013}, the upper tail copula can be estimated consistently by the empirical tail copula given by
\begin{align}\label{eq:lambdahat}
\widehat{\Lambda}_{k,n}(x,y) = \frac{n}{k} \ov{C}_n\left(kx/n,ky/n\right), \quad\quad (x,y) \in\R^2,
\end{align}
where
\[\ov{C}_n(u,v) = \ov{F}_n(\ov{F}_{n,X}^{\leftarrow}(u), \ov{F}_{n,Y}^{\leftarrow}(v)), \quad (u,v)\in [0,1]^2. \]

In the literature of \emph{extreme value theory}, it is well-known that asymptotic normality of estimates of tail parameters often requires a further \emph{second-order} condition for the limit distribution (or limit tail copula or stable tail dependence function). We collect a few other conditions on the tail copula  $\Lambda$ below which will facilitate such a proof; see \cite{schmidt:stadtmuller:2006, bucher:dette:2013}.

\begin{itemize}
\item[(C1)] The tail copula $\Lambda(\cdot,\cdot)$ is not identically 0.
\item[(C2)] For some function $A:\R_+\to\R_+$ with $\lim_{t\to\infty} A(t) = 0$, we have 
\begin{align*}
    \lim_{t\to\infty} \frac{\Lambda(x, y) - tC(x/t, y/x)}{A(t)} = g(x, y) 
\end{align*}
locally uniformly for $(x, y) \in \overline{\mathbb{R}}^2_{+}$ for some non-constant function $g$.
\item[(C3)] The partial derivatives 
   $$\frac{\partial}{\partial x} \Lambda(x,y), \quad \text{and}, \quad \frac{\partial}{\partial y} \Lambda(x,y),$$
exist and are continuous for all $x\in(0,\infty)$ and $y\in(0,\infty)$, respectively.
\item[(C4)] As $k\to\infty, n\to\infty$ we have $\sqrt{k}A(n/k) \to 0$ as $n \to \infty$.
\end{itemize}
First we state a result on the weak convergence of the tail copula process; the result is adapted from \cite{bucher:dette:2013} where the result is stated for the lower tail copula, we have adapted this for the upper tail copula.
\begin{proposition}\label{thm:bucherdette}(\citet[Theorem 2.2]{bucher:dette:2013})
  Let $(X_1,Y_1), \ldots, (X_n,Y_n)$ be i.i.d. from a bivariate distribution $F$ with continuous marginals $F_X, F_Y$. Let  conditions (C1)-(C4) hold for the associated tail copula $\Lambda$. Then, as $n\to \infty, k\to \infty$ and $n/k \to \infty$, we have
  \begin{align*}
   \sqrt{k}\left(\widehat{\Lambda}_{k,n}(x,y) - \Lambda(x,y) \right) \weak \bG_{\widehat{\Lambda}}(x,y)
  \end{align*}
  where $\weak$ here denotes weak convergence in sup-norm over each
compact set in $\overline{\R}^2\setminus\{(0,0)\}$  and $\bG_{\widehat{\Lambda}}$ is a bivariate Gaussian field with
  \[\bG_{\widehat{\Lambda}} (x,y) =  \bG_{\tilde{\Lambda}} (x,y) -\frac{\partial \Lambda(x,y)}{\partial x} \bG_{\tilde{\Lambda}} (x,\infty)-\frac{\partial \Lambda(x,y)}{\partial y} \bG_{\tilde{\Lambda}} (\infty,y)\]
  with $\bG_{\tilde{\Lambda}}$ a centered bivariate Gaussian field with covariance  given by
  \[ \E\left(\bG_{\tilde{\Lambda}}(x,y) \bG_{\tilde{\Lambda}}(\tilde{x},\tilde{y})  \right) =  \Lambda (x\wedge \tilde{x}, y \wedge \tilde{y})\]
for $(x,y), (\tilde{x},\tilde{y}) \in \R_+^2$.
\end{proposition}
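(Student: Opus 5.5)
The statement is Proposition~\ref{thm:bucherdette}, which adapts \citet[Theorem 2.2]{bucher:dette:2013} from the lower to the upper tail copula. I would not reprove the weak convergence from scratch. Instead I would reduce the upper-tail statement to the lower-tail one by reflecting the sample. Set $(X_i',Y_i') := (-X_i,-Y_i)$. These are i.i.d.\ with continuous marginals, and their copula $C'$ satisfies $C'(u,v)=\overline{C}(u,v)$. Hence the lower tail copula of $(X',Y')$, namely $\lim_{t\to\infty} tC'(x/t,y/t)$, coincides with the upper tail copula $\Lambda$ of $(X,Y)$ in \eqref{eq:tailcopup}. By the same reflection, conditions (C1)--(C4) for $\Lambda$, with (C2) read with $\overline{C}$, are exactly the hypotheses of B\"ucher and Dette for the lower tail copula of the reflected sample.

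The first step is to check that the estimators also correspond under the reflection. The empirical survival function of the original sample equals the empirical distribution function of the reflected sample, up to the strict versus non-strict inequality: $\overline{F}_n(x,y)=\frac1n\sum\bI(X_i'<-x,Y_i'<-y)$. The generalized inverses match in the same way, $\overline{F}_{n,X}^{\leftarrow}(u) = -F_{n,X'}^{\leftarrow}(u)$ up to the left/right-continuity convention. So $\widehat{\Lambda}_{k,n}$ as in \eqref{eq:lambdahat} differs from the B\"ucher--Dette empirical lower tail copula of the reflected sample only through ties at the empirical quantile points. Since the marginals are continuous there are almost surely no ties. Each such discrepancy therefore changes $\ov C_n$ by at most $O(1/n)$ uniformly, so $\widehat{\Lambda}_{k,n}$ changes by $O(1/k)$, which is $o(1/\sqrt{k})$. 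By Slutsky/functional Slutsky in sup-norm on compacts, the two empirical processes have the same weak limit.

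The second step is to read off the limit. It is $\bG_{\tilde\Lambda}$ corrected by the partial-derivative terms, and these arise from the functional delta method for the marginal quantile transforms. The reflection leaves the covariance $\Lambda(x\wedge\tilde x,y\wedge\tilde y)$ unchanged, and the derivatives of $\Lambda$ are the same functions, which gives the stated form of $\bG_{\widehat\Lambda}$. The margins $\bG_{\tilde\Lambda}(x,\infty)$ and $\bG_{\tilde\Lambda}(\infty,y)$ are well defined because $\Lambda(x,\infty)=x$ and $\Lambda(\infty,y)=y$.

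The main obstacle is bookkeeping rather than substance. One point is the boundary at infinity: the convergence is on compacts of $\overline{\R}^2_+\setminus\{(0,0)\}$, including points with a coordinate equal to $\infty$, where the empirical tail copula reduces to the marginal tail empirical process. The other is that the derivative condition (C3) is needed only on the open axes, exactly as in the original theorem. If one preferred a direct argument, I would follow B\"ucher and Dette's route instead. That route first proves weak convergence of the tail empirical process $\sqrt{k}(\frac nk\overline{F}_n\circ(\text{true quantiles})-\Lambda)$ via a Vapnik--\v{C}ervonenkis class and a Lindeberg-type bound. It then applies Hadamard differentiability of the quantile-composition map, using (C3), and uses (C2) and (C4) to make the bias $\sqrt{k}(\frac nk\overline{C}(kx/n,ky/n)-\Lambda(x,y))$ vanish uniformly on compacts.
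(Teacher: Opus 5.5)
Your proposal is correct and follows essentially the same route as the paper. The paper gives no separate proof: it invokes \citet[Theorem 2.2]{bucher:dette:2013} and remarks that the lower-tail result carries over to the upper tail, which is exactly the reduction you make explicit via the reflection $(X_i,Y_i)\mapsto(-X_i,-Y_i)$, whose copula is $\overline{C}$. Your further checks are details the paper leaves implicit: that the estimator \eqref{eq:lambdahat} differs from the B\"ucher--Dette estimator of the reflected sample by $O(1/k)=o(1/\sqrt{k})$ uniformly, and that (C2) must be read with $\overline{C}$ in place of the misprinted $C$.
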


  Now the above allows us to obtain asymptotic normality for the measure $\eta_{k,n}(X|Y)$ which we state below.


\begin{theorem}\label{thm:asymnorm}
   Let the assumptions of \Cref{thm:consistency} hold, and additionally we assume that conditions (C1)-(C4) hold as well. Then, as $n\to \infty, k\to \infty$ and $n/k \to \infty$, we have
   $$ \sqrt{k}\big(\eta_{k,n}(X|Y) - \eta(X|Y)\big) \Rightarrow 6\int_0^1 \bG_{\widehat{\Lambda}}(u,1) \Lambda(u,1) \; \mathrm du,$$
  where $G_{\tilde{\Lambda}}$ is the centered bivariate Gaussian field from \Cref{thm:bucherdette}. Hence $\eta_{k,n}(X|Y)$ is asymptotically normal with mean $\eta(X|Y)$ and variance given by $\sigma_{\eta(X|Y)}^2/k$ where 
  \begin{align}\label{eq:vartildeLambda}
  \sigma_{\eta(X|Y)}^2= 36\int\limits_0^1\int\limits_0^1 \Lambda(u,1)\Lambda(v,1) \E[G_{\widehat{\Lambda}}(u,1)G_{\widehat{\Lambda}}(v,1)]\, \mathrm du \, \mathrm dv.
  \end{align}
\end{theorem}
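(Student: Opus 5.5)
The plan is to reduce $\eta_{k,n}(X|Y)$ to a smooth functional of the empirical tail copula $\widehat{\Lambda}_{k,n}$ of \eqref{eq:lambdahat}, and then apply the functional delta method to \Cref{thm:bucherdette}.

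\textbf{Step 1: exact algebraic representation.} The first step is a representation of the double sum defining $\eta_{k,n}(X|Y)$. Write $(k+1-\max(\rho_i,\rho_j))_+ = \sum_{m=1}^{k} \bI(\rho_i\le m)\bI(\rho_j\le m)$. This identity holds because $\max(\rho_i,\rho_j)\le m$ for exactly $k+1-\max(\rho_i,\rho_j)$ values of $m\in\{1,\dots,k\}$ when that number is positive. Hence
\begin{align*}
\eta_{k,n}(X|Y)=\frac{3}{k^3}\sum_{m=1}^{k}\Big(\sum_{i=1}^{k-1}\bI(\rho_i\le m)\Big)^2 .
\end{align*}
The inner sum counts the indices among the top $k-1$ values of $Y$ whose concomitant $X$ is among the top $m$ values of $X$. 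Up to a shift of one in the $Y$-threshold, this count is exactly $n\,\ov{C}_n(m/n,(k-1)/n)$. Since the empirical marginal quantiles are order statistics, this equals $k\,\widehat{\Lambda}_{k,n}(m/k,(k-1)/k)$. Therefore
\begin{align*}
\eta_{k,n}(X|Y)=\frac{3}{k}\sum_{m=1}^{k}\widehat{\Lambda}_{k,n}\big(m/k,1-1/k\big)^2
= 3\int_0^1 \widehat{\Lambda}_{k,n}(u,1)^2\,\mathrm du + R_{k,n}.
\end{align*}
The step functions $\widehat{\Lambda}_{k,n}$ are $1$-Lipschitz in each argument on the grid, with jumps of size $1/k$. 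It follows that the Riemann-sum error and the $1-1/k$ versus $1$ discrepancy are each $O(1/k)$ deterministically. Consequently $\sqrt{k}R_{k,n}=O(k^{-1/2})\to 0$.

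\textbf{Step 2: delta method.} By \Cref{prop:eta_is_intutc}, $\eta(X|Y)=3\int_0^1\Lambda(u,1)^2\,\mathrm du$. Consider the map $\phi(f)=3\int_0^1 f(u)^2\,\mathrm du$ on $\ell^\infty([0,1])$. It is Hadamard differentiable at any bounded $f$, with derivative $h\mapsto 6\int_0^1 f(u)h(u)\,\mathrm du$, which is linear and continuous in the sup-norm. \Cref{thm:bucherdette} gives weak convergence in sup-norm on compacts away from the origin. The segment $\{(u,1):u\in[0,1]\}$ is such a compact set, since it avoids $(0,0)$ because $y=1$. Hence $\sqrt{k}(\widehat{\Lambda}_{k,n}(\cdot,1)-\Lambda(\cdot,1))\weak \bG_{\widehat{\Lambda}}(\cdot,1)$ in $\ell^\infty([0,1])$. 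The functional delta method then yields
\[
\sqrt{k}\big(\eta_{k,n}(X|Y)-\eta(X|Y)\big)\Rightarrow 6\int_0^1\bG_{\widehat{\Lambda}}(u,1)\Lambda(u,1)\,\mathrm du .
\]
Equivalently, one can expand $\widehat{\Lambda}^2-\Lambda^2=2\Lambda(\widehat{\Lambda}-\Lambda)+(\widehat{\Lambda}-\Lambda)^2$. The quadratic term is $O_P(1/k)$ uniformly, so after scaling by $\sqrt{k}$ it is negligible, and the continuous mapping theorem applies to the linear term.

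\textbf{Step 3: the variance.} The limit is a continuous linear functional of a centered Gaussian process with bounded continuous paths, so it is centered normal. Fubini gives its variance as $36\iint\Lambda(u,1)\Lambda(v,1)\E[\bG_{\widehat{\Lambda}}(u,1)\bG_{\widehat{\Lambda}}(v,1)]\,\mathrm du\,\mathrm dv$, which is \eqref{eq:vartildeLambda}.

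\textbf{Main obstacle.} The main obstacle is the careful bookkeeping in Step 1. One must match $\rho_i$, defined as ranks among concomitants of all $n$ observations, with the generalized inverses $\ov{F}_{n,X}^{\leftarrow}$ and $\ov{F}_{n,Y}^{\leftarrow}$ at levels $m/n$ and $(k-1)/n$, without ties. One must also verify that all off-by-one discrepancies are uniformly $O(1/k)$, since these have to vanish after multiplication by $\sqrt{k}$. Uniform boundedness $0\le\widehat{\Lambda}_{k,n}(u,1)\le 1$, which follows from $\ov{C}_n(u,v)\le\min(u,v)$, controls the error of the square.
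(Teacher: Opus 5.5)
Your proposal is correct and follows the paper's route. Both arguments write $\eta_{k,n}(X|Y)$ as $3\int_0^1\widehat{\Lambda}_{k,n}(u,1)^2\,\mathrm du$ and apply the functional delta method, with derivative $h\mapsto 6\int_0^1\Lambda(u,1)h(u,1)\,\mathrm du$, to the tail copula process of \Cref{thm:bucherdette}.

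The only difference is that the paper, in the proof of \Cref{thm:consistency}, claims this representation as an exact identity, whereas you carry an $O(1/k)$ deterministic remainder. Your version is actually the accurate one under the paper's convention for $\ov{F}^{\leftarrow}$: the remainder is exactly $3k^{-3}\bigl(\#\{i\le k-1:\rho_i\le k\}\bigr)^2\le 3/k$. This holds even though your own index matching in Step 1 is a little loose, and the remainder vanishes after multiplication by $\sqrt{k}$.
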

\begin{proof}
    The proof is given in \Cref{app:proofsec3}.
\end{proof}

The following proposition gives the limit distribution of $\Delta_{k,n}(X,Y)$.

\begin{theorem}\label{thm:teststatdist}
   Let the assumptions of \Cref{thm:consistency} hold, additionally we assume that conditions (C1)-(C4) hold as well. Then, as $n\to \infty, k\to \infty$ and $n/k \to \infty$, we have
     $$\sqrt{k}\big(\Delta_{k,n}(X,Y) - \Delta(X,Y)\big) \Rightarrow  6\int_0^1 \left[\bG_{\widehat{\Lambda}}(u,1) \Lambda(u,1) - \bG_{\widehat{\Lambda}}(1,u) \Lambda(1,u)\right] \; \mathrm du,$$
   where $G_{\tilde{\Lambda}}$ is the centered bivariate Gaussian field from \Cref{thm:bucherdette}. 
\end{theorem}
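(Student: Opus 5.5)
The plan is to obtain Theorem~\ref{thm:teststatdist} as a joint version of Theorem~\ref{thm:asymnorm}, using a single tail-copula process for both directions. By Definition~\ref{def:delta}, $\Delta_{k,n}(X,Y)=\eta_{k,n}(X|Y)-\eta_{k,n}(Y|X)$. The sample coefficient $\eta_{k,n}(Y|X)$ is the same statistic with the roles of $X$ and $Y$ swapped. Its tail copula is $(x,y)\mapsto\Lambda(y,x)$, and its empirical tail copula is $(x,y)\mapsto\widehat\Lambda_{k,n}(y,x)$. So it is a functional of the same empirical process $\widehat\Lambda_{k,n}$, evaluated along the segment $\{(1,u):u\in[0,1]\}$ rather than $\{(u,1):u\in[0,1]\}$.

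First I will reuse the reduction behind Theorem~\ref{thm:asymnorm}:
\[
\sqrt{k}\big(\eta_{k,n}(X|Y)-3\textstyle\int_0^1\widehat\Lambda_{k,n}(u,1)^2\,\mathrm du\big)=o_P(1).
\]
This says the double sum in \eqref{eqdef:eta_data} equals a Riemann-type integral of the squared empirical tail copula along the line $y=1$, up to $O(1/k)$ discretization and boundary terms, which are negligible after multiplying by $\sqrt{k}$. Applying the same lemma to the swapped sample $(Y_i,X_i)$ gives
\[
\sqrt{k}\big(\eta_{k,n}(Y|X)-3\textstyle\int_0^1\widehat\Lambda_{k,n}(1,u)^2\,\mathrm du\big)=o_P(1).
\]
Conditions (C1)--(C4) are symmetric under swapping the coordinates, so this application is legitimate.

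Next I write $\Delta_{k,n}(X,Y)$, up to $o_P(k^{-1/2})$, as $\Phi(\widehat\Lambda_{k,n})$, where
\[
\Phi(f)=3\int_0^1\big[f(u,1)^2-f(1,u)^2\big]\,\mathrm du,
\]
defined on bounded functions on $[0,1]^2\cup\{\text{segments}\}$, a compact set away from $(0,0)$. By Proposition~\ref{prop:eta_is_intutc} applied in both directions, $\Phi(\Lambda)=\Delta(X,Y)$. The map $\Phi$ is Hadamard differentiable at $\Lambda$ with respect to the sup norm, with derivative
\[
\Phi'_\Lambda(h)=6\int_0^1\big[\Lambda(u,1)h(u,1)-\Lambda(1,u)h(1,u)\big]\,\mathrm du.
\]
This follows because $f^2-\Lambda^2=2\Lambda(f-\Lambda)+(f-\Lambda)^2$, and the quadratic remainder is $O(\|f-\Lambda\|_\infty^2)$.

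Proposition~\ref{thm:bucherdette} gives $\sqrt{k}(\widehat\Lambda_{k,n}-\Lambda)\weak\bG_{\widehat\Lambda}$ jointly on the compact set containing both segments. The functional delta method, or directly the continuous mapping theorem applied to the linear part with the remainder bounded by $\sqrt{k}\,\|\widehat\Lambda_{k,n}-\Lambda\|_\infty^2=O_P(k^{-1/2})$, then yields the stated limit. The limit is Gaussian because it is a bounded linear functional of a Gaussian field.

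The main obstacle is the first step: justifying, at the $\sqrt{k}$ scale, that the rank-based double sum matches the integral of $\widehat\Lambda_{k,n}^2$. The key observation is that $(k+1-\max(\rho_i,\rho_j))_+$ counts indices $l$ whose $X$-rank is at most $k+1$ and is no smaller than both $\rho_i$ and $\rho_j$. Summing over $i,j<k$ therefore gives $\sum_{m}\big(\#\{i<k:\rho_i\le m\}\big)^2$-type expressions, and $\#\{i<k:\rho_i\le m\}=k\,\widehat\Lambda_{k,n}(m/k,1)$ up to $O(1)$. The boundary cells then have to be controlled uniformly at order $o(\sqrt{k})$. Since this reduction is already needed for Theorem~\ref{thm:asymnorm}, I would quote it and stress only that joint convergence along both segments comes from the single process $\widehat\Lambda_{k,n}$.
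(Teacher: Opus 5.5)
Your proposal is correct and follows the paper's route. The paper writes $\Delta_{k,n}(X,Y)=\psi(\widehat\Lambda_{k,n})$ with $\psi$ as in \eqref{def:psi}, takes the Hadamard derivative \eqref{eq:psi} from Proposition~\ref{prop:diff}, and applies the functional delta method to the weak convergence of $\sqrt{k}(\widehat\Lambda_{k,n}-\Lambda)$. For that convergence you correctly cite Proposition~\ref{thm:bucherdette}; the paper's proof points to the bootstrap result, Proposition~\ref{prop:bootstrappingEmpiricalProcesses}, which appears to be a slip.

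The step you flag as the main obstacle is elementary. Since $\widehat\Lambda_{k,n}(\cdot,1)$ is a step function, integrating $\bI(\rho_i<ku+1)\bI(\rho_j<ku+1)$ over $u\in[0,1]$ gives exactly $(k+1-\max(\rho_i,\rho_j))_+/k$, as in the proof of Theorem~\ref{thm:consistency}. Any off-by-one in the convention for $\ov{F}_{n,X}^{\leftarrow}$ at its jumps changes $\eta_{k,n}$ by at most $3(k-1)^2/k^3\le 3/k$, deterministically. So no Riemann-sum or boundary-cell analysis is needed.
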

\begin{proof}
    The proof is given in \Cref{app:proofsec3}.
\end{proof}
Clearly, from \Cref{thm:teststatdist}, we have that $\sqrt{k}(\Delta_{k,n}(X,Y)-\Delta(X,Y))$ is asymptotically normal with mean $0$ and variance $\sigma^2_{\Delta}$ which in general can be only be computed if we know the tail copula $\Lambda(X,Y)$. Nevertheless, when we have real data, the copula and hence the tail copula remain unknown; thus a well-established technique to  estimate  ${\sigma}_{\Delta}^2$ is by using bootstrapping, in particular a version of the multiplier bootstrap method; see \cite{bucher:dette:2013}.

\section{Testing for tail asymmetry}\label{sec:testasy}
We have defined the measure of tail asymmetry as follows
\[\Delta(X,Y) = \eta(X|Y)-\eta(Y|X),\]
and, naturally, $\Delta(X,Y)=0$ indicates symmetry in tail dependence between the variables if at least $\eta(X|Y)$ or $\eta(Y|X)$ is positive. Hence, we propose the following test of tail symmetry/asymmetry between $X$ and $Y$:
\begin{align}
    \mathcal{H}_0: \Delta(X,Y)=0 \quad\text{ vs. } \quad \mathcal{H}_1: \Delta(X,Y)\neq 0.\label{eq:hyptestasym}
\end{align}
We use the empirical estimator for $\Delta(X,Y)$ as a test statistic, given by
\[\Delta_{k,n}(X,Y) = \eta_{k,n}(X|Y)-\eta_{k,n}(Y|X).\]
Note that one may also test for $\eta(X|Y)=0$ and $\eta(Y|X)=0$ and perform the test in \eqref{eq:hyptestasym} only if tail-dependence is expected to be present.

In Theorems \ref{thm:asymnorm} and \ref{thm:teststatdist}, we have characterized the limit behavior of the test statistic $\Delta_{k,n}$ (and $\eta_{k,n}(X|Y), \eta_{k,n}(Y|X)$) which turns out to be asymptotically normal with a variance that is a function of the tail copula $\Lambda$. The tail copula is naturally unknown in real data, and hence we resort to a multiplier bootstrap method to estimate the variance of the limiting normal distribution following \citet{bucher:dette:2013}.
\subsection{Asymptotic behavior of the (multiplier) bootstrap estimator}\label{subsec:multbootasym}
First, we recall the multiplier bootstrap method used in \citet{bucher:dette:2013} to estimate the tail copula and then use it to propose bootstrap estimators for $\Delta(X,Y)$ and $\eta(X|Y)$. 

Let $\{\xi_i\}_{1\le i \le n}$ be independent and  identically distributed positive random variables with finite mean $\mu$ and finite variance $\tau^2$ and independent of $\{(X_i, Y_i)\}_{1\le i \le n}$. Let $\overline{\xi}_n := \frac1n \sum_{i=1}^n \xi_i$. Then we may construct the following modified empirical joint, marginal, and copula distributions:
\begin{align*}
    \ov{F^{\xi}_n}(x, y) &= \frac{1}{n} \sum_{i = 1}^{n} \frac{\xi_i}{\overline{\xi}_n} \bI(X_i > x, Y_i > y), \\
    \ov{F^{\xi}_{n, 1}}(x) &= \frac{1}{n} \sum_{i = 1}^{n} \frac{\xi_i}{\overline{\xi}_n} \bI(X_i > x), \quad
    \ov{F^{\xi}_{n, 2}}(y) = \frac{1}{n} \sum_{i = 1}^{n} \frac{\xi_i}{\overline{\xi}_n} \bI(Y_i > y), \\
    \ov{C^{\xi, \xi}_n}(x, y) &= \ov{F^{\xi}_n}((\ov{F^{\xi}_{n, 1}})^{\leftarrow}(x), (\ov{F^{\xi}_{n, 2}})^{\leftarrow}(y)). 
\end{align*}
Now, we can define a bootstrap estimate of the tail copula $\Delta(X,Y)$ as follows:
\begin{align}\label{def:empboottailcop}
    \Lambda^{\xi, \xi}_{k, n}(x, y) := \frac{n}{k}\ov{C^{\xi, \xi}_n}\left(\frac{kx}{n}, \frac{ky}{n}\right), \quad x\ge0, y\ge 0.
\end{align}
This allows us to define (multiplier) bootstrap estimates for $\Delta(X,Y)$ and $\eta(X|Y)$ for the random sample $\{(X_i, Y_i)\}_{1\le i \le n}$ with a randomly generated i.i.d sequence $\{\xi_i\}_{1\le i \le n}$ with finite $\E(\xi_1) = \mu$ and $\text{Var} (\xi_i)=\tau^2<\infty$ as follows:
\begin{align}
    \eta^{\xi, \xi}_{k, n} (X|Y) &:= 3 \int_0^1 [\Lambda^{\xi, \xi}_{k, n}(x, 1)]^2 \, \mathrm{d}x, \label{def:booteta}\\
    \Delta^{\xi, \xi}_{k, n} (X, Y) &:= \eta^{\xi, \xi}_{k, n} (X|Y) - \eta^{\xi, \xi}_{k, n} (Y|X). \label{def:bootdelta}
\end{align}

The following is a reformulation of \citet[Theorem 3.4 ]{bucher:dette:2013} which we state below (without proof). This is useful to justify the use of bootstrap estimates for $\eta(X|Y)$ and $\Delta(X,Y)$.
\begin{proposition}\label{prop:bootstrappingEmpiricalProcesses}[\citet[Theorem 3.4]{bucher:dette:2013}] 
    Let the assumptions, conditions ((C1)-(C4)) and the notation of \Cref{prop:uptc_symtoasym} hold. Then, with the multiplier bootstrap estimator of the tail copula as defined in \eqref{def:empboottailcop}, we have, as $n\to\infty, k\to\infty$ with $n/k \to\infty$:
    \begin{align*}
        \alpha^{dm}_{n, k} := \frac{\mu}{\tau} \sqrt{k} (\Lambda^{\xi, \xi}_{k, n} - \widehat{\Lambda}_{k, n}) \weak_\xi \mathbb{G}_{\widehat{\Lambda}} \qquad \text{in $\mathcal{B}_{\infty}(\overline{\mathbb{R}}^2_+)$}.
    \end{align*}
\end{proposition}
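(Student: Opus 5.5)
The plan is to reduce the statement to the lower-tail result of \citet[Theorem 3.4]{bucher:dette:2013} by reflecting the data, rather than redoing the multiplier empirical process theory. Set $(X_i',Y_i')=(-X_i,-Y_i)$. Since $F_X,F_Y$ are continuous, the marginals of $(X',Y')$ are continuous, and $\P(X>x,Y>y)=\P(X'<-x,Y'<-y)$. Hence the upper survival copula $\ov{C}$ of $(X,Y)$ coincides with the copula $C'$ of $(X',Y')$. Consequently the upper tail copula $\Lambda$ of $(X,Y)$ equals the lower tail copula of $(X',Y')$. Conditions (C1)--(C4), with $A(\cdot)$ and the second-order limit $g$ unchanged, are therefore exactly the hypotheses of \citet{bucher:dette:2013} for the reflected sample. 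The limiting field $\bG_{\widehat\Lambda}$ is also the same, since its covariance $\Lambda(x\wedge\tilde x,y\wedge\tilde y)$ and the partial derivatives $\partial_x\Lambda,\partial_y\Lambda$ only involve $\Lambda$.

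The next step is to match the estimators themselves. The weighted survival functions $\ov{F^\xi_n}$, $\ov{F^\xi_{n,1}}$, $\ov{F^\xi_{n,2}}$ are, after reflection, the weighted distribution functions of $(X',Y')$ with strict inequalities instead of non-strict ones. Likewise the generalized inverses $\ov{H}^{\leftarrow}$ defined in the paper correspond to left-continuous quantile functions of the reflected marginals. Almost surely there are no ties, so the only discrepancies are at the jump points. Each jump point changes at most one summand of weight $\xi_i/(n\ov\xi_n)$, so after multiplying by $n/k$ and by $\sqrt{k}$ the discrepancy is at most $\sqrt{k}\max_i\xi_i/(k\,\ov\xi_n)$, uniformly in $(x,y)$. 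Because $\E\xi_1^2<\infty$, we have $\max_i\xi_i=o_P(\sqrt n)$. Together with $\ov\xi_n\to\mu$ and $n/k\to\infty$, the difference between $\Lambda^{\xi,\xi}_{k,n}$ (respectively $\widehat\Lambda_{k,n}$) and the Bücher--Dette lower-tail estimators of the reflected sample is $o_P(1/\sqrt{k})$ uniformly. This holds both unconditionally and conditionally on the data in outer probability. Slutsky's lemma in its conditional form, with bounded Lipschitz distance, then transfers the conditional weak convergence of $\frac{\mu}{\tau}\sqrt{k}(\Lambda^{\xi,\xi}_{k,n}-\widehat\Lambda_{k,n})$ to $\bG_{\widehat\Lambda}$ in $\mathcal B_\infty(\overline{\R}^2_+)$.

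If a self-contained argument is preferred, I would follow the Bücher--Dette route directly in three steps. First, prove conditional convergence of the multiplier tail empirical process $\sqrt{k}\,\frac{n}{k}\frac1n\sum_i(\xi_i/\mu-1)\bI(\ov F_X(X_i)<kx/n,\ \ov F_Y(Y_i)<ky/n)$, using the multiplier CLT for the VC class of quadrants on the tail scale (triangular-array version, e.g.\ \cite[Thm.~2.9.6/2.11.x]{}-type results). Second, handle the normalisation by $\ov\xi_n$ through a linearisation. Third, pass through the random inverses $(\ov{F^\xi_{n,j}})^{\leftarrow}$ by a Vervaat-type argument, with a Taylor expansion using (C3) to produce the terms $-\partial_x\Lambda\,\bG_{\tilde\Lambda}(x,\infty)-\partial_y\Lambda\,\bG_{\tilde\Lambda}(\infty,y)$. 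Condition (C4), together with (C2), kills the bias.

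I expect the main obstacle to be this last step: the partial derivatives of $\Lambda$ need not be bounded near the axes, so the expansion is not uniform there. I would handle it as Bücher--Dette do, by proving convergence on sets bounded away from the axes and controlling the remaining strips via the modulus of continuity of the tail empirical process. This is why the reflection argument, which inherits all of that work, is the route I would actually take.
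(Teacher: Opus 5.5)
The paper gives no argument for this proposition. It quotes \citet[Theorem 3.4]{bucher:dette:2013}, which is a lower-tail statement, and simply asserts the upper-tail version. So your route is the same as the paper's, and your reflection $(X_i',Y_i')=(-X_i,-Y_i)$ is the natural way to make that transfer rigorous. Most of what you check is correct:
\begin{itemize}
\item $\ov{C}$ is the copula of $(X',Y')$, and the limit field depends only on $\Lambda$.
\item The inverse $(\ov{F^{\xi}_{n,1}})^{\leftarrow}(kx/n)$ and the left-continuous weighted quantile of the reflected sample pick out the same order statistic $X_{(m)}$ (decreasing order).
\item Hence the two estimators differ only through $\bI(X_i\ge a,Y_i\ge b)$ versus $\bI(X_i>a,Y_i>b)$, at no more than two boundary observations.
\item For the unweighted $\widehat{\Lambda}_{k,n}$ this error is deterministically at most $2/k$, which is negligible after multiplying by $\sqrt{k}$.
\end{itemize}

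The step that fails as written is the bound for the weighted estimator. You reduce the error to $\max_{i\le n}\xi_i/(\sqrt{k}\,\ov{\xi}_n)$ and then invoke $\max_{i\le n}\xi_i=o_P(\sqrt n)$ together with $n/k\to\infty$. That only yields $o_P(\sqrt{n/k})$, and $n/k\to\infty$ works against you, not for you. For the standard exponential multipliers used in the paper, $\max_{i\le n}\xi_i$ is of order $\log n$. With an admissible choice such as $k\approx\log n$, your bound therefore diverges.

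The repair is to localize. For $(x,y)\in[0,T]^2$, the boundary index in the $X$-margin is the smallest $m$ with $\sum_{j\le m}\xi^{X}_{[j]}\ge kx\,\ov{\xi}_n$, where $\xi^{X}_{[j]}$ is the multiplier attached to the $j$-th largest $X$. Since $\xi_1>0$ has mean $\mu>0$, with probability tending to one $m\le\lceil 2kT\rceil$, and likewise in $Y$. The multipliers are independent of the data, so those attached to the top $\lceil 2kT\rceil$ order statistics of each margin are, given the data, i.i.d. copies of $\xi_1$. The maximum of $O(k)$ such finite-variance variables is $o_P(\sqrt{k})$. This gives a discrepancy of order $o_P(1/\sqrt{k})$ uniformly on each $T_i$, and your conditional Slutsky step (as in \Cref{lemma:condweaklemma}) then goes through.

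A side remark on your fallback sketch: you misidentify the obstacle near the axes. Partial derivatives of a tail copula always lie in $[0,1]$, because each $t\ov{C}(x/t,y/t)$ is nondecreasing and $1$-Lipschitz in each argument. The real difficulty there is the possible discontinuity of $\partial_x\Lambda$ and $\partial_y\Lambda$, not unboundedness.
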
 
The notation $\weak_\xi$ denotes conditional weak convergence for any function  $$H_n=H_n(((X_1,Y_1), \ldots, (X_n,Y_n)),(\xi_1, \ldots, \xi_n))$$ given the data $\{(X_i,Y_i)\}_{1\le i \le n}$ as defined in \cite{Kosorok2008}; see \Cref{app:proofsec4} for more details. Finally, we state the main result which allows us to conduct the tests discussed previously using the multiplier bootstrap method.

\begin{theorem} \label{thm:bootstrappingCLT}
    Under the assumptions, notation, and conditions of Proposition \ref{prop:bootstrappingEmpiricalProcesses}, we have, as $n\to\infty, k\to\infty$ with $n/k \to\infty$:
    \begin{align}
        \frac{\mu}{\tau} \sqrt{k} (\eta^{\xi, \xi}_{k, n}(X|Y) - \widehat{\eta}_{k, n}(X|Y)) &\weak_\xi 6\int\limits_0^1 \Lambda(u, 1) \mathbb{G}_{\widehat{\Lambda}}(u, 1) \, \mathrm{d}u, \label{eq:bootasynormeta}\\
        \frac{\mu}{\tau} \sqrt{k} (\Delta^{\xi, \xi}_{k, n}(X, Y) - \Delta_{k, n}(X, Y)) &\weak_\xi 6 \int\limits_0^1 \left[\Lambda(u, 1)\mathbb{G}_{\widehat{\Lambda}}(u, 1) - \Lambda(1, u)\mathbb{G}_{\widehat{\Lambda}}(1, u)\right] \mathrm{d}u.  \label{eq:bootasynormdelta}
    \end{align}
\end{theorem}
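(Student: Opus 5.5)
Both limits follow from \Cref{prop:bootstrappingEmpiricalProcesses} by a conditional functional delta method applied to a simple quadratic functional. Define $\Phi:\ell^\infty([0,1])\to\R$ by $\Phi(f)=3\int_0^1 f(u)^2\,\mathrm du$. Then $\eta^{\xi,\xi}_{k,n}(X|Y)=\Phi(\Lambda^{\xi,\xi}_{k,n}(\cdot,1))$ by \eqref{def:booteta}. The centering $\widehat\eta_{k,n}(X|Y)$ is read as $\Phi(\widehat\Lambda_{k,n}(\cdot,1))$, the empirical tail-copula estimator suggested by \Cref{prop:eta_is_intutc}.

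If instead one insists on the rank statistic $\eta_{k,n}(X|Y)$ from \eqref{eqdef:eta_data}, an extra step is needed. Writing $\widehat\Lambda_{k,n}(u,1)$ as a step function in $u$, with jumps at multiples of $1/k$, one checks that $\eta_{k,n}(X|Y)-\Phi(\widehat\Lambda_{k,n}(\cdot,1))=O_P(1/k)$. This step should be the one already carried out in the proof of \Cref{thm:asymnorm}. Since $\sqrt{k}\cdot O(1/k)\to0$, the two centerings are interchangeable.

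The map $\Phi$ is Hadamard differentiable at every bounded $f$, with derivative $\Phi'_f(h)=6\int_0^1 f(u)h(u)\,\mathrm du$. The key algebraic step is the exact expansion
\begin{align*}
\tfrac{\mu}{\tau}\sqrt{k}\bigl(\Phi(\Lambda^{\xi,\xi}_{k,n}(\cdot,1))-\Phi(\widehat\Lambda_{k,n}(\cdot,1))\bigr)
= 6\int_0^1 \widehat\Lambda_{k,n}(u,1)\,\alpha^{dm}_{n,k}(u,1)\,\mathrm du + \tfrac{3\tau}{\mu\sqrt{k}}\int_0^1 \alpha^{dm}_{n,k}(u,1)^2\,\mathrm du .
\end{align*}
The remainder term is $O_P(k^{-1/2})$. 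This holds because $\sup_{u\in[0,1]}|\alpha^{dm}_{n,k}(u,1)|$ is bounded in probability: it converges conditionally in $\mathcal B_\infty$ by \Cref{prop:bootstrappingEmpiricalProcesses}, which gives unconditional tightness, and hence conditional negligibility in outer probability.

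For the main term, split $\widehat\Lambda_{k,n}=\Lambda+(\widehat\Lambda_{k,n}-\Lambda)$. By \Cref{thm:consistency}'s ingredients, or by \Cref{thm:bucherdette}, $\sup_{u\le1}|\widehat\Lambda_{k,n}(u,1)-\Lambda(u,1)|\to0$ in probability. The cross term is therefore bounded by this sup times $\sup|\alpha^{dm}_{n,k}|$, which is $o_P(1)$. We are left with $6\int_0^1\Lambda(u,1)\alpha^{dm}_{n,k}(u,1)\,\mathrm du$. This is a fixed continuous linear functional of $\alpha^{dm}_{n,k}$, since $\Lambda(\cdot,1)\le1$ is bounded. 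The conditional continuous mapping theorem (\cite{Kosorok2008}, in the bounded-Lipschitz formulation of $\weak_\xi$) applied to \Cref{prop:bootstrappingEmpiricalProcesses} then yields \eqref{eq:bootasynormeta}.

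To make the $o_P$ terms compatible with $\weak_\xi$, I would use the standard fact that if $Z_n\weak_\xi Z$ and $R_n\to0$ in outer probability jointly in data and multipliers, then $Z_n+R_n\weak_\xi Z$. This is checked via the bounded-Lipschitz metric: $|h(Z_n+R_n)-h(Z_n)|\le \min(2,|R_n|)$. For measurability, the integrals are of càdlàg-type processes, so no additional issues arise beyond those handled in \cite{bucher:dette:2013}.

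For \eqref{eq:bootasynormdelta}, apply the same argument to $\Phi(\Lambda^{\xi,\xi}_{k,n}(1,\cdot))$, using symmetric arguments over $[0,1]$ in the second coordinate. Subtract the two expansions. The remainders are again negligible, and the leading term is the single continuous linear functional
\[
h\mapsto 6\int_0^1[\Lambda(u,1)h(u,1)-\Lambda(1,u)h(1,u)]\,\mathrm du
\]
of the same process $\alpha^{dm}_{n,k}$. Joint conditional convergence is therefore automatic, and the continuous mapping theorem gives the stated limit.

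The main obstacle I expect is the bookkeeping linking the rank-based $\eta_{k,n}$ (and $\Delta_{k,n}$) with $\Phi(\widehat\Lambda_{k,n})$ at rate $o(k^{-1/2})$. The first thing to try is to compute $\widehat\Lambda_{k,n}(j/k,1)=\frac1k\#\{i\le k:\rho_i\le j\}$ up to the boundary conventions. Then $\int_0^1\widehat\Lambda^2=\frac1{k^3}\sum_{j}\sum_{i,i'}\bI(\max(\rho_i,\rho_{i'})\le j)$, which reproduces the double sum in \eqref{eqdef:eta_data} up to $O(1/k)$ edge terms.
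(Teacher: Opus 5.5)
Your proposal is correct and follows essentially the paper's route. Both arguments use the exact difference-of-squares expansion and the split $\widehat\Lambda_{k,n}=\Lambda+(\widehat\Lambda_{k,n}-\Lambda)$; both then apply conditional continuous mapping (Kosorok's Theorem 10.8) to the linear functional $h\mapsto 6\int_0^1\Lambda(u,1)h(u,1)\,\mathrm du$ (or its $\Delta$ analogue) and absorb the remainders with a perturbation lemma. Your version of that lemma, via truncation $\min(2,|R_n|)$ and tightness of $\sup_{u\le1}|\alpha^{dm}_{n,k}(u,1)|$, is slightly cleaner than the paper's Lemma~\ref{lemma:condweaklemma}, whose hypothesis $\E_\xi\|U_n-V_n\|\stp0$ the paper asserts without verification. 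Your $O(1/k)$ centering check is harmless either way, since the proof of \Cref{thm:consistency} already identifies $\eta_{k,n}(X|Y)$ with $3\int_0^1[\widehat\Lambda_{k,n}(u,1)]^2\,\mathrm du$.
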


Of course, the result tells us that the bootstrap estimators are asymptotically normal with a potentially unknown variance, which are the same as the limit distributions in Theorems \ref{thm:asymnorm} and \ref{thm:teststatdist}).  Fortunately, since we can create multiple bootstrap estimates by generating fresh i.i.d. samples of $\{\xi_i\}_{1\le i \le n}$, we can estimate the variance of the limit normal distribution quite accurately. The algorithms for these tests are discussed hence.

\subsection{Algorithm for the testing procedure}\label{subsec:algo}
We can write the multiplier bootstrap estimators given in \eqref{def:booteta} and \eqref{def:bootdelta} also in terms of bootstrapping ranks which makes the computation of the integrals a little more manageable, see \Cref{app:proofsec4} for details on this. For the rest of this section, we discuss the algorithm to perform the tests discussed in \Cref{subsec:multbootasym}.

Let $(X_1, Y_1), \ldots, (X_n,Y_n)$ be i.i.d. samples for which we are interested in testing if there is any tail association at all and if so, whether the association is symmetric.
This can be done by estimating $\eta(X|Y), \eta(Y|X)$ and $\Delta(X,Y)$ using $\eta_{k,n}(X|Y), \eta_{k,n}(Y|X)$ and $\Delta_{k,n}(X,Y)$, respectively, and then performing the following tests:
 \begin{enumerate}[(1)]
 \item Test for $\mathcal{H}_0:\eta(X|Y)=0$ vs. $\mathcal{H}_1:\eta(X|Y)>0$ and $\mathcal{H}_0:\eta(Y|X)=0$ vs. $\mathcal{H}_1:\eta(Y|X)>0$. If either of the tests results in rejection, then it indicates presence of tail association, and hence we go ahead and perform the following test of asymmetry in tail dependence. Note that since our test statistics depend on $k$, the tests are performed for a set of values of $k$, given by $\mathcal{K}$ and a rejection would mean rejection for a significant proportion of $\mathcal{K}$ (say, 75\% or 80\% and above).  
 \item Test $\mathcal{H}_0:\Delta(X|Y)=0$ vs. $\mathcal{H}_1:\Delta(X|Y)\neq0$. The rejection criterion is similar to the tests above.
 \end{enumerate} 

The asymptotic behavior of the test statistics requires both $n$ and $k$ to be large, as well as $n/k$ to be large. Since we do not have a value of $k$ specified, we perform the test for a set of values of $k$, given by $\mathcal{K}$. The set $\mathcal{K}$ is supposed to contain values of $k$ satisfying the previous conditions, in practice we take values running between 5\% to 20\% of the value of $n$. We reject $\mathcal{H}_0$ if our tests  reject $\mathcal{H}_0$ for  a significant set of values of $k\in \mathcal{K}$. First, we state the algorithm for testing $\mathcal{H}_0:\eta(X|Y)=0$ vs. $\mathcal{H}_1:\eta(X|Y)\neq0$.

\begin{algorithm}
\caption{Hypothesis test for $\mathcal{H}_0:\eta(X|Y)=0$ vs. $\mathcal{H}_1:\eta(X|Y)>0$}\label{algo:eta}
      \begin{algorithmic}
        \STATE Fix a sequence of values for $k$, given by  the set $\mathcal{K}$ (say, within $5\%$ to $20\%$ of the value of $n$).
        \STATE Initialize counts $m_k=0$ for $k\in \mathcal{K}$.
        \STATE Fix number of bootstraps $B$, and finite values of $\mu, \tau^2$. 
        \FOR {$b$ = $1$ to $B$}
         \STATE Generate an i.i.d  sequence $\xi_1^{(b)}, \ldots, \xi_n^{b}$ with $\E(\xi_1) = \mu, \text{Var} (\xi_1) = \tau^2.$    
         \WHILE {$k\in \mathcal{K}$}
               \STATE Compute $\eta_{k,n}^{\xi^{(b)}, \xi^{(b)}} (X|Y)$
                \STATE Update $m_k \leftarrow m_k+1$ if $\eta_{k,n}^{\xi^{(b)}, \xi^{(b)}} (X|Y) - \eta_{k,n} (X|Y) > \eta_{k,n} (X|Y)$ 
         \ENDWHILE
   \ENDFOR
         \STATE Report p-values $p_k=\frac{m_k}B$ for  $k \in \mathcal{K}$
    \end{algorithmic}
\end{algorithm}

For a fixed level $\alpha$ for the tests, the p-values will indicate whether we reject the hypotheses or not. We may also compute $(1-\alpha)100\%$ confidence intervals by  estimating (from the output of \Cref{algo:eta}) a bootstrapped standard deviation for the limit normal distribution in \eqref{eq:bootasynormeta} as

\begin{align}
s_{\eta_{k,n}}^{boot}=\sqrt{\frac1{B-1}\sum_{i=1}^B (\eta_{k,n}^{\xi^{(b)}, \xi^{(b)}} (X|Y)- \overline{\eta_{k,n}^{boot}})^2}  \label{eq:etasdboot}
\end{align}
where  $\overline{\eta_{k,n}^{boot}} = \frac1B\sum_{i=1}^B \eta_{k,n}^{\xi^{(b)}, \xi^{(b)}} (X|Y).$ Then a $(1-\alpha)100\%$ two-sided confidence interval for $\eta(X|Y)$ for any $k\in \mathcal{K}$ is given by
\begin{align}\label{eq:bootetaCI}
\left[ \eta_{k,n} (X|Y) - z_{\alpha/2}\frac{s_{\eta_{k,n}}^{boot}}{\sqrt{k}},  \eta_{k,n} (X|Y) + z_{\alpha/2}\frac{s_{\eta_{k,n}}^{boot}}{\sqrt{k}} \right]
\end{align}
where $z_{\beta}$ is the $(1-\beta)$-quantile of the standard normal distribution, i.e., $\P(Z>z_{\beta})=\beta$  where $Z\sim \mathcal{N}(0,1)$.

Similarly, we can state the algorithm for testing $\mathcal{H}_0:\Delta(X,Y)=0$ vs. $\mathcal{H}_1:\Delta(X,Y)\neq0$ as given below. 

\begin{algorithm}
\caption{Hypothesis test for $\mathcal{H}_0:\Delta(X,Y)=0$ vs. $\mathcal{H}_1:\Delta(X,Y)\neq0$}\label{algo:delta}
      \begin{algorithmic}
        \STATE Fix a sequence of values for $k$, given by  the set $\mathcal{K}$ (say, within $5\%$ to $20\%$ of the value of $n$).
        \STATE Initialize counts $m_k=0$ for $k\in \mathcal{K}$.
        \STATE Fix number of bootstraps $B$, and finite values of $\mu, \tau^2$. 
        \FOR {$b$ = $1$ to $B$}
         \STATE Generate an i.i.d  sequence $\xi_1^{(b)}, \ldots, \xi_n^{b}$ with $\E(\xi_1) = \mu, \text{Var} (\xi_1) = \tau^2.$    
         \WHILE {$k\in \mathcal{K}$}
               \STATE Compute $\Delta_{k,n}^{\xi^{(b)}, \xi^{(b)}} (X,Y)$
                \STATE Update $m_k \leftarrow m_k+1$ if $|\Delta_{k,n}^{\xi^{(b)}, \xi^{(b)}} (X,Y)-{\Delta_{k,n} (X,Y)}| > |\Delta_{k,n} (X,Y)|$
         \ENDWHILE
   \ENDFOR
         \STATE Report p-values $p_k=\frac{m_k}B$ for  $k \in \mathcal{K}$
    \end{algorithmic}
\end{algorithm}
We may also compute $(1-\alpha)100\%$ confidence intervals for $\Delta(X,Y)$ following the same method as that for $\eta(X|Y)$ as:
\begin{align}\label{eq:bootdeltaCI}
\left[ \Delta_{k,n} (X,Y) - z_{\alpha/2}\frac{s_{\Delta_{k,n}}^{boot}}{\sqrt{k}},  \Delta_{k,n} (X,Y) + z_{\alpha/2}\frac{s_{\Delta_{k,n}}^{boot}}{\sqrt{k}} \right]
\end{align}
for any $k\in \mathcal{K}$ where
\begin{align}
s_{\Delta_{k,n}}^{boot}=\sqrt{\frac1{B-1}\sum_{i=1}^B (\Delta_{k,n}^{\xi^{(b)}, \xi^{(b)}} (X,Y)- \overline{\Delta_{k,n}^{boot}})^2}  \label{eq:deltasdboot}
\end{align}
with $\overline{\Delta_{k,n}^{boot}} = \frac1B\sum_{i=1}^B \Delta_{k,n}^{\xi^{(b)}, \xi^{(b)}} (X,Y).$

For convenience, we often fix the values of $\mu=\tau=1$ and the i.i.d. sample of $\{\xi_i^{(b)}\}_{1\le i \le n}$ to be generated from an exponential distribution with mean 1.

\section{Data Analysis} \label{sec:data}

In this section, we investigate finite-sample performance of the (multiplier-bootstrap) testing procedure developed in Section \ref{sec:testasy}. The code implementation is conducted using the Python programming language and the associated codes  are available in \url{https://github.com/LiuXiangyu99/ETA}. 

We have particularly concentrated on testing for $\mathcal{H}_0:\Delta(X,Y)=0$ vs $\mathcal{H}_1:\Delta(X,Y)\neq 0$, i.e., a test of asymmetry in bivariate upper tail association; besides this, we also construct associated confidence bounds as a bi-product, see \Cref{algo:delta}. This can be modified to test $\mathcal{H}_0: \Delta(X,Y) = \Delta_0$ vs $\mathcal{H}_1: \Delta(X,Y) \neq \Delta_0$ for any $\Delta_0\in [-1,1]$.  Similarly, finite-sample tests for testing $\mathcal{H}_0:\eta(X|Y) =\eta_0$ along with confidence bound construction can be implemented for particular  values of $\eta_0\in [0,1]$  by appropriately modifying the codes provided, \Cref{algo:eta} gives this for $\eta_0=0$.

\begin{figure}[t]
    \centering
    \includegraphics[width=0.8\textwidth]{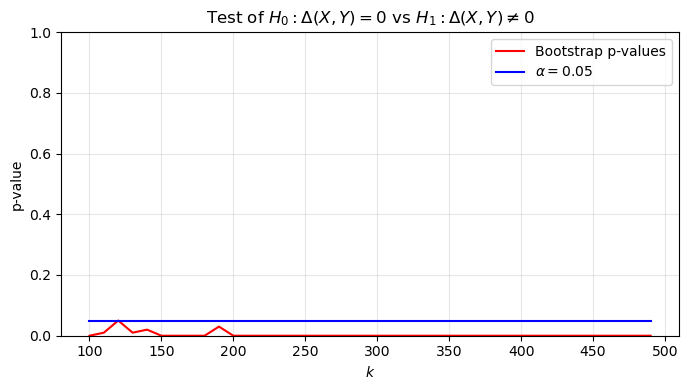}
    \caption{
    Bootstrap p-values using $B=100$ multiplier bootstrap samples for simulated data from \Cref{subsec:datasimul} for testing $\mathcal{H}_0:\Delta(X,Y)=0$ versus $\mathcal{H}_1:\Delta(X,Y)\neq 0$. The red trajectory is p-values across different choice of the intermediate sequence $k\in\mathcal{K}=\{100,110, \ldots, 500\}$. 
    The blue line marks the reference significance level $\alpha=0.05$.
    }
    \label{fig:asymmetrized-gumbel-delta-pvalues}
\end{figure}

\subsection{Simulated data}\label{subsec:datasimul}
First, we implement our testing method on a simulated dataset. Consider the bivariate distribution as discussed in Example \ref{ex:gumbel}, having uniform marginals with asymmetric Gumbel copula $C_{\alpha,\beta;\delta}(\cdot,\cdot)$, where $\alpha,\beta\in [0,1]$ and $\delta\in [1,\infty)$ . Fixing $\delta=2$, we  observe that the maximum value of $\Delta_{\alpha,\beta;2}(X,Y)$ is obtained when  $\alpha=1.0, \beta=0.5$, and in this case $\Delta_{1,0.5;2}(X,Y) \approx 0.068$. Although it is nonzero, the value of  $\Delta_{1,0.5;2}(X,Y)$ is quite close to 0, and thus provides a reasonable scenario for testing the hypothesis that $\mathcal{H}_0:\Delta_{1,0.5;2}(X,Y)=0$ vs. $\mathcal{H}_1:\Delta_{1,0.5;2}(X,Y)\neq0$ .

Here we generate data from this particular bivariate distribution fixing the parameters $\alpha=1, \beta=0.5, \delta=2$. 
We sample $n=5000$ data points from this distribution. A set of values of the intermediate sequence $k \in \mathcal{K}=\{100,110, \ldots, 500\}$ is chosen. In extreme value theory, it is customary to compute the relevant extreme value estimator (statistic) for a few values of $k$ and assess whether they stabilize at some value, ideally the population parameter value $\Delta_{1,0.5;0.2}=0.068$ which is being estimated. A reason for this is that in our asymptotic results, the value of  $k$ also increases along with $n$ without necessarily having a direct functional relationship. We first estimate $\Delta_{k,n}(X,Y)$ and then create $B=100$ multiplier bootstrap samples as discussed in \Cref{sec:testasy}. The p-values are plotted for the various values of $k$ in Figure \ref{fig:asymmetrized-gumbel-delta-pvalues}. The plot clearly indicates that the p-values are less than $0.05$ for almost all of the $k$-values considered, and so we may comfortably reject the null hypothesis that $\Delta=0$.

\subsection{Cryptocurrency returns data}\label{subsec:datacrypto}

\begin{figure}[t]
    \centering
    \includegraphics[width=0.85\textwidth]{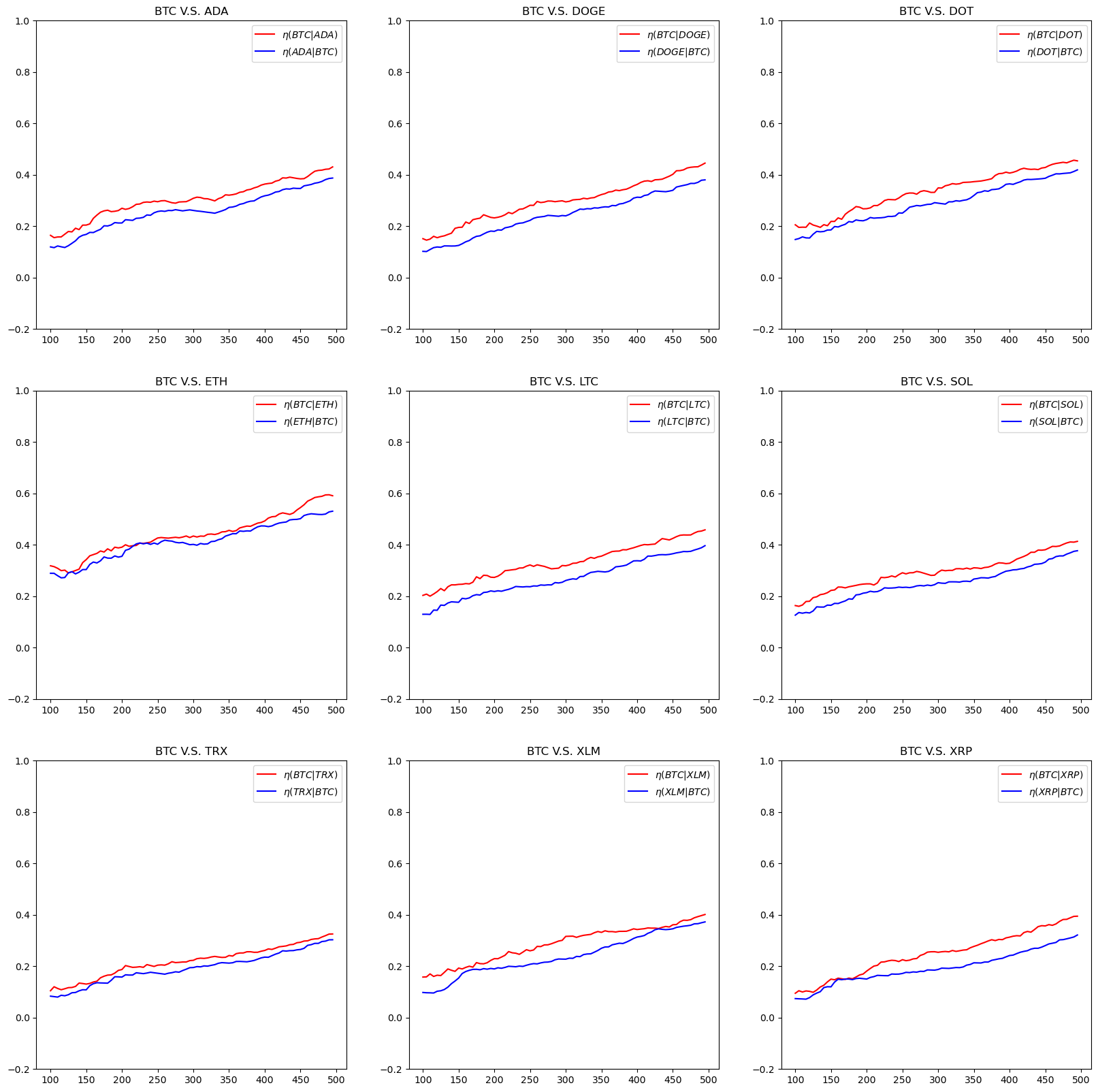} 
     \caption{Cryptocurrency data from \Cref{subsec:datacrypto}: Plot of $\eta_{k,n}(X|Y^{C})$ (red curve) and $\eta(Y^{C}|X)$ (blue curve) for the \textbf{positive} log-returns with different choices of $k\in \mathcal{K}$.
        }\label{fig:eta_compare_pos}
\end{figure}

It has been well-documented that the prices of many highly traded cryptocurrencies are often subject to extreme fluctuations \cite{baur:dimpfl:2021, gong:huser:2022, klein:thu:walther:2018}. The goal of our data analysis is to investigate the asymmetry in tail dependence of cryptocurrency price fluctuations.  More precisely, our hypothesis is that  extreme price increases (or decreases) of certain cryptocurrencies (for example, Bitcoin or Ethereum) may exert a stronger influence on extreme price increases (or decreases) in other cryptocurrencies. We select a representative subset of 10 cryptocurrencies that are among the ones with high market capitalization, namely, Bitcoin (BTC), Litecoin (LTC), XRP, Dogecoin (DOGE), Stellar (XLM), Ethereum (ETH), TRON (TRX), Cardano (ADA), Solana (SOL) and Polkadot (DOT). The data set is obtained using the R package \texttt{crypto2} \citep{R-crypto2} and the complete implementation of our methodology is publicly available at \url{https://github.com/LiuXiangyu99/ETA}. 

\begin{figure}[t]
    \centering
    \includegraphics[width=0.85\textwidth]{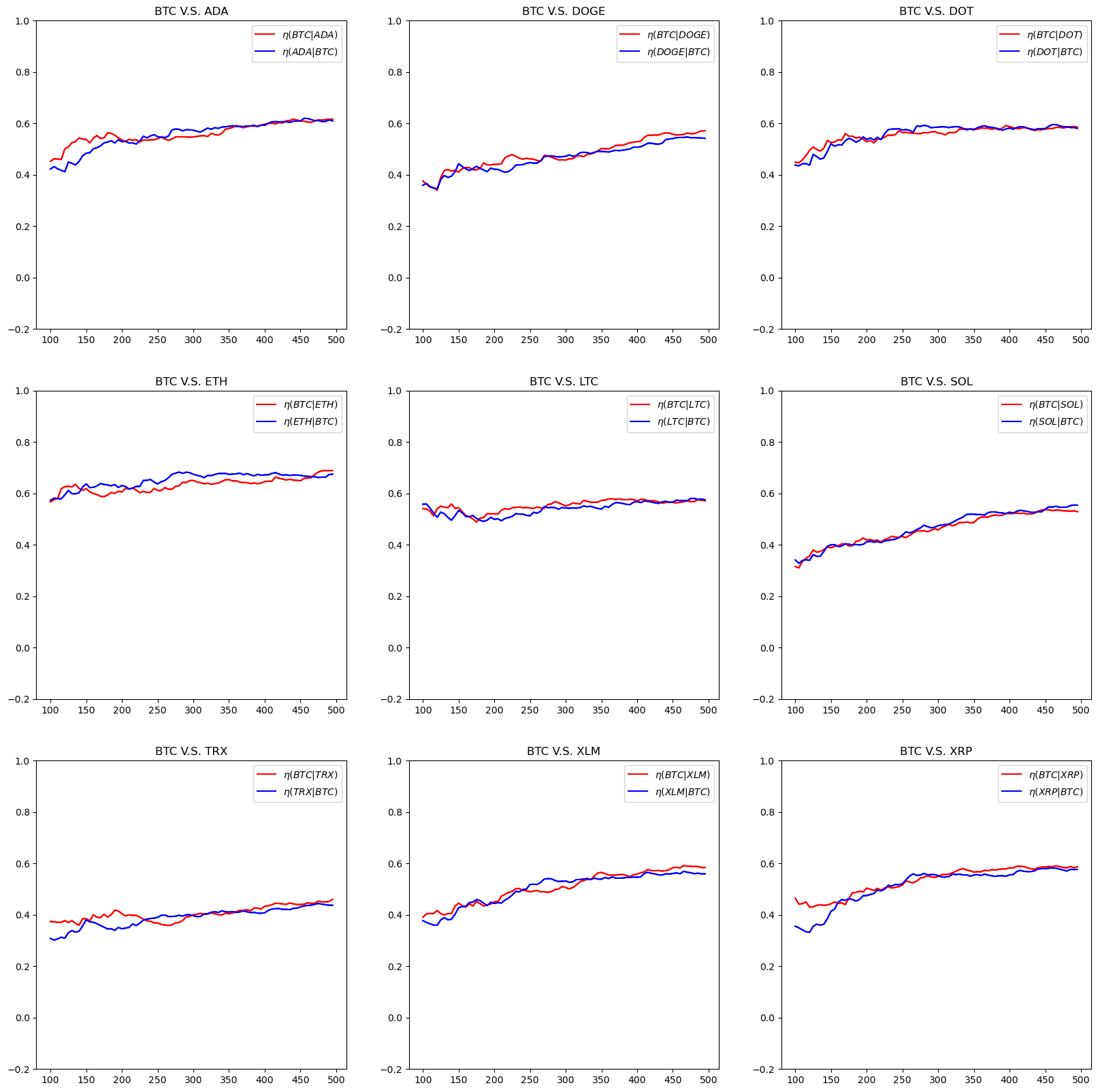} 
        \caption{Cryptocurrency data from \Cref{subsec:datacrypto}: Plot of $\eta_{k,n}(X^{-}|Y^{-C})$ (red curve) and $\eta_{k,n}(Y^{-C}, X^{-})$ (blue curve) for the \textbf{negative} log-returns with different choices of $k\in \mathcal{K}$.   }\label{fig:eta_compare_neg}
\end{figure}
\begin{figure}[t]
    \centering
    \includegraphics[width=0.85\textwidth]{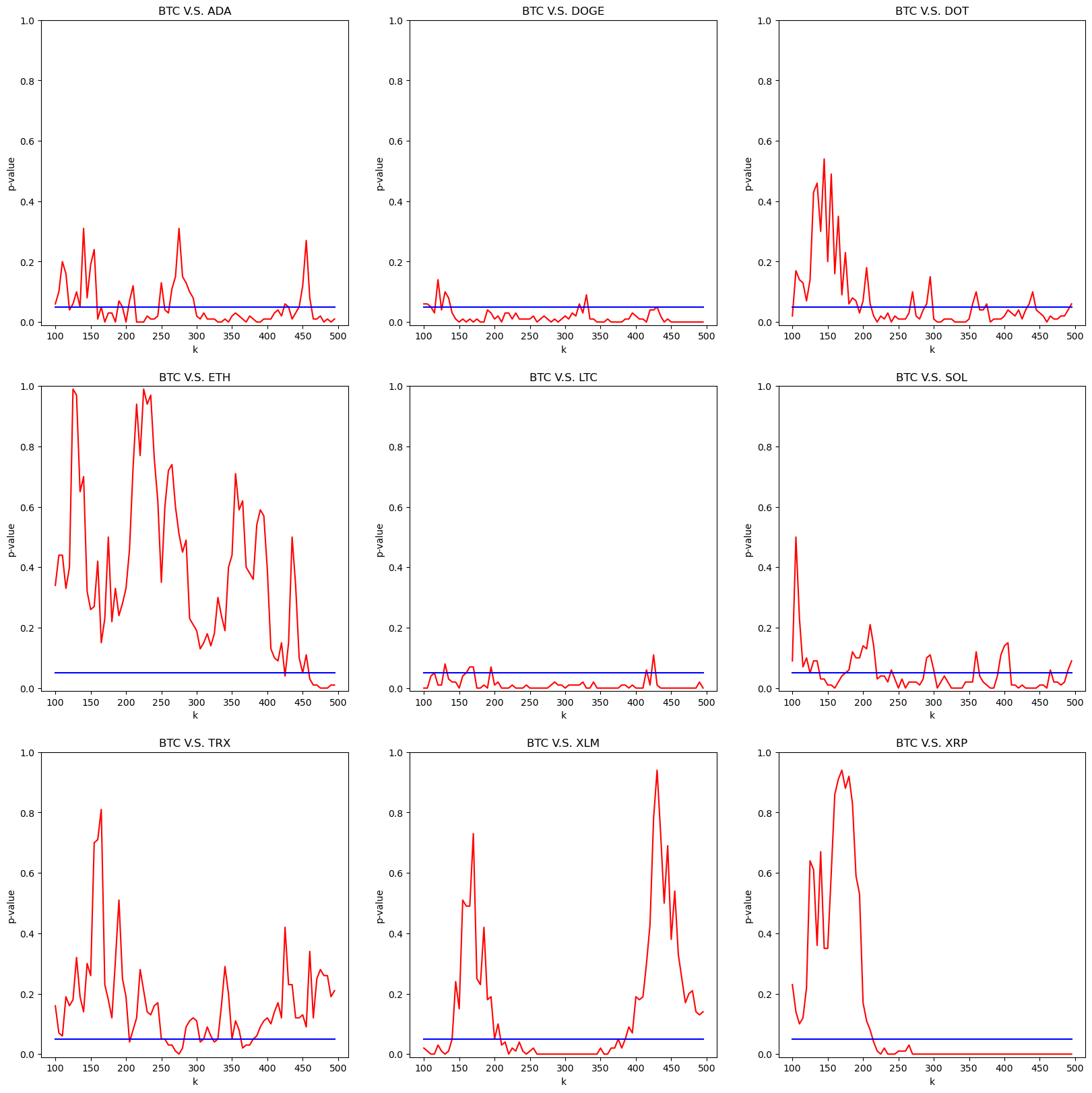} 
    \caption{Cryptocurrency data example: the p-values for testing $\mathcal{H}_0:\Delta(X,Y^C)=0$ versus $\mathcal{H}_1:\Delta(X,Y^C)\neq 0$ in the right tail (extreme positive returns). The red curve plots the p-values as a function of the intermediate sequence $k\in \mathcal{K}$, and the blue horizontal line indicates the reference significance level $\alpha=0.05$.}
    \label{fig:close_pos_boot_pvalue}
\end{figure}

\begin{figure}[t]
    \centering
    \includegraphics[width=0.85\textwidth]{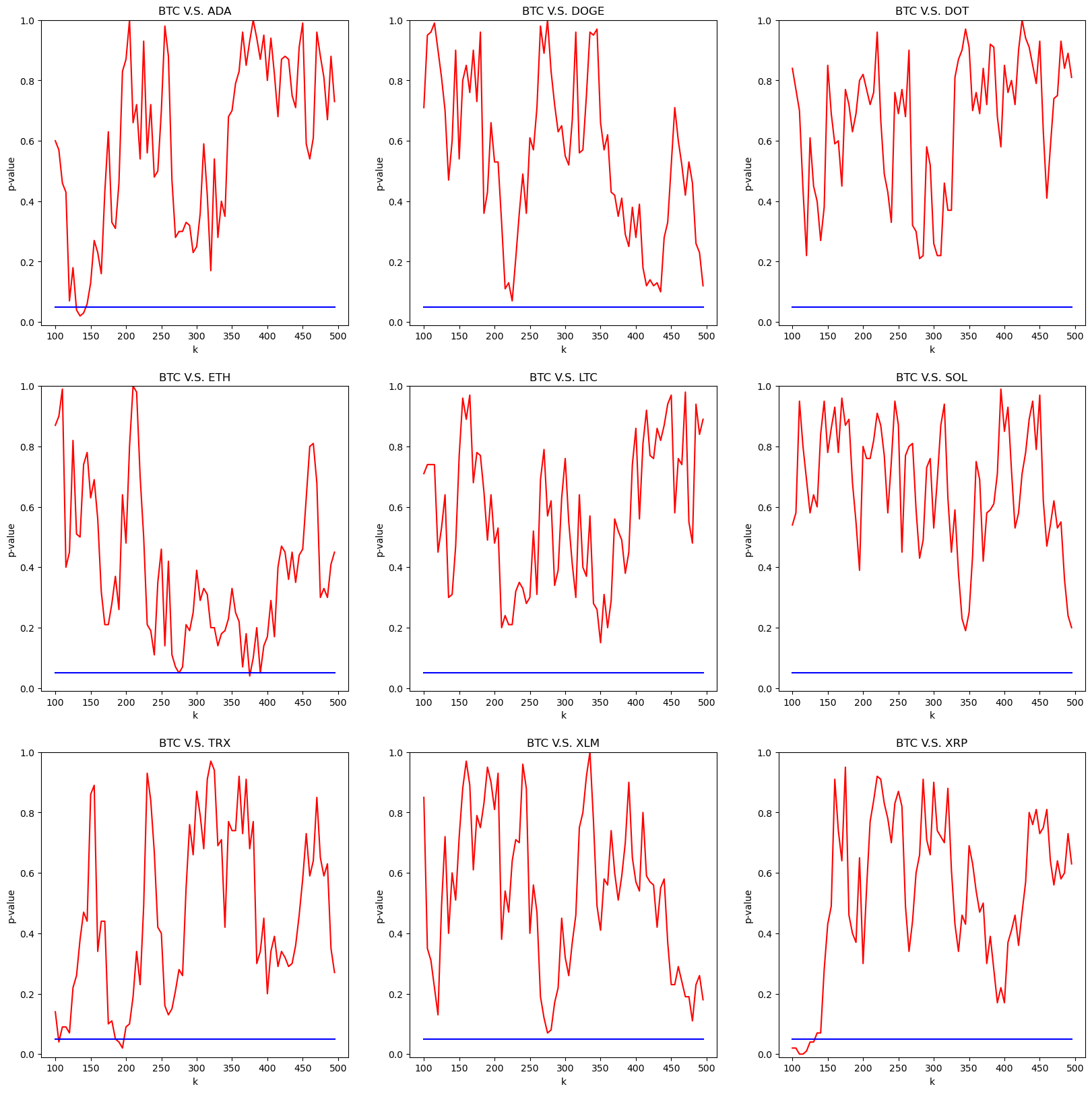} 
    \caption{Cryptocurrency data example: the p-values for testing $\mathcal{H}_0:\Delta(X^{-},Y^{-C})=0$ versus $\mathcal{H}_1:\Delta(X^{-},Y^{-C})\neq 0$ in the left tail (extreme negative returns). The red curve plots the p-values as a function of the intermediate sequence $k\in \mathcal{K}$, and the blue horizontal line indicates the reference significance level $\alpha=0.05$.}
    \label{fig:close_neg_boot_pvalue}
\end{figure}
We have used five years of data, from 1st January, 2021 to 31st December, 2025, on log-returns of closing prices for each of the ten cryptocurrencies mentioned above which represents 1826 days resulting in $n=1825$ log-returns for each cryptocurrency. Financial time series are not necessarily i.i.d., although the test that we created is meant for i.i.d. data. Computing autocorrelations across different lags for each time series, we observe an absence of significant autocorrelation across the lags suggesting that the log-return series indeed resemble an i.i.d. behavior and such an assumption may be reasonable. We have included the autocorrelation plot for each of the time  series in Figure \ref{fig:acf_all}  (\Cref{app:datacrypto}). 

The right tail of returns indicating positive or increasing returns and the left tail of returns indicating negative or decreasing returns are investigated separately. There are $\binom{10}{2}=45$ possible unique pairwise comparisons that can be performed in this case for both positive and negative returns. We hypothesize that Bitcoin (BTC) exerts a higher influence on the other cryptocurrency values for both positive returns and negative returns. Hence we consider 9 pairs of data $(X_1, Y_1^\text{C}), \ldots, (X_n,Y_n^\text{C})$ where $X_i$ indicates (positive) log-returns from Bitcoin on the $i$-th day and $Y_i^\text{C}$ denotes the corresponding log-return of cryptocurrency $\text{C}$ on the same day, with $i=1,\ldots, n=1825$ and $C$ indicating one of the other nine cryptocurrencies, i.e., $\text{C}\in \{\text{ADA, DOGE, DOT, ETH, LTC, SOL, TRX, XLM, XRP}\}$.  We perform the same analysis for negative log-returns of Bitcoin against the log-negative returns of the other nine cryptocurrencies; here the data set for negative log-returns is denoted $(X_1^{-}, Y_1^{-\text{C})}:=(-X_1, -Y_1^\text{C}), \ldots, (X_n^{-}, Y_n^{-\text{C})}:=(-X_n,-Y_n^\text{C})$ for the different values of $C$.

First we compute $\eta(X|Y^C)$ and $\eta(Y^C|X)$ to check if there is tail association between the positive log-returns. We fix the intermediate sequence $$k\in \mathcal{K}=\{100,110, \ldots, 500\}.$$  We then plot the values of $\eta(X|Y^C)$ (in red) and $\eta(Y^C|X)$ (in blue) varying $k\in \mathcal{K}$ for the nine choices of $C$ in \Cref{fig:eta_compare_pos}. The tail dependence between the positive log-returns is evident, and the red curve seems to be above the blue curve most of the time, indicating an asymmetric relationship with Bitcoin having a higher influence on the other cryptocurrency.  We do the same computation for $\eta(X^-|Y^{-C})$ and $\eta(Y^{-C}|X^-)$ to check tail association for negative log-returns and the plot of $\eta(X^-|Y^{-C})$ (in red) and $\eta(Y^{-C}|X^-)$ (in blue) for $k\in \mathcal{K}$ for the nine choices of $C$ is given in \Cref{fig:eta_compare_neg}. In this case the tail dependence is still evident (non-zero value), but the blue and red curves seem to be quite close. Please see Figures \ref{fig:etaXY_boot_pos}-\ref{fig:etaYX_boot_neg} in \Cref{app:datacrypto} where we have 95\% confidence bounds around the estimates of $\eta$ each of which are far away from 0, indicating tail dependence (at least at 5\% level).  Thus, we encounter a scenario where we have tail dependence between variables but would like to test whether the influence is asymmetric or not; we carry this out next.

We perform nine tests of hypotheses for asymmetry in positive log-returns: $$\mathcal{H}_0: \Delta(X,Y^\text{C})=0\quad \text{vs.} \quad \mathcal{H}_1: \Delta(X,Y^\text{C})\neq 0,$$ for the nine different choices of $C$. We  employ the multiplier bootstrap method from \Cref{sec:testasy} to compute p-values for the tests by creating $B=100$ multiplier bootstrap samples each time. The p-values are plotted in  \Cref{fig:close_pos_boot_pvalue} for the nine different hypotheses tests, the blue line in each plot indicates the threshold value of $0.05$ and the red lines are the p-values across all values of $k\in \mathcal{K}$. These p-values indicate that at 5\% level, we would reject the hypothesis of $\mathcal{H}_0: \Delta(X,Y^\text{C})=0$ for $\text{C}\in \{\text{DOGE, DOT, LTC, XRP}\}$ since in these cases the p-values are below $\alpha=0.05$ for around 75\%-80\% or more values of $k\in\mathcal{K}$. When $\text{C} \in $ $\{\text{ADA, XLM, SOL}\}$, the p-values are intermittently above and below the line $\alpha=0.05$ but there seems to be enough evidence to reject $\mathcal{H}_0$. For the other two cryptocurrencies Ethereum and TRON, we cannot find enough evidence to reject $\mathcal{H}_0$. Moreover, from the estimated value of $\Delta(X,Y^{\text{C}})$, we observe that it is positive, indicating that when $\mathcal{H}_0$ is rejected, a large increase in the closing price of Bitcoin has a greater influence on the large increase in the closing price of the other cryptocurrency supporting the traditional belief about Bitcoin. We have provided the estimated value of $\Delta(X,Y^{\text{C}})$ and a 95\% confidence band around it for values of $k\in \mathcal{K}$ is given in \Cref{fig:close_pos_boot} (in \Cref{app:datacrypto}).

Now we consider testing asymmetry for negative log-returns. The p-values for testing $\mathcal{H}_0: \Delta(X^-,Y^{-\text{C}})=0\quad \text{vs.} \quad \mathcal{H}_1: \Delta(X^-,Y^{-\text{C}})\neq 0$ for  the negative returns are plotted in \Cref{fig:close_neg_boot_pvalue} for $k\in \mathcal{K}$. Here it seems the p-value tends to be more than 0.05 in most cases and there appears to be not enough evidence to indicate that there is any directional influence in the tails between Bitcoin and any of the other nine cryptocurrencies. The estimates of $\Delta(X^{-},Y^{-\text{C}})$ with 95\% confidence band
around it for values of $k\in \mathcal{K}$ is given in \Cref{fig:close_neg_boot} (in \Cref{app:datacrypto}), which also indicates that there is not enough evidence to reject that the value of $\Delta$ is 0.

In summary, we may conclude that for highly traded cryptocurrency there is evidence that a high fluctuation in the increasing direction for the price of Bitcoin seems to have a significant (positive) influence on high fluctuation in the increasing direction for the price of the other cryptocurrencies (particularly DOGE, DOT, LTC, XRP). On the other hand the same is not observed for extreme fluctuation in the negative direction.

\section{Conclusion} \label{sec:conclusion}

The key contribution of this paper is the introduction of the Extreme tail association measure $\eta(X|Y)$, which captures directional tail association for a bivariate random vector $(X,Y)$. The sample estimator is shown to be consistent and we have established $\Delta(X,Y)=\eta(X|Y)-\eta(Y|X)$  as a measure of asymmetry in tail dependence. The effectiveness of $\eta$ and $\Delta$ in assessing tail association and tail asymmetry in extreme fluctuations of cryptocurrency prices is established in our data analysis.

We believe that this work creates a pathway for multiple directions of future research to innovate and improve such measures of tail association. Topics that may be of interest to explore further are listed below, some of them being under active investigation by the authors.
\begin{enumerate}
    \item The measures $\eta(X|Y)$ and $\Delta(X,Y)$ are both functionals of the tail copula $\Lambda$ that indicate that there may be other functionals of the tail copula that can capture such directional association and asymmetry. The benefit of such statistics being functions of the tail copula makes them amenable to consistent estimation.
    \item Although we have $0\le \eta\le 1$ and $-1\le \Delta\le 1$ and the boundary values can be shown to be theoretically attainable, in most examples we observe the value of $\Delta_{k,n}$ (and $\Delta$) to be relatively small. This leads to some difficulty in hypothesis testing. We may surmise that there may be other functionals of the tail copula that can avoid this issue.
    \item We have seen that $\eta_{k,n}$ is consistent for $\eta$; but the data analysis and graphs  indicate a finite-sample bias that may be due to the form of the estimator $\eta_{k,n}$ and its dependence on the intermediate sequence $k$; see Figures \ref{fig:eta_compare_pos}, \ref{fig:eta_compare_neg}. It is worth investigating whether such finite-sample bias can be corrected by appropriately modifying the estimator.
    \item An asymmetry in tail association can be observed in many real datasets, yet, there are very few popularly used parametric models that capture such dependency. It will be of interest to investigate the generating mechanism for such tail asymmetry via parametric models.
    \item Finally, it is not clear what can be said about directional tail association if we observe asymptotic tail independence in data; one such classic example is the Gaussian copula which is often used in practice. An idea will be to look at hidden association measures as the coefficient of tail dependence (\citet{ledford:tawn:1996}), or, hidden regular variation (\citet{resnick:2002}).
\end{enumerate}

\bibliographystyle{imsart-nameyear}
\bibliography{bibfilenew}

\newpage

\appendix

\section{Proof of results in Section \ref{sec:etatailcop}}\label{app:proofsec2}
\begin{proof}[Proof of \Cref{prop:eta_is_intutc}]\label{pf:prop:eta_is_intutc}
\begin{enumerate}[(a)]
\item For the copula $C$ of $(X,Y)\sim F$, using \eqref{eqdef:eta_pop}, \eqref{eq:DenDeltat}, and \eqref{eq:cop:survcop} we have for $0<t\le 1$,
\begin{align}
\eta_t(X|Y) & = \frac{3}{t^2(1-t)^3} \int_t^1 (C(u,t)-ut)^2 \; \mathrm du \nonumber \\
          & =  \frac{3}{t^2(1-t)^3} \int_0^{1-t} (\ov{C}(u',1-t)-u'(1-t))^2 \; \mathrm du'  \label{eq:etatrep2}
\end{align}
where we used $u'=1-u$. Let $s= \frac{1}{1-t}$ and  $v=us$, then  we have
\begin{align}
\eta_{1/(1-s)}(X|Y)   
             & = 3 \int_0^1 \frac{(s\overline{C}(v/s,1/s)-v/s)^2}{(1-1/s)^2} \; \mathrm dv =: 3 \int_0^1 A_s(v) \; \mathrm dv \quad (\text{say}). \label{def:As}
\end{align}
Note that for all $v\in[0,1]$,
\[\lim_{s\to\infty} A_s (v) = \lim_{s\to\infty}  \left(\frac{1}{1-1/s}\right)^2 \left(s\ov{C}(v/s,1/s)- \frac{v}{s}\right)^2= [\Lambda(v,1)]^2.\]
Moreover, for any $s\ge 2$, since $\ov{C}(x,y) \le \min(x,y)$, we have for any $0\le v\le 1$,
\begin{align*}
  |A_s(v)| & \le  \left(\frac{1}{1-1/s}\right)^2 \left(\left|s\ov{C}(v/s,1/s)\right| +\left|\frac{v}{s}\right|\right)^2 \le 4 \cdot \left(1+ \frac{1}{2}\right)^2 =9,
  \end{align*}
Hence, we have by bounded convergence theorem,
  \begin{align*}
  \eta(X|Y)   =  \lim_{s\to\infty} \eta_{1-1/s}(X|Y)=  \lim_{s\to\infty} 3 \int_0^1 A_s(v) \mathrm dv = 3\int_0^1 [\Lambda(v,1)]^2 \mathrm d v.
  \end{align*}
  \item Clearly $\eta(X|Y)\ge0$ by the representation from part (a). Now note that for $u\in [0,1]$,
    \begin{align*}
    \Lambda[u,1] = \lim_{s\to\infty} s\ov{C} (u/s, 1/s) \le s\cdot\frac us =u.
    \end{align*}
    Hence
      \begin{align*}
    \eta(X|Y) = 3 \int_0^1 (\Lambda[u,1])^2 \mathrm du \le 3 \int_0^1 u^2 \mathrm du \le 1.
    \end{align*}
    
  \end{enumerate}

\end{proof}

Before proving \Cref{prop:uptc_symtoasym}, we state the following lemmas that guaranty the continuous convergence condition required for its proof.

\begin{lemma}[Lipschitz continuity \cite{Nelsen}] \label{lem:lip_cont}
    Suppose $C:[0, 1]^2 \to [0, 1]$ is a copula. Then, for $(u,v), (u',v')\in [0,1]^2$ we have
    \begin{align*}
        \left|C(u, v) - C(u^{'}, v^{'})\right| \leq |u - u^{'}| + |v - v^{'}|.
    \end{align*}
\end{lemma}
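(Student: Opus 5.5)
This is the standard Lipschitz bound for copulas, and the plan is to prove it from the 2-increasing property and the boundary conditions of $C$, with no further structure needed.

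The first step is to split the difference through the intermediate point $(u',v)$ using the triangle inequality:
\[
|C(u,v)-C(u',v')| \le |C(u,v)-C(u',v)| + |C(u',v)-C(u',v')|.
\]
It then suffices to show that $C$ is 1-Lipschitz in each coordinate separately, uniformly in the other coordinate.

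For the first coordinate, fix $v\in[0,1]$ and assume without loss of generality that $u'\le u$. Apply the 2-increasing property to the rectangle $[u',u]\times[v,1]$:
\[
C(u,1)-C(u',1)-C(u,v)+C(u',v)\ge 0 .
\]
Since $C(x,1)=x$, this becomes $C(u,v)-C(u',v)\le u-u'$. Applying 2-increasingness to $[u',u]\times[0,v]$, together with $C(x,0)=0$, gives $C(u,v)-C(u',v)\ge 0$. Hence $0\le C(u,v)-C(u',v)\le u-u'$, that is, $|C(u,v)-C(u',v)|\le |u-u'|$.

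The second coordinate is handled symmetrically: using rectangles $[0,u']\times[v',v]$ and $[u',1]\times[v',v]$, together with $C(1,y)=y$ and $C(0,y)=0$, one gets $|C(u',v)-C(u',v')|\le|v-v'|$. Adding the two bounds yields the claim.

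No step here is a real obstacle. The only point requiring care is choosing the rectangle whose corners hit the boundary, so that the boundary identities $C(x,1)=x$, $C(1,y)=y$ and $C(\cdot,0)=C(0,\cdot)=0$ convert the 2-increasing inequality into the explicit bound.
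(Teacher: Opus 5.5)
Your proof is correct: the triangle inequality through $(u',v)$, followed by the 2-increasing inequality on rectangles reaching the boundary combined with $C(x,0)=C(0,y)=0$, $C(x,1)=x$ and $C(1,y)=y$, gives both one-coordinate bounds. The paper gives no proof of its own and simply cites Nelsen, and Nelsen's proof is exactly this argument, so your route matches the one the paper relies on.
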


\begin{lemma} \label{lem:cont_converg}
    Suppose $C:[0, 1]^2 \to [0, 1]$ is a copula. Then for any point $(x, y) \in [0, 1]$ and any perturbations $r_i(t) = o(t^{-1})$ where $i = 1, 2$,
    \begin{align*}
    t\left[C\left(1-\frac{x}{t}+r_1(t), 1-\frac{y}{t}+r_2(t)\right) - C\left(1-\frac{x}{t}, 1-\frac{y}{t}\right)\right] \to 0
    \end{align*}
 as $t\to\infty$ and this convergence is uniform on $[0, 1]^2$.
\end{lemma}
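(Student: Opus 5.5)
The plan is to reduce the statement to a uniform modulus-of-continuity estimate, using \Cref{lem:lip_cont}. Fix $(x,y)\in[0,1]^2$ and write $a_t=1-x/t$, $b_t=1-y/t$, $a_t'=a_t+r_1(t)$, $b_t'=b_t+r_2(t)$. Since $t\to\infty$ and $r_i(t)=o(t^{-1})\to 0$, all four numbers lie in a neighbourhood of $1$ for large $t$. We will first clip them to $[0,1]$, so that $C$ is always evaluated inside its domain.

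\textbf{Clipping.} Clipping to $[0,1]$ does not increase distances. This holds whether the clipping is implicit, as in the convention that the statement is meant for arguments in the domain, or explicit via $\min(\max(\cdot,0),1)$. So we may assume without loss of generality that $a_t',b_t'\in[0,1]$ while keeping $|a_t'-a_t|\le |r_1(t)|$ and $|b_t'-b_t|\le |r_2(t)|$.

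\textbf{Lipschitz bound.} Applying \Cref{lem:lip_cont} then gives
\[
t\left|C(a_t',b_t')-C(a_t,b_t)\right|\le t|r_1(t)|+t|r_2(t)|.
\]

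\textbf{Conclusion.} The right-hand side does not depend on $(x,y)$ and tends to $0$ by the assumption $r_i(t)=o(t^{-1})$. Hence
\[
\sup_{(x,y)\in[0,1]^2} t\left|C(a_t',b_t')-C(a_t,b_t)\right|\to 0,
\]
which is both the pointwise convergence and its uniformity on $[0,1]^2$.

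\textbf{Expected obstacle.} The only real subtlety is interpreting the perturbations $r_i$. If they are allowed to depend on $(x,y)$, uniformity requires the $o(t^{-1})$ bound to hold uniformly in $(x,y)$, meaning $\sup_{x,y} t|r_i(t;x,y)|\to 0$. I would state that reading explicitly, since it is what the later application presumably needs: there the perturbations come from the construction \eqref{def:cop:asym}, via $(1-x/t)^{\alpha}=1-\alpha x/t+O(t^{-2})$ uniformly on compacts. With that assumption the argument above goes through verbatim, and no property of $C$ beyond its $1$-Lipschitz continuity is used.
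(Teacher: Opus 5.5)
Your proposal is correct and follows the paper's proof exactly. Both arguments apply the $1$-Lipschitz bound of \Cref{lem:lip_cont} to get $t\,|C(a_t',b_t')-C(a_t,b_t)|\le t|r_1(t)|+t|r_2(t)|\to 0$ independently of $(x,y)$, and your clipping step and your remark that uniformity needs $\sup_{x,y} t|r_i(t;x,y)|\to 0$ when the perturbations depend on $(x,y)$ are sensible refinements that the paper leaves implicit.
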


\begin{proof}
    From \Cref{lem:lip_cont}, we know $C$ is Lipschitz continuous and hence
    \begin{align*}
        \left|C\left(1 - \frac{x}{t} + r_1(t),\;
                1 - \frac{y}{t} + r_2(t)
            \right) - C\left(1 - \frac{x}{t}, 1 - \frac{y}{t}\right)  \right| \leq  \left|r_1(t)\right| + \left|r_2(t)\right|.
    \end{align*} Now by multiplying $t$ on both sides we obtain the desired result.
\end{proof}

\begin{proof}[Proof of \Cref{prop:uptc_symtoasym}]\label{pf:prop:uptc_symtoasym}
For a symmetric copula $C^*$, we have for $(x,y)\in [0,1]^2$,
\[
C_{\alpha,\beta}(x,y) := x^{1-\alpha} y^{1-\beta} C^*(x^{\alpha}, y^{\beta}), 
\]
Hence for $(x,y)\in [0,1]^2$, the corresponding  survival copula $\overline{C}_{\alpha,\beta}$ is given by
\begin{align}
\overline{C}_{\alpha,\beta}(x,y)
&= x + y - 1 + C_{\alpha,\beta}(1 - x, 1 - y) \nonumber\\
&= x + y - 1 
  + (1 - x)^{1-\alpha} (1 - y)^{1-\beta}
    C^*\left( (1 - x)^{\alpha}, (1 - y)^{\beta} \right). \label{eq:cbartocstar}
\end{align}
For large enough $t$, with $(u,v)\in [0,1]^2$, we have
\begin{align*}
& \left( 1 - \frac{u}{t} \right)^{1-\alpha}
\left( 1 - \frac{v}{t} \right)^{1-\beta}
C^*\left(
  \left( 1 - \frac{u}{t} \right)^{\alpha},
  \left( 1 - \frac{v}{t} \right)^{\beta}
\right)\\
 & =  \left( 1 - \frac{(1-\alpha)u+(1-\beta)v}{t} + o(t^{-1}) \right)
  C^*\left(
    1 - \frac{\alpha u}{t} + o(t^{-1}),
    1 - \frac{\beta v}{t} + o(t^{-1})
  \right)\\
& = C^*\left(
    1 - \frac{\alpha u}{t} + o(t^{-1}),
    1 - \frac{\beta v}{t} + o(t^{-1})
  \right)
  - \left( \frac{(1-\alpha)u + (1-\beta)v}{t} \right) (1 + o(1)),
\end{align*}
and thus
\begin{align}
t\,\overline{C}_{\alpha,\beta}\left(\frac ut,\frac vt\right)
& = t\left( \frac ut + \frac vt -1 \right) + t\,C^*\left(
    1 - \frac{\alpha u}{t} + o(t^{-1}),
    1 - \frac{\beta v}{t} + o(t^{-1})
  \right) \nonumber \\
  & \quad\quad - t\,\left( \frac{(1-\alpha)u + (1-\beta)v}{t} \right) (1 + o(1)) \nonumber\\
  & = \alpha u +\beta v -t + t\,C^*\left(
    1 - \frac{\alpha u}{t} + o(t^{-1}),
    1 - \frac{\beta v}{t} + o(t^{-1})
  \right) + o(1) \nonumber\\
  & = t\left( \frac{\alpha u}t + \frac{\beta v}t -1\right) + t\,C^*\left(
    1 - \frac{\alpha u}{t},
    1 - \frac{\beta v}{t}
  \right) + o(1) \nonumber\\
  & = t\,\overline{C}^*\left(\frac {\alpha u}t,\frac {\beta v}t\right) + o(1), \label{eq:cbartocstarbar}
\end{align}
where the final equality above results from \eqref{eq:cbartocstar} and the second to last equality is justified by \Cref{lem:cont_converg}.
Now by taking the limit as $t\to\infty$ for both LHS and RHS in \eqref{eq:cbartocstarbar}, we obtain
\begin{align*}
 \Lambda_{C_{\alpha,\beta}}(u,v) = \Lambda_{C^*}(u,v), \quad (u,v)\in[0,1]^2.
\end{align*}

\end{proof}

\section{Proof of results in Section \ref{sec:asybeh}}\label{app:proofsec3}
The results in \Cref{sec:asybeh} relate to convergence of the measures $\eta_{k,n}$ and $\Delta_{k,n}$. First we provide some basic definitions and concepts useful for proving these convergence results, especially the weak convergence of empirical tail copulas. We also recall the relevant functional delta method we use to obtain said convergence of the statistics mentioned in these papers. Some of the concepts and are adapted from \citet{bucher:dette:2013,vandervaart:wellner:1996,schmidt:stadtmuller:2006}.
\subsection{Preliminaries}\label{subsec:prelim}
Denote by $\overline{\R}_+^2:=[0,\infty]^2\setminus\{(0,0)\}$ and let $\mathcal{B}_{\infty}(\overline{\R}^2_+)$
 be the (metric) space of functions $f: \overline{\R}_+^2 \to \R$ which are locally uniformly bounded on  compact subsets of $\overline{\R}_+^2$ (these are closed subsets of $[0,\infty]^2$ which are bounded away from $\{(0,0)\}$) together with the complete metric:
    \begin{align*}
        d(f_1, f_2) := \sum_{i = 1}^{\infty} 2^{-i} \min(\lVert f_1 - f_2 \rVert_{T_i}, 1)
    \end{align*}  where we have  $T_i=[0,i]^2\bigcup [0,i]\times\{\infty\}\bigcap\{\infty\}\times[0,i]$, and 
    $||f||_{T_i}=\sup_{\bx\in T_i} |f(\bx)|$ denotes the sup-norm on $T_i$. We also denote by $\mathcal{C}(\overline{\R}_+^2)$, the set of continuous functions in $\overline{\R}_+^2$ and naturally $\mathcal{C}(\overline{\R}_+^2)\subset\mathcal{B}(\overline{\R}_+^2)$.

\begin{definition}[Hadamard Differentiability \cite{vandervaart:wellner:1996}]
   Suppose $\mathbb{D}$ and $\mathbb{E}$ are two metrizable topological vector spaces and consider the map $\phi: \mathbb{D}_{\phi} \subseteq \mathbb{D}  \to \mathbb{E}$. We call $\phi$ to be Hadamard-differentiable at $\theta \in \mathbb{D}_{\phi}$ if there is a continuous linear map $\phi^{'}_{\theta}: \mathbb{D} \to \mathbb{E}$ such that:

\begin{align*}
    \frac{\phi(\theta + t_nh_n) - \phi(\theta)}{t_n} \to \phi^{'}_{\theta}(h), \qquad n \to \infty
\end{align*}
for all converging real sequences $t_n \to 0$ and all $\mathbb{D}$-valued sequences $h_n \to h$ such that $\theta + t_nh_n \in \mathbb{D}_{\phi}$. The definition may be refined to Hadamard-differentiable tangentially to a set $\mathbb{D}_0 \subseteq \mathbb{D}$ by requiring that every $h_n \to h$ in the definition has $h \in \mathbb{D}_0$; the domain of $\phi^{'}_{\theta}$ need then be on $\mathbb{D}_0$ only.

\end{definition}

Under such regularity of $\phi$, we can apply the generalized Delta-method:

\begin{theorem}[Functional Delta Method \cite{vandervaart:wellner:1996}]\label{thm:deltamethod}
    Let $\mathbb{D}$ and $\mathbb{E}$ be metrizable topological vector spaces. Let $\phi: \mathbb{D}_{\phi} \subseteq \mathbb{D}  \to \mathbb{E}$ be Hadamard-differentiable at $\theta$ tangentially to $\mathbb{D}_0\subset \mathbb{D}$. Let $X_n: \Omega_n \to \mathbb{D}_{\phi}$ be maps with $r_n(X_n - \theta) \weak X$ for some sequence of constant $r_n \to \infty$, where $X$ is separable and takes its values in $\mathbb{D}_0$. Then $r_n(\phi(X_n) - \phi(\theta)) \weak \phi^{'}_{\theta}(X)$. Here $\weak$ represents weak convergence in Hoffman-J\o rgensen sense (\citep{vandervaart:wellner:1996}).
    
\end{theorem}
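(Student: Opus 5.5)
The plan is to reduce the statement to an \emph{extended continuous mapping theorem} in the Hoffmann-J\o rgensen framework. Set $Z_n := r_n(X_n-\theta)$, which by assumption satisfies $Z_n \weak X$ with $X$ separable and $\mathbb{D}_0$-valued. For each $n$, let $\mathbb{D}_n := \{h\in\mathbb{D} : \theta + h/r_n \in \mathbb{D}_\phi\}$ and define $g_n:\mathbb{D}_n\to\mathbb{E}$ by
\begin{align*}
g_n(h) := r_n\bigl(\phi(\theta + r_n^{-1}h) - \phi(\theta)\bigr).
\end{align*}
Then $r_n(\phi(X_n)-\phi(\theta)) = g_n(Z_n)$ identically. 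Hadamard differentiability of $\phi$ at $\theta$ tangentially to $\mathbb{D}_0$, applied with $t_n = r_n^{-1}\to 0$, says exactly the following: whenever $h_n\in\mathbb{D}_n$ and $h_n\to h\in\mathbb{D}_0$, we have $g_n(h_n)\to \phi'_\theta(h)$. Moreover, $\phi'_\theta$ is continuous on $\mathbb{D}_0$. So the claim $g_n(Z_n)\weak \phi'_\theta(X)$ has the form ``continuously converging maps applied to weakly converging elements.''

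The key step is therefore to prove the extended continuous mapping theorem. If $g_n(x_n)\to g(x)$ for every $x_n\to x$ with $x\in\mathbb{D}_0$ and $x_n\in\mathbb{D}_n$, and if $Z_n\weak X$ with $X$ separable and concentrated on $\mathbb{D}_0$, then $g_n(Z_n)\weak g(X)$. I would argue through the portmanteau characterization with outer probabilities, showing $\limsup_n \P^*(g_n(Z_n)\in F)\le \P(g(X)\in F)$ for every closed $F\subset\mathbb{E}$.

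To do this, fix closed $F$ and introduce the sets
\begin{align*}
A_m := \overline{\bigcup_{k\ge m} g_k^{-1}(F)}, \qquad m\ge 1.
\end{align*}
Then $\{g_n(Z_n)\in F\}\subset\{Z_n\in A_m\}$ for all $n\ge m$. Since $A_m$ is closed, the portmanteau theorem for $Z_n\weak X$ gives $\limsup_n \P^*(g_n(Z_n)\in F)\le \P(X\in A_m)$. Next I check that $\bigcap_m A_m\cap\mathbb{D}_0\subset g^{-1}(F)$. Take $x\in\mathbb{D}_0$ in every $A_m$. By metrizability we can extract $x_{k}\to x$ with $x_k\in g_{n_k}^{-1}(F)$ and $n_k\to\infty$; continuous convergence then gives $g_{n_k}(x_k)\to g(x)$, and since $F$ is closed, $g(x)\in F$. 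Letting $m\to\infty$ and using that $X\in\mathbb{D}_0$ almost surely yields the bound $\P(g(X)\in F)$. Applying this with $g=\phi'_\theta$ finishes the proof.

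The main obstacle I expect is the measurability bookkeeping rather than the analytic content. The $X_n$ are arbitrary maps, so all probabilities must be outer probabilities, and the portmanteau theorem must be invoked in its Hoffmann-J\o rgensen form, which requires separability of the limit $X$. One must also be careful that $X$ concentrates on $\mathbb{D}_0$, possibly after replacing $\mathbb{D}_0$ by its closure if $\phi'_\theta$ extends continuously there. The diagonal extraction in the closure argument also needs metrizability of $\mathbb{D}$, which is assumed. As an alternative, one could use an almost-sure (Dudley--Wichura) representation $\tilde Z_n\to\tilde X$ outer almost surely and apply the differentiability pointwise, but the portmanteau route avoids constructing that representation.
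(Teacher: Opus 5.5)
Your proposal is correct and is the standard argument behind this result. The paper gives no proof of its own and only cites \cite{vandervaart:wellner:1996}, where the delta method is obtained exactly as you do: the extended continuous mapping theorem is applied to $g_n(h)=r_n\bigl(\phi(\theta+r_n^{-1}h)-\phi(\theta)\bigr)$, and that theorem is itself proved through the closed sets $\overline{\bigcup_{k\ge m}g_k^{-1}(F)}$ and the closed-set portmanteau bound.

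One tightening is needed. Your closure step uses convergence along indices $n_k\to\infty$, so your extended mapping theorem should assume convergence along such subsequences, not only along full sequences. With only the full-sequence hypothesis, the inclusion $\bigcap_m A_m\cap\mathbb{D}_0\subset g^{-1}(F)$ can actually fail. Hadamard differentiability supplies the subsequence version directly, because $t_k=r_{n_k}^{-1}\to 0$ is itself an admissible sequence.
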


\subsection{Proofs}
First we prove Proposition \ref{thm:consistency}. The proof requires results adapted from \cite{qi:1997} and \cite[Theorem 6]{schmidt:stadtmuller:2006} showing strong (and weak) consistency of empirical estimate for the tail copula (equivalently the stable tail dependence function), a representation of $\eta$ from the the proof of \Cref{prop:eta_is_intutc} (provided later) and some auxiliary results. The following lemma  is a consequence of \cite[Theorem 1.2]{qi:1997}.
\begin{lemma}\label{lem:lambdanconvergence}
      Let $(X_1,Y_1), \ldots, (X_n,Y_n)$ be i.i.d from a bivariate distribution $F$ with continuous marginals $F_X, F_Y$. Suppose $\Lambda\neq 0$ exists (as defined in \eqref{eq:tailcopup}) and is estimated by $\widehat{\Lambda}_{k,n}$ (as defined in \eqref{eq:lambdahat}). Then as $n\to \infty, k \to \infty$ and $n/k \to \infty$, we have for any $T>0$,
 \begin{align}\label{eq:lambdahatinP}
     \sup_{(x,y)\in [0,T]^2} |\widehat\Lambda_{k,n}(x,y)-\Lambda(x,y)| \to 0 \quad \text{in probability.}
 \end{align}
Additionally, if $k/\log\log n \to \infty$ as $n\to \infty$, then
  \begin{align}\label{eq:lambdahatas}
    \sup_{(x,y)\in [0,T]^2} |\widehat\Lambda_{k,n}(x,y)-\Lambda(x,y)| \to 0 \quad \text{almost surely.}
 \end{align}
   
\end{lemma}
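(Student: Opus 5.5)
We need to prove Lemma \ref{lem:lambdanconvergence}: uniform consistency (in probability, and almost surely under $k/\log\log n\to\infty$) of the empirical tail copula on $[0,T]^2$. The plan is to reduce to the uniform tail empirical process for the transformed pair $(U_i,V_i):=(\ov F_X(X_i),\ov F_Y(Y_i))$, which are uniform (marginals continuous) and have survival copula $\ov C$. Because the marginals are continuous and the estimator is rank-based, we have a.s.
\[
\widehat\Lambda_{k,n}(x,y)=\frac nk\cdot\frac1n\sum_{i=1}^n \bI\bigl(U_i\le U_{(\lceil kx\rceil)},\,V_i\le V_{(\lceil ky\rceil)}\bigr)
\]
up to an $O(1/k)$ boundary term, where $U_{(j)}$ are increasing order statistics. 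So the estimator is $W_n(\Gamma_{n,1}(x),\Gamma_{n,2}(y))$. Here $W_n(a,b):=\frac1k\sum_i\bI(U_i\le ka/n,V_i\le kb/n)$ is the tail empirical measure with known marginals, and $\Gamma_{n,1}(x)=\frac nk U_{(\lceil kx\rceil)}$, $\Gamma_{n,2}(y)=\frac nk V_{(\lceil ky\rceil)}$ are the tail quantile processes.

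\textbf{Step 1.} Prove $\sup_{(a,b)\in[0,T']^2}|W_n(a,b)-\frac nk\ov C(ka/n,kb/n)|\to0$, in probability or almost surely. This is exactly the content of \cite[Theorem 1.2]{qi:1997}, applied to the class of rectangles $[0,ka/n]\times[0,kb/n]$. That result gives a.s. uniform convergence when $k/\log\log n\to\infty$, and convergence in probability under only $k\to\infty$, $k/n\to0$. The class of lower-left rectangles is VC, and the variance is bounded by $\frac nk\cdot\frac{kT'}{n}\cdot\frac1k=O(1/k)$, so a Chebyshev argument on a grid plus monotonicity suffices for the in-probability part.

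\textbf{Step 2.} Specialize Step 1 to one coordinate ($b=\infty$, or $b$ large) to get $\sup_{a\le T'}|\frac1k\sum_i\bI(U_i\le ka/n)-a|\to0$. Inverting this monotone function gives $\sup_{x\le T}|\Gamma_{n,1}(x)-x|\to0$, and likewise for $\Gamma_{n,2}$, in the same mode. In particular, eventually $\Gamma_{n,j}\le T+1=:T'$.

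\textbf{Step 3.} Decompose
\[
|\widehat\Lambda_{k,n}-\Lambda|\le |W_n(\Gamma)-\Lambda_n(\Gamma)|+|\Lambda_n(\Gamma)-\Lambda(\Gamma)|+|\Lambda(\Gamma)-\Lambda(x,y)|,
\]
where $\Lambda_n(a,b):=\frac nk\ov C(ka/n,kb/n)$. The first term is handled by Step 1. The second term tends to $0$ because $\Lambda_n\to\Lambda$ locally uniformly: $\Lambda_n$ converges pointwise by \eqref{eq:tailcopup} with $t=n/k$, and $\Lambda_n$ is $1$-Lipschitz in each coordinate by \Cref{lem:lip_cont} applied to $\ov C$. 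The third term tends to $0$ since $\Lambda$ is also $1$-Lipschitz and Step 2 controls $\Gamma-\mathrm{id}$.

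The main obstacle is Step 1 in the almost sure case. This is a genuinely nontrivial LIL-type bound for weighted tail empirical processes, and I would simply invoke Qi's theorem rather than reprove it. The inversion in Step 2 and the rank-representation bookkeeping (ties, the $\lceil\cdot\rceil$ rounding) are routine.
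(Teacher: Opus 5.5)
Your argument is correct, but it takes a different route from the paper. The paper never decomposes the estimator. It notes that the rank-based estimator is an affine transform of Huang's empirical stable tail dependence function, $\widehat\Lambda_{k,n}(x,y)=\frac{\lceil kx\rceil}{k}+\frac{\lceil ky\rceil}{k}-\ell_n(x,y)$, while $\Lambda(x,y)=x+y-\ell(x,y)$. Hence $|\widehat\Lambda_{k,n}-\Lambda|\le|\ell_n-\ell|+2/k$, and both modes of convergence are read off from \cite[Theorem 1.2]{qi:1997}, which the paper uses as the uniform consistency of $\ell_n$ with \emph{estimated} marginals.

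You instead rebuild that consistency from its ingredients:
\begin{itemize}
\item the known-marginals tail empirical measure $W_n$;
\item the inversion to the tail quantile processes $\Gamma_{n,j}$;
\item the local uniform convergence $\Lambda_n\to\Lambda$, obtained from pointwise convergence plus the coordinatewise $1$-Lipschitz bound.
\end{itemize}
This buys a self-contained proof of the in-probability statement: Chebyshev on a finite grid, monotonicity and the Lipschitz bound do suffice. It also isolates the only deep input for the almost-sure statement, namely a strong law for the uniform tail empirical process under $k/\log\log n\to\infty$. The paper's route is much shorter because everything, including the quantile inversion, is hidden inside the cited theorem.

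Two points to fix.

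First, there is a citation mismatch. As the paper uses it, Qi's Theorem 1.2 concerns the rank-based $\ell_n$, not the known-marginals process over the rectangles $[0,ka/n]\times[0,kb/n]$. For the a.s. part of Step 1, and for its univariate analogue in Step 2, you should cite a strong law for (weighted) tail empirical processes instead. Alternatively, if you invoke Qi's theorem anyway, apply it directly to $\ell_n$ as the paper does; your decomposition is then unnecessary for the a.s. part.

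Second, in Step 2 the choice $b=\infty$ (or $b\ge n/k$) lies outside the domain $[0,T']^2$ of Step 1. State the one-dimensional version separately. It follows from the same grid argument in probability, and from the classical univariate result almost surely.
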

\begin{proof}
For $(x,y)\in \R_+^2$, the \emph{stable tail dependence function} $\ell$ \cite{huang:1992,dehaan:ferreira:2006} is defined as
\[\ell(x,y) = \lim_{t\to\infty} t(1-F(\ov{F}_X^{\leftarrow}(y/t),\ov{F}_Y^{\leftarrow}(y/t)))\]
if it exists. Clearly $\ell$ as defined exists if $\Lambda$ ( cf.\eqref{eq:tailcopup}) exists and for $(x,y)\in \R_+^2$,
\begin{align}\label{eq:elllambda}
    \Lambda(x,y) = \lim_{t\to\infty} t\ov{F}(\ov{F}_X^{\leftarrow}(y/t),\ov{F}_X^{\leftarrow}(y/t))= x+y - \ell(x,y).
\end{align}
Now following \cite{qi:1997} and adapting to our notation define
\begin{align*}
  \ell_n(x,y) := \frac1k \sum_{i=1}^n  \left(1-F_n(\ov{F}_{n,X}^{\leftarrow}\left(kx/n\right),\ov{F}_{n,Y}^{\leftarrow}\left(ky/n\right)))\right)
\end{align*}
Expanding we observe that
\begin{align}\label{eq:ellnlambdan}
    \ell_n(x,y) = \frac 1k \left\lceil{kx}\right\rceil +\frac 1k \left\lceil{ky}\right\rceil - \widehat{\Lambda}_{k,n}(x,y).
\end{align}
where for any $z\ge 0$, $\ceil{z}$ is the smallest integer larger than $z$.
Therefore, from \eqref{eq:elllambda} and \eqref{eq:ellnlambdan}, we have for any $x,y \ge 0$,
\begin{align*}
    \big|\widehat{\Lambda}_{k,n}(x,y)-\Lambda(x,y)\big| & \le |\ell_n(x,y)-\ell(x,y)| + \Big| \frac 1k \left\lceil{kx}\right\rceil +\frac 1k \left\lceil{ky}\right\rceil -(x+y)\Big|\\
    & \le |\ell_n(x,y)-\ell(x,y)| + \frac 2k.
\end{align*}
  Now \eqref{eq:lambdahatinP} and \eqref{eq:lambdahatas} follows from \cite[Theorem 1.2]{qi:1997} and since $2/k\to 0$ as $k\to \infty$.
\end{proof}

\begin{proof}[Proof of \Cref{thm:consistency}]
First we show that $\eta_{k,n}$ is the natural estimator of $\eta(X|Y)$ replacing $\Lambda$ by its empirical counterpart (cf. \eqref{eq:lambdahat}).
 Expanding the terms for $\widehat{\Lambda}_{k,n}(u,1)$ for $0\le u \le 1$  we have
     \begin{align*}
        \widehat{\Lambda}_{k,n}(u,1) & = \frac nk \cdot \frac 1n \sum_{i=1}^n \bI(X_i > \ov{F}_{n,X}^{\leftarrow}(ku/n), Y_i > \ov{F}_{n,Y}^{\leftarrow}(k/n))\\
                           & = \frac 1k \sum_{i=1}^{k-1} \bI(\ov{F}_{n,X}(X_{[i]})<ku/n) = \frac 1k \sum_{i=1}^{k-1} \bI(\rho_i<ku+1),
     \end{align*}
     where the final equality follows from the definition of $\rho_i$ (see \eqref{def:revrank}) and the second to last equality follows from the definition of concomitant $X_{[i]}$ of $Y_{(i)}$. Hence
    \begin{align*}
        3\int_0^1 \left[ \widehat{\Lambda}_{k,n}(u,1)\right]^2 \mathrm du & = \frac3{k^2}\int_0^1 \sum_{i=1}^{k-1}\sum_{j=1}^{k-1} \bI(\rho_i<ku+1, \rho_j<ku+1) \, \mathrm d u\\
                           & = \frac 3{k^3} \sum_{i=1}^{k-1}\sum_{j=1}^{k-1} (k+1 - \max(\rho_i,\rho_j))_+ = \eta_{k,n}(X|Y).
     \end{align*} 
Now given $\epsilon>0, \delta>0$ small, for large enough $k, n$ and $n/k$,  using \eqref{eq:lambdahatinP} (\Cref{lem:lambdanconvergence}),  we have
 \begin{align}\label{eq:lambdaepsilondelta}
 \P\left(\sup_{0\le  u\le 1} \left|[\widehat{\Lambda}_{k,n}(u,1)]^2-[\Lambda(u,1)]^2\right|> \epsilon/3 \right) <\delta.
 \end{align}
Therefore,
 \begin{align*}
     \P\left(|\eta_{k,n}(X|Y) - \eta(X|Y)|> \epsilon \right) & = \P\left(\left|3\int\limits_0^1 \widehat{\Lambda}_{k,n}(u,1)]^2 \,\mathrm d u - 3\int\limits_0^1 [\Lambda(u,1)]^2 \,\mathrm d u \right|> \epsilon \right) \\
       & \le  \P\left(3\int\limits_0^1 \left|\widehat{\Lambda}_{k,n}(u,1)]^2 - [\Lambda(u,1)]^2 \right| \, \mathrm d u > \epsilon \right)\\
       & \le \P\left(3\int\limits_0^1  \mathrm d u \times \sup_{0\le  u\le 1} \left|\widehat{\Lambda}_{k,n}(u,1)]^2-[\Lambda(u,1)]^2\right| > \epsilon \right)\\
       & <\delta.
 \end{align*}
  Hence, as $n\to \infty, k \to \infty$ and $n/k \to \infty$, we have
\begin{align}\label{eq:eta*consistent}
    {\eta}_{k,n}(X|Y) \to \eta(X|Y) \quad \text{in probability}. 
\end{align}
Almost sure convergence of ${\eta}_{k,n}(X|Y)$  can be proved similarly with the additional assumption of $k/\log\log n \to\infty$ and using \eqref{eq:lambdahatas}, the proof is skipped for brevity.
\end{proof}

To prove the asymptotic normality of the estimators $\eta_{k, n}$ and $\Delta_{k,n}$, we will utilize Proposition \ref{thm:bucherdette} and then apply the functional delta method (Theorem \ref{thm:deltamethod}) to the empirical tail copula $\widehat{\Lambda}_{k,n}$.
 First we  introduce the functionals $\phi,\psi: \mathcal{B}_{\infty}(\overline{\mathbb{R}}^2_+) \to \mathbb{R}$ as follows
    \begin{align}
        &\phi(f) := 3\int_{0}^1 [f(u, 1)]^2 \mathrm{d}u, \label{def:phi}\\
         &\psi(f) := 3\int_{0}^1 [f(u, 1)]^2 \mathrm{d}u - 3\int_{0}^1 [f(1, v)]^2 \mathrm{d}v. \label{def:psi}
    \end{align}
\begin{proposition}\label{prop:diff}
 The functionals $\phi$ and $\psi$ as defined in \eqref{def:phi} and \eqref{def:psi} are continuous and Hadamard-differentiable at $\Lambda$ tangentially to the continuous functions $\mathcal{C}(\overline{\mathbb{R}}^2_+) \subset \mathcal{B}_{\infty}(\overline{\mathbb{R}}^2_+)$. Moreover their derivatives  take the form:
    \begin{align}
        \phi'_{\Lambda}(h) & = 6\int_0^1 \Lambda(u, 1)h(u, 1) \mathrm du, \label{eq:phi}\\
                \psi'_{\Lambda}(h) & = 6\int_0^1 \Lambda(u, 1)h(u, 1) \mathrm du -6\int_0^1 \Lambda(1,v)h(1,v) \mathrm dv. \label{eq:psi}
    \end{align}    
\end{proposition}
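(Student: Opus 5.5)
Since $\psi(f)=\phi(f)-\phi(f\circ\sigma)$ with $\sigma(x,y)=(y,x)$, and swapping the arguments is a continuous linear map of $\mathcal{B}_{\infty}(\overline{\R}^2_+)$ into itself that maps $\mathcal{C}(\overline{\R}^2_+)$ to itself, everything reduces to proving the claims for $\phi$. The chain rule, or directly linearity of limits, then yields \eqref{eq:psi}.

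\emph{Continuity.} The set $K=[0,1]\times\{1\}\cup\{1\}\times[0,1]$ is contained in the compact set $T_1$, which is bounded away from the origin. Convergence $f_n\to f$ in the metric $d$ therefore gives $\|f_n-f\|_{T_1}\to0$, and each $f\in\mathcal{B}_\infty$ is bounded on $T_1$. For such $f,g$,
\[
|\phi(f)-\phi(g)|\le 3\int_0^1 |f(u,1)-g(u,1)|\,|f(u,1)+g(u,1)|\,\mathrm du\le 3\|f-g\|_{T_1}\big(\|f\|_{T_1}+\|g\|_{T_1}\big),
\]
so $\phi$ is continuous. One technical point must be settled: $u\mapsto f(u,1)$ has to be measurable (or the integral read as an outer integral) for $\phi$ to be defined. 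I would restrict the domain $\mathbb{D}_\phi$ to functions whose sections are Lebesgue integrable, e.g.\ bounded measurable ones. This covers the empirical tail copulas, which are step functions, as well as $\Lambda$.

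\emph{Hadamard derivative.} Take $t_n\to0$ and $h_n\to h$ in $d$, with $h\in\mathcal{C}(\overline{\R}^2_+)$ and $\Lambda+t_nh_n\in\mathbb{D}_\phi$. Expanding the square gives
\[
\frac{\phi(\Lambda+t_nh_n)-\phi(\Lambda)}{t_n}=6\int_0^1\Lambda(u,1)h_n(u,1)\,\mathrm du+3t_n\int_0^1 h_n(u,1)^2\,\mathrm du .
\]
Because $\|h_n-h\|_{T_1}\to0$ and $h$ is bounded on $T_1$, the sequence $\|h_n\|_{T_1}$ is bounded. Hence the second term is $O(t_n)\to0$. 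For the first term, $|\Lambda(u,1)|\le u\le1$ by the proof of \Cref{prop:eta_is_intutc}, so it differs from $6\int_0^1\Lambda(u,1)h(u,1)\,\mathrm du$ by at most $6\|h_n-h\|_{T_1}\to0$.

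The limit map $h\mapsto 6\int_0^1\Lambda(u,1)h(u,1)\,\mathrm du$ is linear. It is continuous because it is bounded by $6\|h\|_{T_1}$, a quantity controlled by $d$. The only genuinely delicate step is the measurability and domain issue noted above. The remaining estimates are elementary once one observes that the segments lie in a single compact set $T_1$ away from $(0,0)$.
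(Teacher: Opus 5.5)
Your proof is correct and follows the paper's argument. Both expand the difference quotient into the linear term $6\int_0^1\Lambda(u,1)h_n(u,1)\,\mathrm du$ plus a quadratic remainder of order $t_n$, and then let $h_n\to h$. You are more careful than the paper, which invokes dominated convergence and simply says the $\psi$ case is ``similar''; you instead use uniform convergence on the segments, reduce $\psi$ to $\phi$ via the coordinate swap, check continuity of the derivative, and flag measurability. One small slip: $T_1$ as written contains $(0,0)$, so it is not bounded away from the origin; state your bounds with the supremum over $K=[0,1]\times\{1\}\cup\{1\}\times[0,1]$ instead, which changes nothing in the argument.
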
 

\begin{proof}
  The continuity of $\phi$, and hence $\psi$ is quite evident since we obtain them by integrating a locally bounded function on compact intervals. 
 In order to show Hadamard differentiability at $\Lambda$, take any real (positive) converging sequence $t_m \to 0$ as $m\to \infty$; a sequence of functions $h_m\in\mathcal{B}_{\infty}(\overline{\mathbb{R}}^2_+), m\ge 1$, and a function $h\in \mathcal{C}(\overline{\mathbb{R}}^2_+)$ such that $h_m\to h$ as $m\to \infty$ with
    \begin{align*}
        \Lambda_m(x, y) = \Lambda(x, y) + t_mh_m(x, y).
    \end{align*}
    Then we have
    \begin{align*}
        \frac{\phi(\Lambda_m) - \phi(\Lambda)}{t_m} & = 3t_m^{-1} \int_0^1 \left([\Lambda_m(u, 1)]^2 - [\Lambda(u,1)]^2\right) \mathrm{d}u \\
        &= 3 \int_0^1 \left(2\Lambda(u,1) + t_mh_m(u, 1)\right) h_m(u, 1)\mathrm{d}u \\
        &= 6 \int_0^1 \Lambda(u,1)h_m(u, 1)\mathrm{d}u + 3 t_m \int_0^1 h_m(u, 1)^2 \mathrm{d}u =: A_m + B_m \quad \text{(say)}.
    \end{align*}
    As $m\to \infty$, since $h_m \to h$ which is continuous and hence bounded on $[0, 1]^2$, this implies that $B_m\to0$ as $m\to \infty$. The Dominated Convergence Theorem (see \cite{billingsley:1968}) guaranties the convergence of $A_m$, and hence we have:
    \begin{align*}
     \phi^{'}_{\Lambda}(h) =  \lim_{m\to\infty}\frac{\phi(\Lambda_m) - \phi(\Lambda)}{t_m} 
        = 6 \int_0^1 \Lambda(u,1)h(u, 1)\mathrm{d}u,
    \end{align*}
 proving \eqref{eq:phi}. 
    The proof of equality (convergence) in \eqref{eq:psi} is similar and is skipped here.

\end{proof}

\begin{proof}[Proof of \Cref{thm:asymnorm}]

    Let $X_k = \hat{\Lambda}_{k,n}$, $r_k = \sqrt{k}$ and $X = \Lambda$. Due to  \Cref{prop:diff} we may apply the Functional Delta-method (Theorem \ref{thm:deltamethod}) on  \Cref{prop:bootstrappingEmpiricalProcesses} with the function $\phi$ as defined in \eqref{eq:phi} and as $, n\to\infty, k\to \infty, n/k\to\infty$ we have Theorem \ref{thm:asymnorm}.
\end{proof}

\begin{proof} [Proof of Theorem \ref{thm:teststatdist}]
 The proof is very similar to that of Theorem \ref{thm:asymnorm}. Here we apply the Functional Delta-method (Theorem \ref{thm:deltamethod}) on  \Cref{prop:bootstrappingEmpiricalProcesses} with the function $\psi$ as defined in \eqref{eq:psi} and the rest remains the same as in Theorem \ref{thm:asymnorm}. 
\end{proof}

\section{Proof for results in Section \ref{sec:testasy}} \label{app:proofsec4}

We use the functional delta method again to prove the convergence of bootstrap estimators following \cite{bucher:dette:2013}. Denote by $\mathbf{S}_n = \{(X_1,Y_1), \ldots, (X_n,Y_n),(\xi_1, \ldots, \xi_n)\}, n\ge 1$. With random variables $H_n(\mathbf{S}_n), H \in \mathbb{D}$ for a  metric space $(\mathbb{D},d)$, we write $H_n \weak_\xi H$ as $n\to\infty$ to denote conditional weak convergence in $(\mathbb{D},d)$  given the data $\{(X_i,Y_i)\; 1\le i \le n\}$ as defined in \cite{Kosorok2008}; see \cite[Section 3.1]{bucher:dette:2013} for more details. We write $\E_{\xi}$ to denote the conditional expectation over $\xi_1, \ldots, \xi_n$ given the data $\{(X_i,Y_i)\;1\le i \le n\}$.

\begin{lemma}\label{lemma:condweaklemma}
    Consider random variables $U_n(\mathbf{S}_n), V_n(\mathbf{S}_n), Z \in \mathbb{D}$, for $n\ge1$. If \newline  $\E_{\xi} \|U_n-V_n\| \stp 0$, and  $V_n \weak_{\xi} Z$, then we have $U_n\weak_{\xi} Z$.
\end{lemma}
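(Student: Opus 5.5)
The natural route is through the bounded-Lipschitz characterization of conditional weak convergence used in \cite{Kosorok2008} and \cite[Section 3.1]{bucher:dette:2013}. There, $H_n \weak_\xi H$ means two things. First,
\[\sup_{f\in \mathrm{BL}_1(\mathbb{D})} \big|\E_\xi f(H_n) - \E f(H)\big| \stp 0,\]
where $\mathrm{BL}_1(\mathbb{D})$ is the set of functions $f:\mathbb{D}\to[-1,1]$ with Lipschitz constant at most $1$. Second, $H_n$ is asymptotically measurable. I will verify both properties for $U_n$, taking them as given for $V_n$ and using the assumption $\E_\xi\|U_n-V_n\|\stp 0$.

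\emph{Step 1 (the main estimate).} Fix $f\in \mathrm{BL}_1(\mathbb{D})$. By the Lipschitz property and the bound $|f|\le 1$,
\[|f(U_n)-f(V_n)| \le \min(\|U_n-V_n\|,2) \le \|U_n-V_n\|.\]
Taking conditional expectation over the multipliers gives
\[|\E_\xi f(U_n)-\E_\xi f(V_n)| \le \E_\xi\|U_n-V_n\|.\]
This bound does not depend on $f$. Hence
\[\sup_{f\in \mathrm{BL}_1}|\E_\xi f(U_n)-\E f(Z)| \le \sup_{f\in \mathrm{BL}_1}|\E_\xi f(V_n)-\E f(Z)| + \E_\xi\|U_n-V_n\|.\]
Both terms on the right tend to zero in (outer) probability: the first because $V_n \weak_\xi Z$, the second by assumption. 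Measurability issues, if any, are handled by working with outer expectations and outer probabilities, as in \cite{Kosorok2008}.

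\emph{Step 2 (asymptotic measurability).} It remains to show that
\[\E_\xi f(U_n)^* - \E_\xi f(U_n)_* \stp 0 \quad\text{for every bounded continuous } f.\]
I will first reduce from bounded continuous $f$ to bounded Lipschitz $f$. This uses the standard approximation from below and above of a bounded continuous function by bounded Lipschitz functions, together with the tightness of the separable limit $Z$. For Lipschitz $f$, the difference between the outer and inner versions for $U_n$ is at most the corresponding difference for $V_n$ plus $2\,\E_\xi\|U_n-V_n\|^*$, and both tend to zero. I expect this step to be the only delicate one. If the paper follows \cite{bucher:dette:2013}, where measurability of these processes is not an issue, it can be dismissed by citing \cite[Lemma 3.1]{bucher:dette:2013} or the analogous statement in \cite{Kosorok2008}. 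In that case the proof reduces to Step 1.
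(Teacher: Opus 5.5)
Your Step 1 is exactly the paper's proof: for $f\in \mathrm{BL}_1$ it bounds $|\E_\xi f(U_n)-\E_\xi f(V_n)|$ by $\E_\xi\|U_n-V_n\|$, uniformly in $f$, and combines this via the triangle inequality with $\sup_{f\in\mathrm{BL}_1}|\E_\xi f(V_n)-\E f(Z)|\stp 0$. The paper dismisses the remaining part of the definition of $\weak_\xi$ by citing \cite[Section 3]{bucher:dette:2013}, so your Step 2 on asymptotic measurability is extra care rather than a different route, and the proposal is correct.
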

\begin{proof}
    Let $f \in \text{BL}_1$, where  \[
\text{BL}_1(\mathbb{D}) := \left\{ f : \mathbb{D} \to \mathbb{R}\Big|\, \|f\|_\infty \le 1,\ \text{and}, |f(x)-f(y)|\le d(x,y) \, \text{ for } \, x,y \in \mathbb{D} \right\}.\] 
Now,
    \begin{align*}
        |\E_{\xi} f(U_n) - \E_{\xi} f(Z)| 
        &= |\E_{\xi} f(V_n+U_n-V_n) -\E_{\xi} f(V_n)+\E_{\xi} f(V_n)- \E_{\xi} f(Z)|\\
        &\leq |\E_{\xi} f(V_n+U_n-V_n) -\E_{\xi} f(V_n)|+|\E_{\xi} f(V_n)- \E_{\xi} f(Z)|\\
        &\leq  \mathbb{E}_{\xi} \|U_n-V_n\| + \sup_{f\in \text{BL}_1}|\mathbb{E}_{\xi} f(V_n) - \mathbb{E}f(Z)| \\
        & \stp 0, \quad\text{as $n \to \infty$}.
    \end{align*}
 The rest follow by definition (see \cite[Section 3]{bucher:dette:2013}).   
\end{proof}

\begin{proof}[Proof of theorem \ref{thm:bootstrappingCLT}]
We will prove the Central Limit Theorem for $\Delta^{\xi, \xi}_{k, n}(X, Y)$,   the proof of CLT for $\eta^{\xi, \xi}_{k, n}$ follows in a similar manner and is skipped here. Assume $\mu = \tau = 1$, this can be achieve by taking $\xi_i, i\ge 1$ to be a sequence of i.i.d. standard exponential random variables. Denote \begin{align*}
    \alpha^{\xi,\xi}_{k, n}(u, v) :=& \sqrt{k} (\Lambda^{\xi, \xi}_{k, n}(u, v) - \widehat{\Lambda}_{k, n}(u, v)),\\
    \mathcal{E}_{k,n} (u, v) :=& \sqrt{k} (\widehat{\Lambda}_{k, n}(u, v) - \Lambda(u, v)).
\end{align*}
Then we have: 
\begin{align*}
    & \sqrt{k} (\Delta^{\xi, \xi}_{k, n}(X, Y) - \Delta_{k, n}(X, Y)) \\
    & =3 \sqrt{k} \int_0^1 \left[\Lambda^{\xi, \xi}_{k, n} (u, 1)^2 - \widehat{\Lambda}_{k, n}(u, 1)^2 + \widehat{\Lambda}_{k, n}(1, u)^2 - \Lambda^{\xi, \xi}_{k, n} (1, u)^2\right] \mathrm{d}u\\
 & = 3 \sqrt{k} \int_0^1 \Big[(\Lambda^{\xi, \xi}_{k, n} (u, 1) + \widehat{\Lambda}_{k, n}(u, 1))(\Lambda^{\xi, \xi}_{k, n} (u, 1) - \widehat{\Lambda}_{k, n}(u, 1))  \\&  \qquad \qquad\qquad \qquad - (\Lambda^{\xi, \xi}_{k, n} (1, u) + \widehat{\Lambda}_{k, n}(1, u))(\Lambda^{\xi, \xi}_{k, n} (1, u) - \widehat{\Lambda}_{k, n}(1, u))\Big] \mathrm{d}u \\
    & = 3  \int_0^1 \Bigg[\left(2 \widehat{\Lambda}_{k, n}(u, 1) + \frac{1}{\sqrt{k}}  \alpha^{\xi,\xi}_{k, n}(u, 1)\right)  \alpha^{\xi,\xi}_{k, n}(u, 1) \\ & \qquad \qquad\qquad \qquad- \left(2 \widehat{\Lambda}_{k, n}(1, u) + \frac{1}{\sqrt{k}}  \alpha^{\xi,\xi}_{k, n}(1, u)\right)  \alpha^{\xi,\xi}_{k, n}(1, u)\Bigg]\mathrm{d}u \\
    & = 6 \int_0^1 \left[\widehat{\Lambda}_{k, n}(u, 1) \alpha^{\xi,\xi}_{k, n}(u, 1) - \widehat{\Lambda}_{k, n}(1, u) \alpha^{\xi,\xi}_{k, n}(1, u)\right] \mathrm{d}u \\
     & \qquad \qquad \qquad \qquad+ \frac{3}{\sqrt{k}} \int_0^1 \left[\alpha^{\xi,\xi}_{k, n}(u, 1)^2 - \alpha^{\xi,\xi}_{k, n}(1, u)^2\right] \mathrm{d}u \\
    & =: A^{\xi, \xi}_{k, n} + B^{\xi, \xi}_{k, n}\quad  \text{(say)}.
\end{align*}
Now define $\phi: \mathcal{B}(\overline{\R}^2_+) \times \mathcal{B}(\overline{\R}^2_+)\to\R$ as
\begin{align*}
    \phi(f, g) = {6} \int_0^1 \left[f(u, 1)g(u, 1) - f(1, u)g(1, u)\right] \mathrm{d}u.
\end{align*} Then we have \begin{align*}
    A^{\xi, \xi}_{k, n} = \phi(\widehat{\Lambda}_{k, n}, \alpha^{\xi,\xi}_{k, n} = \phi(\Lambda, \alpha^{\xi,\xi}_{k, n}) + \frac{1}{\sqrt{k}}\phi(\mathcal{E}_{k,n}, \alpha^{\xi,\xi}_{k, n}).
\end{align*} Due to  \cite[Theorem 10.8]{Kosorok2008}, we have $\phi(\Lambda,\alpha^{\xi,\xi}_{k, n}) \weak_\xi \phi(\Lambda, \mathbb{G}_{\widehat{\Lambda}}) $. Moreover, as $n,k, n/k \to \infty$, \begin{align*}
    \mathbb{E}_{\xi} \left|\frac{1}{\sqrt{k}}\phi(\mathcal{E}_{k, n}, \alpha^{\xi,\xi}_{k, n}) + B^{\xi, \xi}_{k, n}\right| \stp 0.
\end{align*} Now, using \Cref{lemma:condweaklemma}, we have as $n,k, n/k \to \infty$,
\begin{align*}
    \frac{\mu}{\tau} \sqrt{k} (\Delta^{\xi, \xi}_{k, n} - \Delta_{k, n}) \weak_\xi \phi(\Lambda, \mathbb{G}_{\widehat{\Lambda}}) = 6 \int_0^1 \left[\Lambda(u, 1)\mathbb{G}_{\widehat{\Lambda}}(u, 1) - \Lambda(1, u)\mathbb{G}_{\widehat{\Lambda}}(1, u) \right]\mathrm{d}u.
\end{align*}
\end{proof}

Finally, we provide an additional result where the bootstrapping estimator $\eta^{\xi, \xi}_{k, n} (X|Y)$ is represented using bootstrapping ranks. This is helpful for computational purposes. Consider i.i.d. data $(X_1,Y_1), \ldots, (X_n,Y_n)$ and the independent i.i.d. sample $\xi_1,\ldots, \xi_n$ (used for multiplier bootstrap).
 Let $Y_{(1)} \geq ... \geq Y_{(n)}$ be the decreasing order statistics of $\{Y_i, 1\le i \le n\}$ with $X_{[i]}, \xi_{[i]}$ being the $X$, $\xi$ concomitant variables associated with $Y_{(i)}$ for $i=1,\ldots,n$.  Now denote for $i=1,\ldots, n$ the following: \begin{align*}
        R^{\xi}(X_i) &:= n\ov{F^{\xi}_{n, 1}}(X_i), \\
        R^{\xi}(Y_i) &:= n\ov{F^{\xi}_{n, 2}}(Y_i),\\
        \tau(k) &:= \max\{i = 1,...,n : R^{\xi}(Y_{(i)}) < k\}.
    \end{align*}
\begin{lemma}\label{lem:bootstrappingestimator}
    The following identity holds:
    \begin{align*}
        \eta^{\xi, \xi}_{k, n} (X|Y) &= \frac{3}{k^3} \sum_{i \leq \tau(k)} \sum_{j \leq \tau(k)} \frac{\xi_{[i]}\xi_{[j]}}{\overline{\xi}_n^2} \left(k - \max\left(R^{\xi}(X_{[i]}), R^{\xi}(X_{[j]})\right)\right)_{+}.
    \end{align*}
\end{lemma}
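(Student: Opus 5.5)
The plan mirrors the computation in the proof of \Cref{thm:consistency}, where $3\int_0^1[\widehat{\Lambda}_{k,n}(u,1)]^2\,\mathrm du$ was rewritten as a double sum over concomitant ranks. We would expand $\Lambda^{\xi,\xi}_{k,n}(u,1)$ as a weighted indicator sum, square it, and integrate in $u$. The indicator product $\bI(\cdot<ku,\cdot<ku)$ then integrates to a positive part of the form $(k-\max(\cdot,\cdot))_+/k$.

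First we unpack the thresholds. By definition of the generalized inverse $\overline{H}^{\leftarrow}(u)=\sup\{x:\overline{H}(x)\ge u\}$ applied to the step function $\ov{F^{\xi}_{n,2}}$, the event $\{Y_i>(\ov{F^{\xi}_{n,2}})^{\leftarrow}(k/n)\}$ should coincide with $\{n\ov{F^{\xi}_{n,2}}(Y_i)<k\}=\{R^{\xi}(Y_i)<k\}$. This holds because $\ov{F^{\xi}_{n,2}}$ is right-continuous and nonincreasing, and there are no ties. Since $R^{\xi}(Y_{(i)})$ is increasing in $i$, this event equals $\{Y_i\text{ is one of }Y_{(1)},\ldots,Y_{(\tau(k))}\}$. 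In the same way, $\{X_i>(\ov{F^{\xi}_{n,1}})^{\leftarrow}(ku/n)\}=\{R^{\xi}(X_i)<ku\}$.

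Substituting these into \eqref{def:empboottailcop} and re-indexing by the $Y$-order statistics with concomitants gives
\[
\Lambda^{\xi,\xi}_{k,n}(u,1)=\frac1k\sum_{i\le\tau(k)}\frac{\xi_{[i]}}{\overline{\xi}_n}\,\bI\bigl(R^{\xi}(X_{[i]})<ku\bigr).
\]
Squaring produces a double sum over $i,j\le\tau(k)$ with weight $\xi_{[i]}\xi_{[j]}/\overline{\xi}_n^2$ and the indicator $\bI(\max(R^{\xi}(X_{[i]}),R^{\xi}(X_{[j]}))<ku)$. Integrating over $u\in[0,1]$ gives $\int_0^1\bI(M<ku)\,\mathrm du=(1-M/k)_+=(k-M)_+/k$ for $M\ge0$. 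Multiplying by $3/k^2$ then yields exactly the claimed $3/k^3$ expression.

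The main obstacle is the careful handling of the generalized inverses of the weighted step functions, namely checking that strict versus non-strict inequalities come out as $<k$ rather than $\le k$. The paper's unweighted version has an off-by-one shift ($\rho_i<ku+1$ and $k+1$), which arises because there the rank counts the point itself. To settle this, we would verify that $\sup\{x:\ov{F^{\xi}_{n,1}}(x)\ge ku/n\}$ is the largest data point whose weighted count $n\ov{F^{\xi}_{n,1}}(X_j)$, which excludes $X_j$ itself, is still $<ku$, and that points strictly above it are exactly those with $R^{\xi}<ku$. Boundary cases of measure zero in $u$ do not affect the integral.
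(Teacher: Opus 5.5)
Your route is the paper's route: expand $\Lambda^{\xi,\xi}_{k,n}(u,1)$ as a weighted indicator sum over the $Y$-concomitants, square it, and integrate in $u$. However, the step you single out as the main obstacle cannot be settled in favour of the stated formula, and in fact the displayed identity is not exact.

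\textbf{The failing step.} Because $\ov{F^{\xi}_{n,1}}$ is a right-continuous step function, the supremum $b=\sup\{x:\ov{F^{\xi}_{n,1}}(x)\ge ku/n\}$ is not attained. For $u\in(0,1]$, $b$ is a data point with $\ov{F^{\xi}_{n,1}}(b)<ku/n\le\ov{F^{\xi}_{n,1}}(b-)$. So $b$ itself has $R^{\xi}(b)<ku$, but it is not strictly above itself. The correct event is
\[
\{i:X_i>b\}=\{i:R^{\xi}(X_i)<ku\}\setminus\{i:X_i=b\}=\bigl\{i:R^{\xi}(X_i)+\xi_i/\overline{\xi}_n<ku\bigr\}.
\]
Your own description of the check shows this. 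The threshold is a data point whose weighted count is $<ku$, so the points strictly above it cannot be all the points with $R^{\xi}<ku$. Likewise, on the $Y$-side the observations exceeding $(\ov{F^{\xi}_{n,2}})^{\leftarrow}(k/n)$ are $Y_{(1)},\dots,Y_{(\tau(k)-1)}$, not $Y_{(1)},\dots,Y_{(\tau(k))}$.

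The paper's proof makes the same identification: it writes $\bI(\ov{F^{\xi}_{n,1}}(X_i)<x)$ in place of $\bI(X_i>(\ov{F^{\xi}_{n,1}})^{\leftarrow}(x))$. The unweighted step $\bI(\rho_i<ku+1)$ in the proof of \Cref{thm:consistency}, which you use as a template, has the same off-by-one; the correct event there is $\rho_i<ku$. So it cannot serve as a check.

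\textbf{Counterexample.} Take $n=2$, $k=1$, data $(X_1,Y_1)=(2,2)$, $(X_2,Y_2)=(1,1)$, and any weights with $\xi_1\ge\xi_2$.
\begin{enumerate}[(i)]
\item $(\ov{F^{\xi}_{n,2}})^{\leftarrow}(1/2)=2=Y_{(1)}$, and no $Y_i$ exceeds it. Hence $\Lambda^{\xi,\xi}_{1,2}(\cdot,1)\equiv 0$ and $\eta^{\xi,\xi}_{1,2}(X|Y)=0$.
\item On the other side, $\tau(1)=1$ and $R^{\xi}(X_{[1]})=0$, so the right-hand side equals $3\xi_1^2/\overline{\xi}_n^2\ge 3$.
\end{enumerate}

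\textbf{What the argument actually gives.} Carrying out your argument with the correct events, and setting $S^{\xi}_i:=R^{\xi}(X_{[i]})+\xi_{[i]}/\overline{\xi}_n$, yields
\[
\eta^{\xi,\xi}_{k,n}(X|Y)=\frac{3}{k^3}\sum_{i\le\tau(k)-1}\sum_{j\le\tau(k)-1}\frac{\xi_{[i]}\xi_{[j]}}{\overline{\xi}_n^2}\bigl(k-\max(S^{\xi}_i,S^{\xi}_j)\bigr)_+ .
\]
Alternatively, the stated formula becomes exact if the joint weighted survival function $\ov{F^{\xi}_n}$ is defined with non-strict inequalities $X_i\ge x$, $Y_i\ge y$, keeping the marginals unchanged.

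The discrepancy is a single-observation boundary effect of order $O_P(1/k)$, so it does not affect \Cref{thm:bootstrappingCLT}. As an exact identity, though, the lemma needs one of these corrections, and your proof cannot be completed as planned.
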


\begin{proof}[Proof of lemma \ref{lem:bootstrappingestimator}]
Since $\ov{F}^{\xi}_{n, 1}(X_i) = {R^{\xi}(X_i)}/{n}$ and $\ov{F}^{\xi}_{n, 2}(Y_i) =  {R^{\xi}(Y_i)}/{n}$, we have
\begin{align*}
    \ov{C}^{\xi, \xi}_n(x, y) &= \frac{1}{n} \sum_{i=1}^n \frac{\xi_i}{\overline{\xi}_n}\bI(\ov{F}^{\xi}_{n, 1}(X_i) < x, \ov{F}^{\xi}_{n, 2}(Y_i) < y)\\
&= \frac{1}{n} \sum_{i=1}^n \frac{\xi_i}{\overline{\xi}_n} \bI(R^{\xi}(X_i) < nx, R^{\xi}(Y_i) < ny).
\end{align*} Therefore
\begin{align}
    \Lambda^{\xi, \xi}_{k, n}(x, 1) &= \frac{1}{k} \sum_{i=1}^n \frac{\xi_i}{\overline{\xi}_n} \bI(R^{\xi}(X_i) < kx, R^{\xi}(Y_i) < k) = \frac{1}{k} \sum_{i=1}^{\tau(k)} \frac{\xi_{[i]}}{\overline{\xi}_n}  \bI(R^{\xi}(X_{[i]}) < kx). \label{eq:lambdaemp}
\end{align} Substitute the expression for $\Lambda^{\xi, \xi}_{k, n}(x, 1)$ from \eqref{eq:lambdaemp} in the definition of $\eta^{\xi, \xi}_{k, n}(X|Y)$ we get
\begin{align*}
    \eta^{\xi, \xi}_{k, n}(X|Y) &= \frac{3}{k^2} \int_0^1 \sum_{i \leq \tau(k)} \sum_{j \leq \tau(k)} \frac{\xi_{[i]}\xi_{[j]}}{\overline{\xi}_n^2}\bI(R^{\xi}(X_{[i]}) < kx, R^{\xi}(X_{[j]}) < kx) \mathrm{d}x \\
    &= \frac{3}{k^3}  \sum_{i \leq \tau(k)} \sum_{j \leq \tau(k)} \frac{\xi_{[i]}\xi_{[j]}}{\overline{\xi}_n^2}\left(k - \max\left(R^{\xi}(X_{[i]}), R^{\xi}(X_{[j]})\right)\right)_+
\end{align*} 
proving the claimed identity.
\end{proof}

\newpage

\section{Cryptocurrency data analysis -- additional plots}\label{app:datacrypto}

In this section, we have consolidated some of the expository plots from the Cryptocurrency data analysis conducted in \Cref{subsec:datacrypto}. For notation, refer to the text in \Cref{subsec:datacrypto}.

\begin{figure}[h]
    \centering
    \includegraphics[width=0.85\textwidth]{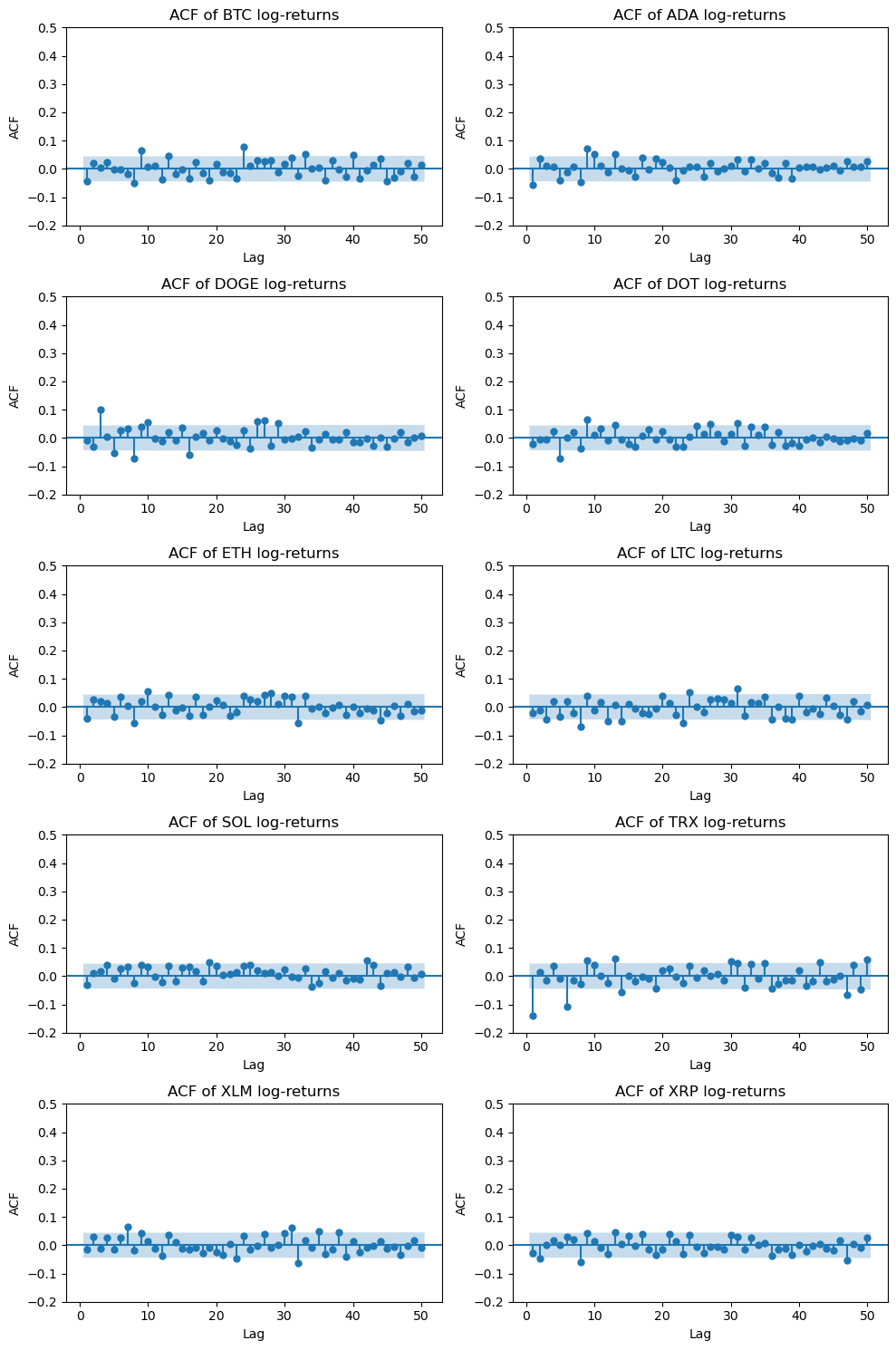} 
    \caption{Cryptocurrency data from \Cref{subsec:datacrypto}: 
        autocorrelation plots for the log-returns of closing prices from 1-1-2021 to 31-12-2025 for the ten major cryptocurrencies considered, showing the sample autocorrelation up to lag 50. The absence of significant autocorrelation suggests that the log-returns series can be reasonably modeled as independent and identically distributed (i.i.d.).
    }
    \label{fig:acf_all}
\end{figure}

Next, we have nine separate graphs in each of Figures \ref{fig:close_pos_boot}-\ref{fig:etaYX_boot_neg}. Each graph uses a bivariate data set $(X,Y^C)$ or $(X^{-},Y^{-C})$; therefore, one of the variables is $X$ (or, $X^-$) representing the log-return (or, negative log-return, respectively) of the closing price of Bitcoin; the other variable is $Y^C$ (or $Y^{-C}$) representing one of the other nine cryptocurrencies where we change $\text{C}\in \{\text{ADA, DOGE, DOT, ETH, LTC, SOL, TRX, XLM, XRP}\}$ to get the nine different graphs. The graphs give 95\% confidence bound around the estimators $\Delta_{k,n}(X,Y^C), \Delta_{k,n}(X^-,Y^{-C}),\eta_{k,n}(X|Y^C), \eta_{k,n}(X^-|Y^{-C}), \eta_{k,n}(Y^C|C)$, and\newline $\eta_{k,n}(Y^{-C}|X^{-C})$ plotted for $k\in \mathcal{K}=\{100,110,\ldots,500\}$.

\begin{figure}[b]
    \centering
    \includegraphics[width=\textwidth]{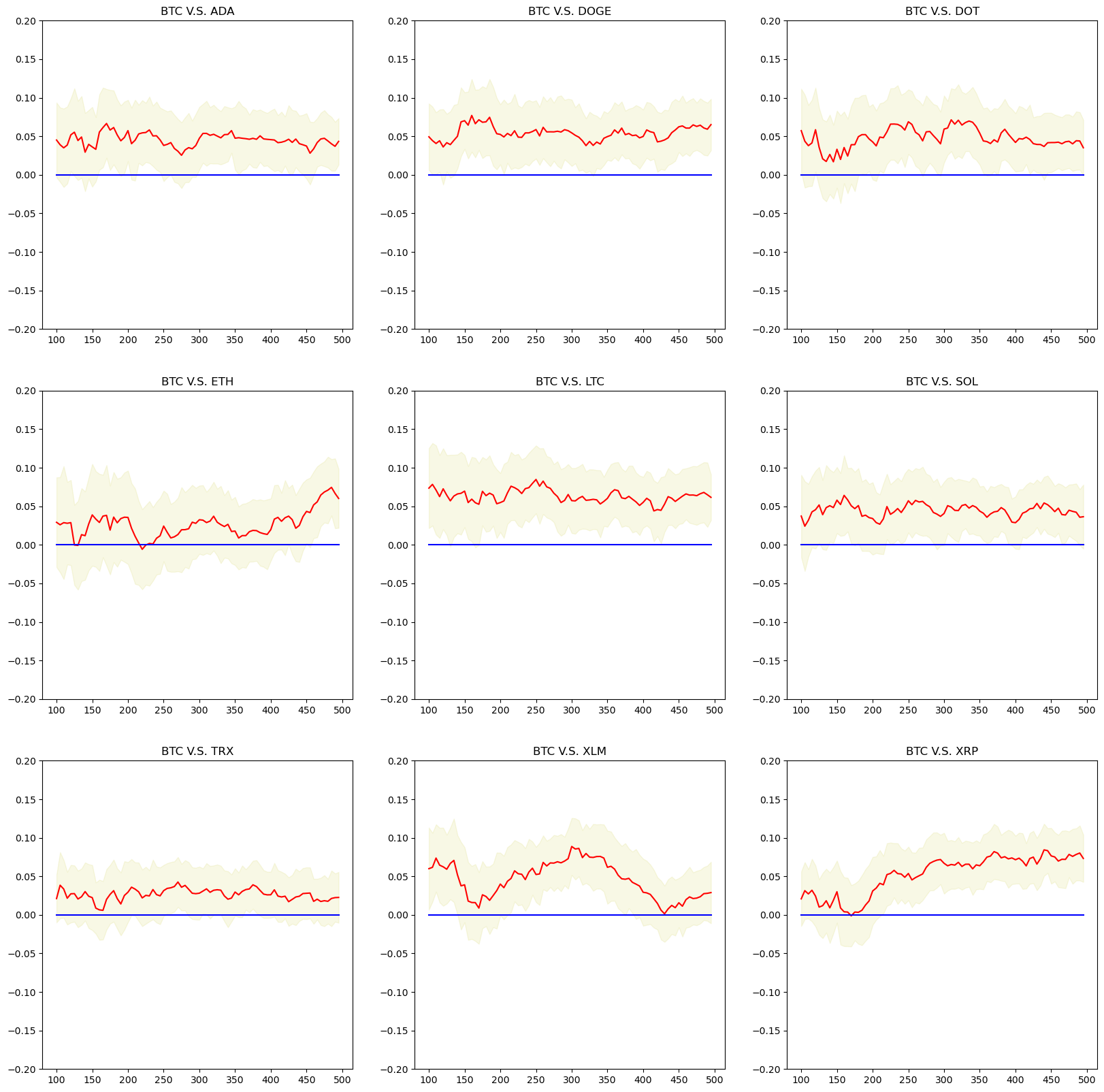} 
    \caption{ 
        Cryptocurrency data from \Cref{subsec:datacrypto}:  The red trajectory is the plot of the estimate $\Delta_{k,n}(X,Y^C)$  for the tail asymmetry measure  $\Delta(X,Y^C)$ of \textbf{positive} log-returns with different choices of $k\in \mathcal{K}$. The yellow band represents a 95\% confidence interval around $\Delta_{k,n}(X,Y^C)$ constructed using $B=100$ multiplier bootstrap samples. The blue horizontal line indicates the zero baseline corresponding to the null hypothesis of $\mathcal{H}_0:\Delta(X,Y^C)=0$. }
    \label{fig:close_pos_boot}
\end{figure}

\begin{figure}[t]
    \centering
    \includegraphics[width=\textwidth]{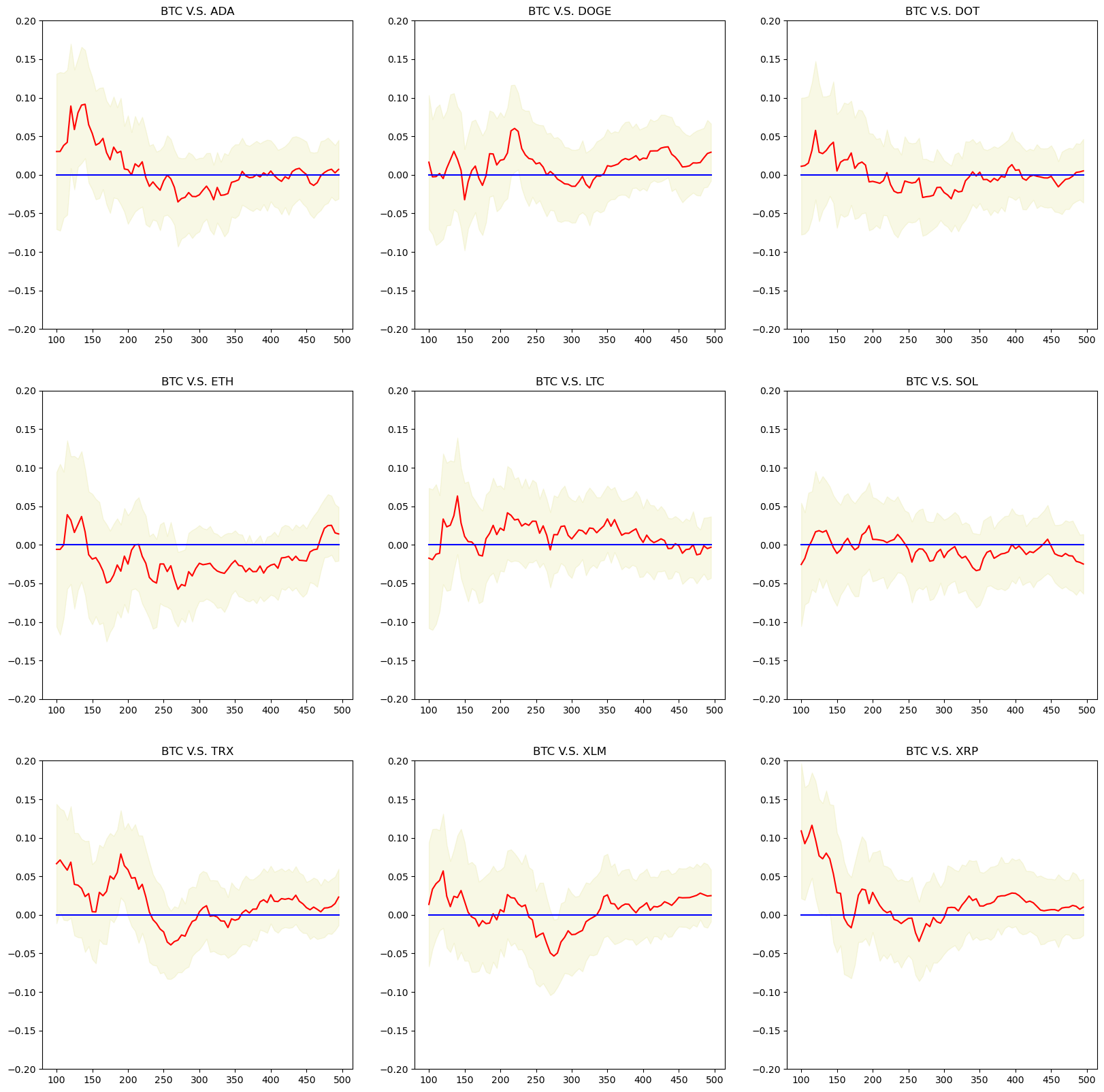} 
    \caption{ 
        Cryptocurrency data from \Cref{subsec:datacrypto}:  The red trajectory is the plot of the estimate $\Delta_{k,n}(X^-,Y^{-C})$  for the tail asymmetry measure  $\Delta(X^-,Y^{-C})$ of \textbf{negative} log-returns with different choices of $k\in \mathcal{K}$. The yellow band represents a 95\% confidence interval around $\Delta_{k,n}(X^-,Y^{-C})$ constructed using $B=100$ multiplier bootstrap samples. The blue horizontal line indicates the zero baseline corresponding to the null hypothesis of $\mathcal{H}_0:\Delta(X^-,Y^{-C})=0$. }
    \label{fig:close_neg_boot}
\end{figure}

\begin{figure}[t]
    \centering
    \includegraphics[width=\textwidth]{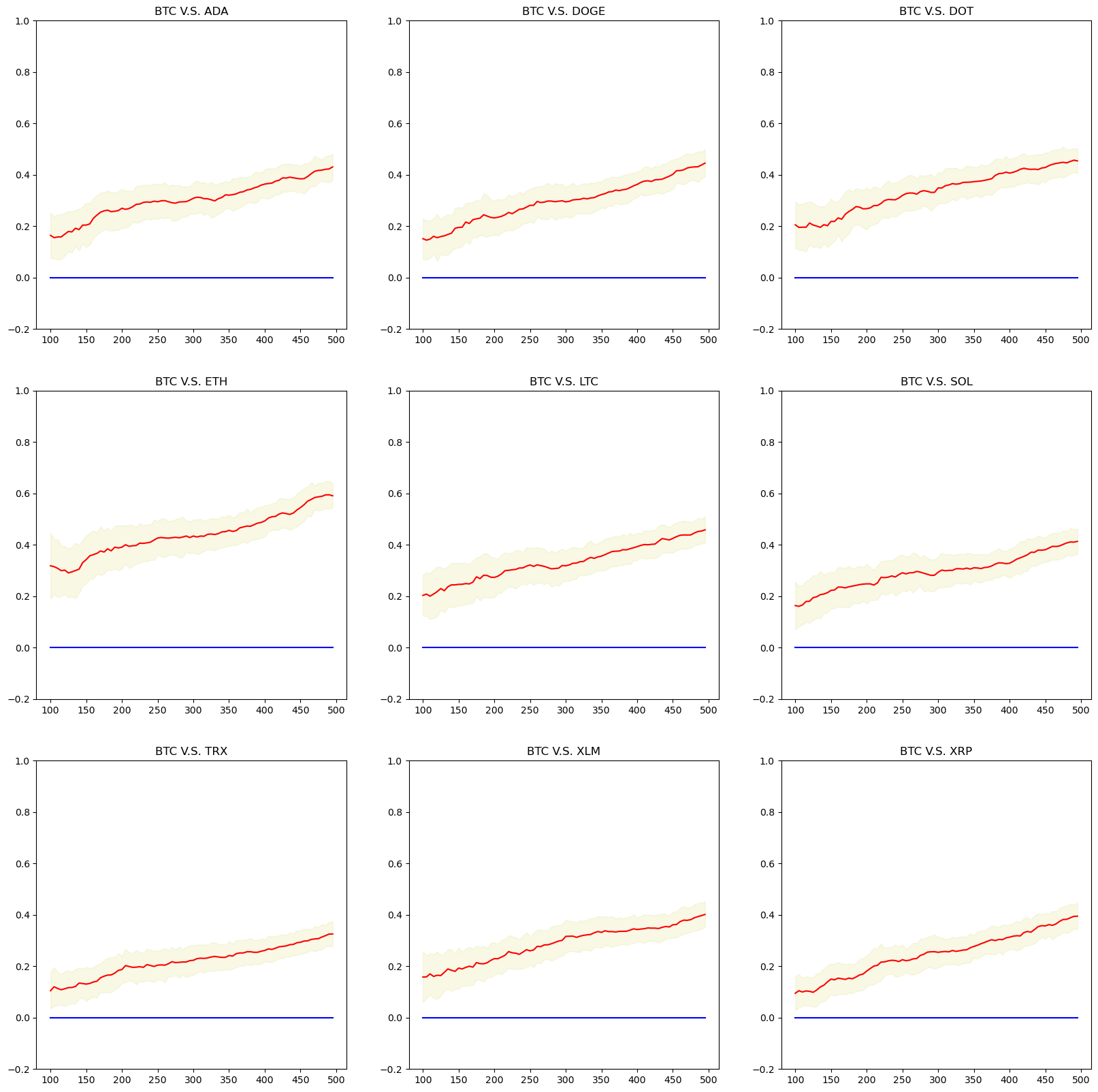} 
    \caption{ 
     Cryptocurrency data from \Cref{subsec:datacrypto}:  The red trajectory is the plot of the estimate $\eta_{k,n}(X|Y^{C})$  for the ETA measure  $\eta(X|Y^{C})$ of \textbf{positive} log-returns with different choices of $k\in \mathcal{K}$. The yellow band represents a 95\% confidence interval around $\eta_{k,n}(X|Y^{C})$ constructed using $B=100$ multiplier bootstrap samples. The blue horizontal line indicates the zero baseline corresponding to the null hypothesis of $\mathcal{H}_0:\eta(X|Y^{C})=0$. }
    \label{fig:etaXY_boot_pos}
\end{figure}

\begin{figure}[t]
    \centering
    \includegraphics[width=\textwidth]{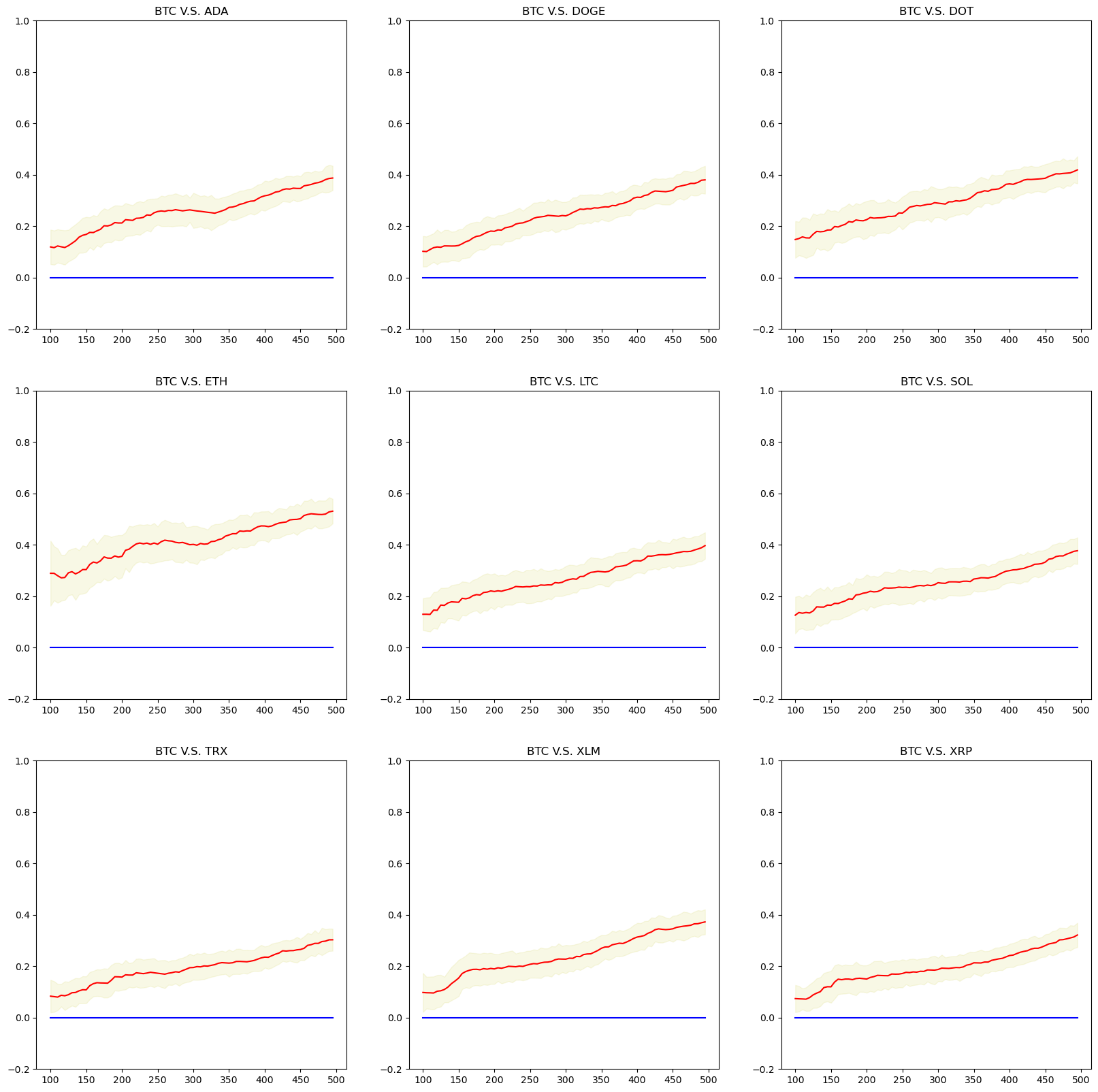} 
\caption{ 
     Cryptocurrency data from \Cref{subsec:datacrypto}:  The red trajectory is the plot of the estimate $\eta_{k,n}(Y^{C}|X)$  for the ETA measure  $\eta(Y^{C}|X)$ of \textbf{positive} log-returns with different choices of $k\in \mathcal{K}$. The yellow band represents a 95\% confidence interval around $\eta_{k,n}(Y^{C}|X)$ constructed using $B=100$ multiplier bootstrap samples. The blue horizontal line indicates the zero baseline corresponding to the null hypothesis of $\mathcal{H}_0:\eta(Y^{C}|X)=0$. }
    \label{fig:etaYX_boot_pos}
\end{figure}

\begin{figure}[t]
    \centering
    \includegraphics[width=\textwidth]{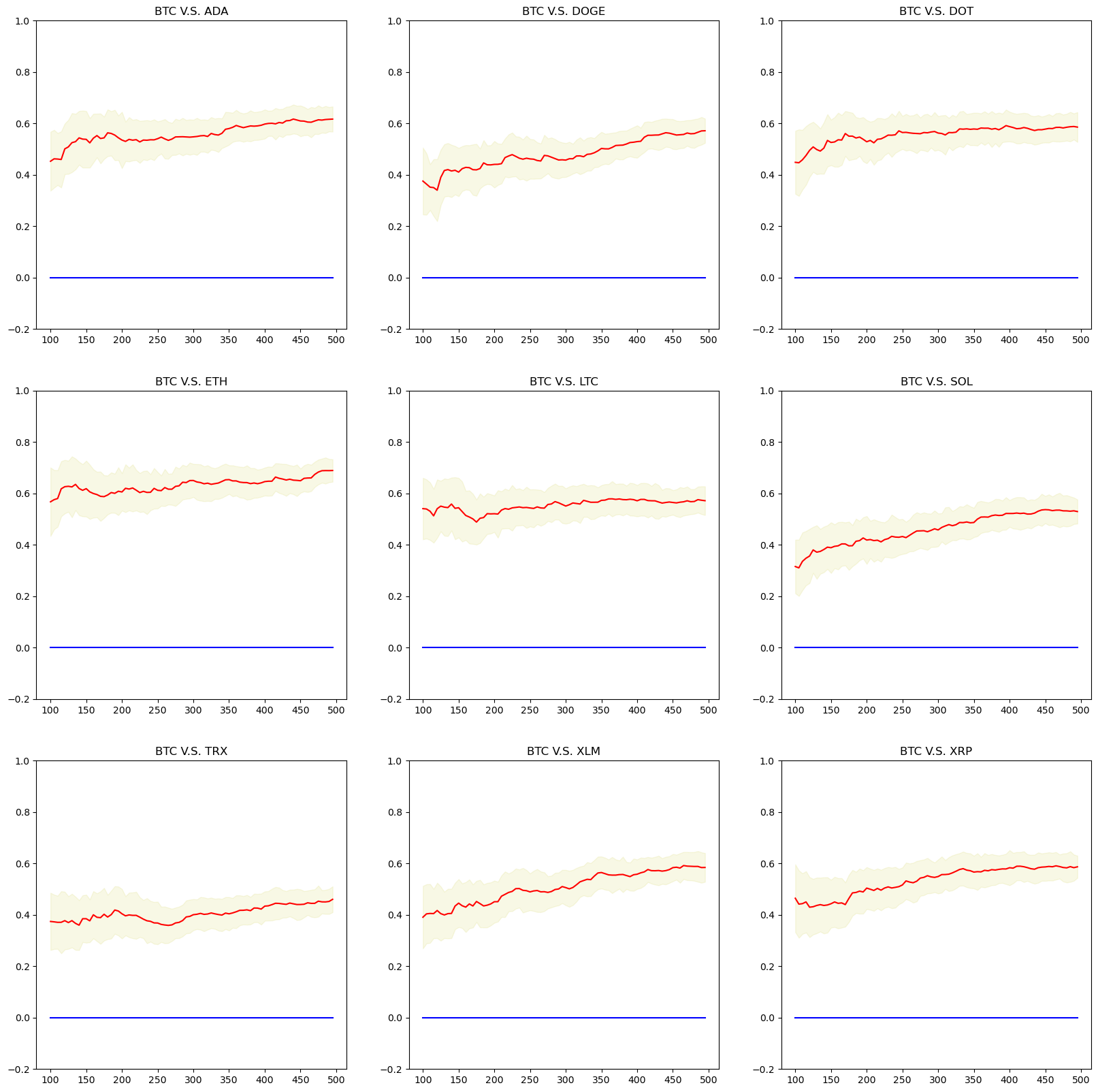} 
    \caption{ 
     Cryptocurrency data from \Cref{subsec:datacrypto}:  The red trajectory is the plot of the estimate $\eta_{k,n}(X^-|Y^{-C})$  for the ETA measure  $\eta(X^-|Y^{-C})$ of \textbf{negative} log-returns with different choices of $k\in \mathcal{K}$. The yellow band represents a 95\% confidence interval around $\eta_{k,n}(X^-|Y^{-C})$ constructed using $B=100$ multiplier bootstrap samples. The blue horizontal line indicates the zero baseline corresponding to the null hypothesis of $\mathcal{H}_0:\eta(X^-|Y^{-C})=0$. }
    \label{fig:etaXY_boot_neg}
\end{figure}

\begin{figure}[t]
    \centering
    \includegraphics[width=\textwidth]{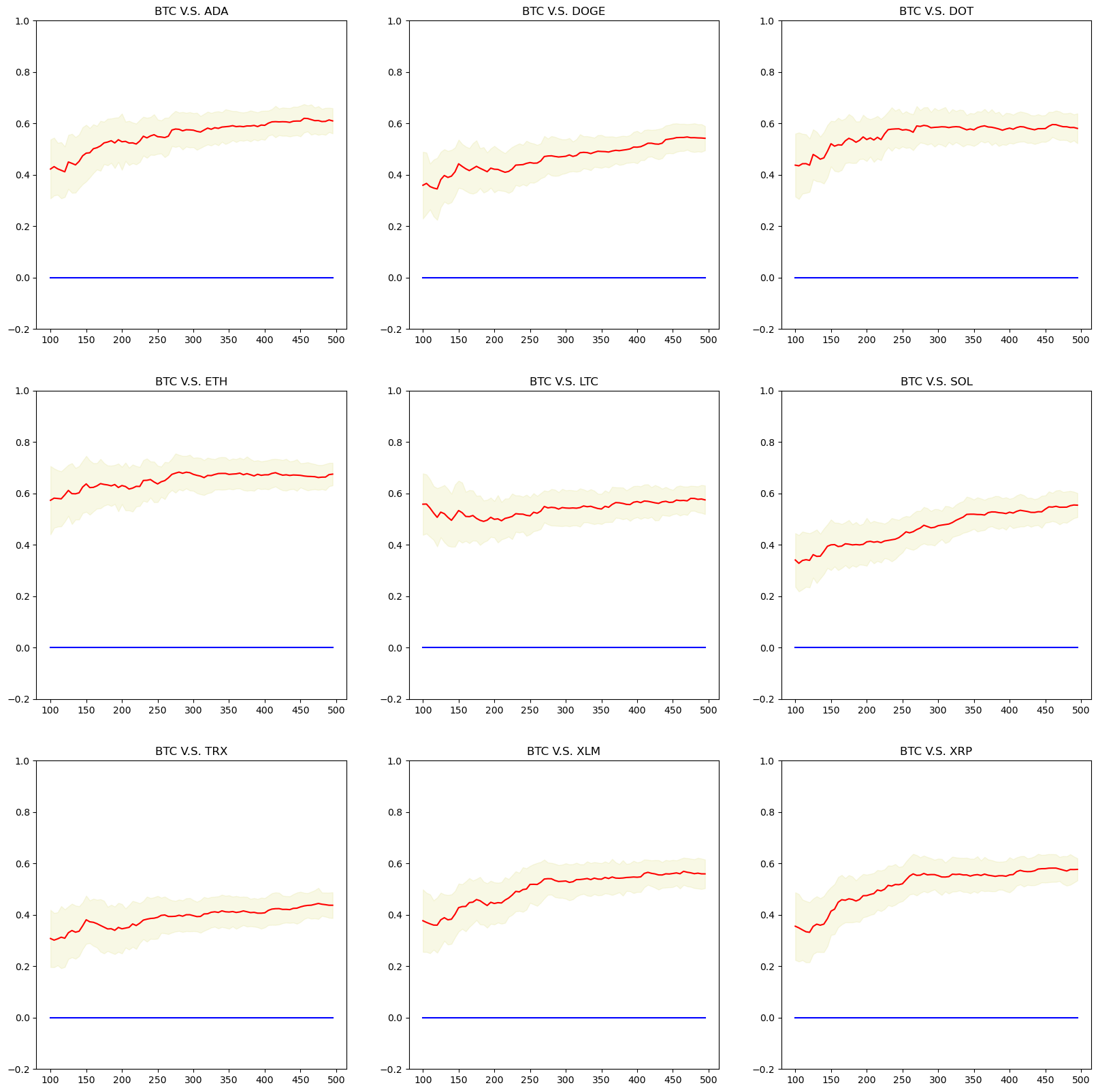} 
\caption{ 
     Cryptocurrency data from \Cref{subsec:datacrypto}:  The red trajectory is the plot of the estimate $\eta_{k,n}(Y^{-C}|X^-)$  for the ETA measure  $\eta(Y^{-C}|X^-)$ of \textbf{negative} log-returns with different choices of $k\in \mathcal{K}$. The yellow band represents a 95\% confidence interval around $\eta_{k,n}(Y^{-C}|X^-)$ constructed using $B=100$ multiplier bootstrap samples. The blue horizontal line indicates the zero baseline corresponding to the null hypothesis of $\mathcal{H}_0:\eta(Y^{-C}|X^-)=0$. }
    \label{fig:etaYX_boot_neg}
\end{figure}

\end{document}